\definecolor{monblu}{cmyk}{0.80,0.00,0.00,0.70}
\definecolor{monmag}{cmyk}{0.00,1.00,0.00,0.00}
\newenvironment{proofclaim}{{\em Proof:}\/}{}
\def\N{\mathbb{N}}
\def\Axiom{\mathbf{A}}
\def\Rule{\mathbf{R}}
\def\IEL{\mathbf{IEL}}
\def\Fo{\mathbf{Fo}}
\def\Distrib{\mathbf{D}}
\def\ArtemovProtopopescu{\mathbf{AP}}
\def\FischerServi{\mathbf{FS}}
\def\Wijesekera{\mathbf{W}}
\def\doxastic{\mathbf{dox}}
\def\updown{\mathbf{ud}}
\def\refsym{\mathbf{rs}}
\def\transitive{\mathbf{tra}}
\def\partition{\mathbf{par}}
\def\all{\mathbf{all}}
\def\prestandard{\mathbf{pre}}
\def\standard{\mathbf{sta}}
\def\epistemic{\mathbf{epi}}
\def\At{\mathbf{At}}
\def\Ag{\mathbf{Ag}}
\def\IS{\mathbf{IS}}
\def\IK{\mathbf{IK}}
\def\S{\mathbf{S}}
\def\Log{\mathtt{Log}}
\def\IPL{\mathbf{IPL}}
\def\L{\mathbf{L}}
\def\Sf{\mathtt{sf}}
\begin{document}
\title{Intuitionistic modal logics:
\\
epistemic reasoning with distributed knowledge}
\author{Philippe Balbiani$^{\star\star}$}
\institute{
Toulouse Institute of Computer Science Research\footnote{Postal address: Institut de recherche en informatique de Toulouse, 118 route de Narbonne, 31062 Toulouse Cedex 9, France.
Email address: philippe.balbiani@irit.fr.}
\\
CNRS-INPT-UT, Toulouse, France
}
\maketitle
{\bf Keywords:} intuitionistic epistemic logics, distributed knowledge.
\section{Introduction}
In their intuitionistic epistemic logics, Artemov and Protopopescu~\cite{Artemov:2018,Artemov:Protopopescu:2016,Protopopescu:2015} propose a study of knowledge from an intuitionistic point of view.
Considering that knowledge is the product of verifications, they read formulas of the form ${\square}A$ as ``it is verified that $A$ holds intuitionistically''.
According to them, the reflection principle ${\square}p{\rightarrow}p$ should be rejected whereas two formulas characterize the properties of knowledge: $p{\rightarrow}{\square}p$ and ${\square}p{\rightarrow}{\neg}{\neg}p$.
In their article, among other things, Artemov and Protopopescu prove the completeness of their intuitionistic epistemic logics with respect to their appropriate relational semantics.
%
%
\\
\\
Modal logic is the study of two modal operators: a box $\square$ and a diamond $\lozenge$.
The above-men\-tioned articles being only about the box, it is natural to ask whether a diamond can be added to the intuitionistic epistemic logics put forward by Artemov and Protopopescu.\footnote{Here, the reader should remind that within the intuitionistic context, the box and the diamond are not interdefinable.}
In the relational semantics of intuitionistic modal logics, there are disagreements between the different interpretations of diamond~\cite{FischerServi:1984,Prenosil:2014,Wijesekera:1990}.
In this article, adopting a diamond \`a la P\v{r}enosil~\cite{Prenosil:2014}, we firstly consider multi-agent versions with distributed knowledge of the intuitionistic epistemic logics introduced in~\cite{Artemov:Protopopescu:2016}.
This leads us to accept for all groups $\alpha$ of agents, the formulas $p{\rightarrow}{\lbrack\alpha\rbrack}p$ and ${\lbrack\alpha\rbrack}p{\rightarrow}{\neg}{\neg}{\langle\alpha\rangle}p$ and leads us to the intuitionistic modal logics $\L_{\doxastic}$, $\L_{\epistemic}$, $\L_{\doxastic}^{\Distrib}$ and $\L_{\epistemic}^{\Distrib}$ axiomatically presented in Section~\ref{section:axiomatization:completeness}.
%
%
\\
\\
%
%
In this article, forgetting about the formulas $p{\rightarrow}{\square}p$ and ${\square}p{\rightarrow}{\neg}{\neg}p$ put forward by Artemov and Protopopescu, we also consider intuitionistic versions of the classical epistemic logics with distributed knowledge studied in~\cite{Halpern:Moses:1992,Wang:Agotnes:2020}.
This leads us to accept for all groups $\alpha$ of agents, the formulas ${\lbrack\alpha\rbrack}p{\rightarrow}p$, $p{\rightarrow}{\langle\alpha\rangle}p$, $p{\rightarrow}{\lbrack\alpha\rbrack}{\langle\alpha\rangle}p$ and ${\langle\alpha\rangle}{\lbrack\alpha\rbrack}p{\rightarrow}p$ and leads us to the intuitionistic modal logics $\L_{\partition}$ and $\L_{\partition}^{\Distrib}$ axiomatically presented in Section~\ref{section:axiomatization:completeness}.
%
%
\\
\\
Our main results are the proof of the completeness of $\L_{\doxastic}$, $\L_{\epistemic}$, $\L_{\partition}$, $\L_{\doxastic}^{\Distrib}$, $\L_{\epistemic}^{\Distrib}$ and $\L_{\partition}^{\Distrib}$ with respect to their appropriate relational semantics.
\section{Syntax}
\paragraph{Alphabet}
Let $\At$ be a countably infinite set (with typical members called {\em atoms}\/ and denoted $p$, $q$, etc).
Let $\Ag$ be a finite set (with typical members called {\em agents}\/ and denoted $a$, $b$, etc).\footnote{Our main results would still hold if $\Ag$ were infinite.
zzzzz}
Let ${\wp^{\star}}(\Ag)$ be the set (with typical members called {\em groups}\/ and denoted $\alpha$, $\beta$, etc) of all nonempty subsets of $\Ag$.\footnote{If $\Ag$ were infinite then we should decide whether groups could be arbitrary nonempty subsets of $\Ag$~---~in which case there would exists uncountably many groups~---~or groups should be finite nonempty subsets of $\Ag$~---~in which case there would exists countably many groups.
zzzzz}
%
%
%
%
\paragraph{$\ArtemovProtopopescu$-formulas}
Let $\Fo$ be the countably infinite set (with typical members called {\em $\ArtemovProtopopescu$-formulas,} or {\em formulas}\/ and denoted $A$, $B$, etc) of finite words over $\At{\cup}\{{\rightarrow},{\top},{\bot},{\vee},{\wedge}\}
$\linebreak$
{\cup}\{\lbrack\alpha\rbrack:\ \alpha{\in}\wp^{\star}(\Ag)\}{\cup}\{\langle\alpha\rangle:\ \alpha{\in}\wp^{\star}(\Ag)\}{\cup}\{(,)\}$ defined by\footnote{$\ArtemovProtopopescu$-formulas will be the inhabitants of the different intuitionistic modal logics that we will introduce in Section~\ref{section:axiomatization:completeness}.
Within the classical context, for all groups $\alpha$, $\lbrack\alpha\rbrack$ would be chosen as a primitive and for all $\ArtemovProtopopescu$-formulas $A$, ${\langle\alpha\rangle}A$ would be the abbreviation of $({\lbrack\alpha\rbrack}(A{\rightarrow}{\bot}){\rightarrow}{\bot})$, or $\langle\alpha\rangle$ would be chosen as a primitive and for all $\ArtemovProtopopescu$-formulas $A$, ${\lbrack\alpha\rbrack}A$ would be the abbreviation of $({\langle\alpha\rangle}(A{\rightarrow}{\bot}){\rightarrow}{\bot})$.
Within our intuitionistic context, for all groups $\alpha$, $\lbrack\alpha\rbrack$ and $\langle\alpha\rangle$ are not interdefinable and both are primitive.}
$$A\ {::=}\ p{\mid}(A{\rightarrow}A){\mid}{\top}{\mid}{\bot}{\mid}(A{\vee}A){\mid}(A{\wedge}A){\mid}{\lbrack\alpha\rbrack}A{\mid}{\langle\alpha\rangle}A$$
where $p$ ranges over $\At$ and $\alpha$ ranges over ${\wp^{\star}}(\Ag)$.\footnote{For all groups $\alpha$ and for all $\ArtemovProtopopescu$-formulas $A$, ${\lbrack\alpha\rbrack}A$ and ${\langle\alpha\rangle}A$ are respectively read ``$A$ is consequence of $\alpha$'s distributed knowledge'' and ``$A$ is compatible with $\alpha$'s distributed knowledge''.}
We follow the standard rules for omission of the parentheses.
For all $\ArtemovProtopopescu$-formulas $A,B$, we write ${\neg}A$ as the abbreviation of $A{\rightarrow}{\bot}$ and $A{\leftrightarrow}B$ as the abbreviation of $(A{\rightarrow}B){\wedge}(B{\rightarrow}A)$.
For all groups $\alpha$ and for all sets $\Gamma,\Delta$ of $\ArtemovProtopopescu$-formulas, let $\lbrack\alpha\rbrack\Gamma{=}\{A{\in}\Fo:\ \lbrack\alpha\rbrack A{\in}\Gamma\}$ and $\langle\alpha\rangle\Delta{=}\{\langle\alpha\rangle A{\in}
$\linebreak$
\Fo:\ A{\in}\Delta\}$.
\paragraph{Diamond-free $\ArtemovProtopopescu$-formulas}
An $\ArtemovProtopopescu$-formula $A$ is {\em diamond-free}\/ if for all groups $\alpha,\beta,
$\linebreak$
\gamma$,
\begin{itemize}
\item if the modal operators $\lbrack\alpha\rbrack$ and $\lbrack\beta\rbrack$ occur in $A$ then $\alpha{=}\beta$,
\item the modal operator $\langle\gamma\rangle$ does not occur in $A$.
\end{itemize}
Let $\Fo^{-}$ be the set of all diamond-free $\ArtemovProtopopescu$-formulas.
Let $\Sf:\ \Fo^{-}{\longrightarrow}\wp(\Fo^{-})$ be the function inductively defined as follows:
\begin{itemize}
\item $\Sf(p){=}\{p\}$,
\item $\Sf(A{\rightarrow}B){=}\{A{\rightarrow}B\}{\cup}\Sf(A){\cup}\Sf(B)$,
\item $\Sf({\top}){=}\{{\top}\}$,
\item $\Sf({\bot}){=}\{{\bot}\}$,
\item $\Sf(A{\vee}B){=}\{A{\vee}B\}{\cup}\Sf(A){\cup}\Sf(B)$,
\item $\Sf(A{\wedge}B){=}\{A{\wedge}B\}{\cup}\Sf(A){\cup}\Sf(B)$,
\item $\Sf({\lbrack\alpha\rbrack}A){=}\{{\lbrack\alpha\rbrack}A\}{\cup}\Sf(A)$.
\end{itemize}
\paragraph{$\square$-formulas}
Let $\Fo^{\square}$ be the countably infinite set (with typical members called {\em $\square$-formulas}\/ and denoted $A$, $B$, etc) of finite words over $\At{\cup}\{{\rightarrow},{\top},{\bot},{\vee},{\wedge},\square,(,)\}$ defined by\footnote{$\square$-formulas are the inhabitants of the intuitionistic epistemic logics $\IEL^{-}$ and $\IEL$ introduced by Artemov and Protopopescu~\cite{Artemov:2018,Artemov:Protopopescu:2016,Protopopescu:2015}.
Notice that Artemov and Protopopescu do not include the diamond in their language.}
$$A\ {::=}\ p{\mid}(A{\rightarrow}A){\mid}{\top}{\mid}{\bot}{\mid}(A{\vee}A){\mid}(A{\wedge}A){\mid}{\square}A$$
where $p$ ranges over $\At$.
We follow the standard rules for omission of the parentheses.
For all $\square$-formulas $A,B$, we write ${\neg}A$ as the abbreviation of $A{\rightarrow}{\bot}$ and $A{\leftrightarrow}B$ as the abbreviation of $(A{\rightarrow}B){\wedge}(B{\rightarrow}A)$.
Let $\tau:\ \Fo^{-}{\longrightarrow}\Fo^{\square}$ be the function inductively defined as follows:
\begin{itemize}
\item $\tau(p){=}p$,
\item $\tau(A{\rightarrow}B){=}\tau(A){\rightarrow}\tau(B)$,
\item $\tau({\top}){=}{\top}$,
\item $\tau({\bot}){=}{\bot}$,
\item $\tau(A{\vee}B){=}\tau(A){\vee}\tau(B)$,
\item $\tau(A{\wedge}B){=}\tau(A){\wedge}\tau(B)$,
\item $\tau({\lbrack\alpha\rbrack}A){=}{\square}\tau(A)$.
\end{itemize}
\section{Relational semantics}\label{section:relational:semantics}
\paragraph{Frames}
A {\em frame}\/ is a relational structure of the form $(W,{\leq},{R})$ where $W$ is a nonempty set (with typical members called {\em states}\/ and denoted $s$, $t$, etc), $\leq$ is a preorder on $W$ and ${R}\ :\ {\wp^{\star}}(\Ag){\longrightarrow}\wp(W{\times}W)$ is a function.
Let ${\mathcal C}_{\all}$ be the class of all frames.
\paragraph{Doxastic frames and epistemic frames}
A frame $(W,{\leq},{R})$ is {\em doxastic}\/ if for all groups $\alpha$ and for all $s,t{\in}W$, if $s{R(\alpha)}t$ then $s{\leq}t$.
A doxastic frame $(W,{\leq},{R})$ is {\em epistemic}\/ if for all groups $\alpha$ and for all $s{\in}W$, there exists $t{\in}W$ such that $s{\leq}{\circ}{R(\alpha)}t$.
Let ${\mathcal C}_{\doxastic}$ and ${\mathcal C}_{\epistemic}$ be respectively the class of all doxastic frames and the class of all epistemic frames.\footnote{Our main motivation for the introduction of doxastic frames and epistemic frames is coming from the articles of Artemov and Protopopescu~\cite{Artemov:2018,Artemov:Protopopescu:2016,Protopopescu:2015}.
With their intuitionistic epistemic logics $\IEL^{-}$ and $\IEL$, Artemov and Protopopescu propose a study of knowledge from an intuitionistic point of view.
Considering that knowledge is the product of verifications, they read $\square$-formulas of the form ${\square}A$ as ``it is verified that $A$ holds intuitionistically''.
According to them, the reflection principle ${\square}p{\rightarrow}p$ should be rejected whereas two $\square$-formulas characterize the properties of knowledge: $p{\rightarrow}{\square}p$ and ${\square}p{\rightarrow}{\neg}{\neg}p$.
In their articles, Artemov and Protopopescu prove $(\ast)$~the completeness of $\IEL^{-}$ with respect to the relational semantics determined by the class of all relational structures of the form $(W,{\leq},{R})$~---~called {\em $\IEL^{-}$-structures}\/~---~where $W$ is a nonempty set, $\leq$ is a preorder on $W$ and $R$ is a binary relation on $W$ such that $\mathbf{(i)}$~for all $s,t{\in}W$, if $s{R}t$ then $s{\leq}t$ and $\mathbf{(ii)}$~for all $s,t{\in}W$, if $s{\leq}{\circ}{R}t$ then $s{R}t$ and $(\ast\ast)$~the completeness of $\IEL$ with respect to the relational semantics determined by the class of all $\IEL^{-}$-structures $(W,{\leq},{R})$~---~called {\em $\IEL$-structures}\/~---~where in addition, $\mathbf{(iii)}$~for all $s{\in}W$, there exists $t{\in}W$ such that $s{R}t$.
When a relational structure $(W,{\leq},{R})$ of that form is equipped with an intuitionistic valuation $V\ :\ \At{\longrightarrow}\wp(W)$, Artemov and Protopopescu inductively define the satisfiability of $\square$-formulas as follows:
\begin{itemize}
\item $s{\models}p$ if and only if $s{\in}V(p)$,
\item $s{\models}A{\rightarrow}B$ if and only if for all $t{\in}W$, if $s{\leq}t$ then $t{\not\models}A$, or $t{\models}B$,
\item $s{\models}{\top}$,
\item $s{\not\models}{\bot}$,
\item $s{\models}A{\vee}B$ if and only if $s{\models}A$, or $s{\models}B$,
\item $s{\models}A{\wedge}B$ if and only if $s{\models}A$ and $s{\models}B$,
\item $s{\models}{\square}A$ if and only if for all $t{\in}W$, if $s{R}t$ then $t{\models}A$.
\end{itemize}
See next paragraphs for details about valuations and satisfiability.}
\paragraph{Reflexive frames, symmetric frames, transitive frames and partitions}
A frame $(W,{\leq},
$\linebreak$
{R})$ is {\em reflexive}\/ if for all groups $\alpha$ and for all $s{\in}W$, $s{R(\alpha)}s$.
A frame $(W,{\leq},{R})$ is {\em symmetric}\/ if for all groups $\alpha$ and for all $s,t{\in}W$, if $s{R(\alpha)}t$ then $t{R(\alpha)}s$.
A frame $(W,{\leq},{R})$ is {\em transitive}\/ if for all groups $\alpha$ and for all $s,t,u{\in}W$, if $s{R(\alpha)}t$ and $t{R(\alpha)}u$ then $s{R(\alpha)}u$.
Let ${\mathcal C}_{\refsym}$ and ${\mathcal C}_{\transitive}$ be respectively the class of all reflexive and symmetric frames and the class of all transitive frames.
A reflexive, symmetric and transitive frame is called a {\em partition.}
Let ${\mathcal C}_{\partition}$ be the class of all partitions.\footnote{Our main motivation for the introduction of reflexive frames, symmetric frames, transitive frames and partitions is coming from classical epistemic logics where knowledge is studied from a classical point of view~---~see~\cite{vanDitmarsch:et:al:2008,Fagin:et:al:1995,Meyer:VanDerHoeke:1995}.
In this setting, $\square$-formulas of the form ${\square}A$ are read as ``$A$ holds in every situation which is indiscernible from the current situation''.
As a result, the reflection principle ${\square}p{\rightarrow}p$ should be accepted as well as two $\square$-formulas characterizing the introspective properties of knowledge: ${\square}p{\rightarrow}{\square}{\square}p$ and ${\neg}{\square}p{\rightarrow}{\square}{\neg}{\square}p$.
In the above-mentioned literature, the completeness of classical epistemic logics is proved with respect to the relational semantics determined by relational structures of the form $(W,{R})$ where $W$ is a nonempty set and $R$ is an equivalence relation on $W$.
When a relational structure $(W,{R})$ of that form is equipped with a classical valuation $V\ :\ \At{\longrightarrow}\wp(W)$, the satisfiability of $\square$-formulas is inductively defined as follows:
\begin{itemize}
\item $s{\models}p$ if and only if $s{\in}V(p)$,
\item $s{\models}A{\rightarrow}B$ if and only if $s{\not\models}A$, or $s{\models}B$,
\item $s{\models}{\top}$,
\item $s{\not\models}{\bot}$,
\item $s{\models}A{\vee}B$ if and only if $s{\models}A$, or $s{\models}B$,
\item $s{\models}A{\wedge}B$ if and only if $s{\models}A$ and $s{\models}B$,
\item $s{\models}{\square}A$ if and only if for all $t{\in}W$, if $s{R}t$ then $t{\models}A$.
\end{itemize}
See next paragraphs for details about valuations and satisfiability.}
\paragraph{Up and down reflexive frames and up and down symmetric frames}
A frame $(W,{\leq},{R})$ is {\em up and down reflexive}\/ if for all groups $\alpha$ and for all $s{\in}W$, $s{{\leq}{\circ}{R(\alpha)}{\circ}{\leq}}s$ and $s{{\geq}{\circ}{R(\alpha)}{\circ}}
$\linebreak$
{{\geq}}s$.
A frame $(W,{\leq},{R})$ is {\em up and down symmetric}\/ if for all groups $\alpha$ and for all $s,t{\in}W$, if $s{R(\alpha)}t$ then $t{{\leq}{\circ}{R(\alpha)}{\circ}{\leq}}s$ and $t{{\geq}{\circ}{R(\alpha)}{\circ}{\geq}}s$.
Let ${\mathcal C}_{\updown}$ be the class of all up and down reflexive and up and down symmetric frames.\footnote{Obviously, every reflexive and symmetric frame is up and down reflexive and up and down symmetric.
Moreover, for all up and down reflexive and up and down symmetric frames $(W,{\leq},{R})$, the frame $(W^{\prime},{\leq^{\prime}},{R^{\prime}})$ where $W^{\prime}{=}W$, ${\leq^{\prime}}{=}{\leq}$ and ${R^{\prime}}{=}({\leq}{\circ}{R}{\circ}{\leq}){\cap}({\geq}{\circ}{R}{\circ}{\geq})$ is reflexive and symmetric.
This is our main motivation for the introduction of up and down reflexive and up and down symmetric frames.}
\paragraph{Prestandard frames and standard frames}
A frame $(W,{\leq},{R})$ is {\em prestandard}\/ if for all groups $\alpha,\beta$, ${R(\alpha{\cup}\beta)}{\subseteq}{R(\alpha)}{\cap}{R(\beta)}$.
A prestandard frame $(W,{\leq},{R})$ is {\em standard}\/ if for all groups $\alpha,\beta$, ${R(\alpha{\cup}\beta)}{=}{R(\alpha)}{\cap}{R(\beta)}$.
For all classes ${\mathcal C}$ of frames, let ${\mathcal C}^{\prestandard}$ and ${\mathcal C}^{\standard}$ be respectively the class of all prestandard frames in ${\mathcal C}$ and the class of all standard frames in ${\mathcal C}$.\footnote{As in the relational semantics of classical epistemic logics with distributed knowledge, we are introducing a relational semantics of intuitionistic epistemic logics with distributed knowledge where for all groups $\alpha$, the modal operators $\lbrack\alpha\rbrack$ and $\langle\alpha\rangle$ may be interpreted by means of the intersection of all accessibility relations associated to the agents in $\alpha$.
Notice that if $(W,{\leq},{R})$ is prestandard then for all groups $\alpha$, ${R(\alpha)}{\subseteq}{\bigcap}\{{R(\{a\})}:\ a{\in}\alpha\}$ whereas if $(W,{\leq},{R})$ is standard then for all groups $\alpha$, ${R(\alpha)}{=}{\bigcap}\{{R(\{a\})}:\ a{\in}\alpha\}$.}
\paragraph{Valuations and models}
For all frames $(W,{\leq},{R})$, a subset $U$ of $W$ is {\em closed}\/ if for all $s,t{\in}W$, if $s{\in}U$ and $s{\leq}t$ then $t{\in}U$.
An {\em intuitionistic valuation on a frame $(W,{\leq},{R})$,} or a {\em valuation on a frame $(W,{\leq},{R})$}\/ is a function $V\ :\ \At{\longrightarrow}\wp(W)$ such that for all atoms $p$, $V(p)$ is closed.\footnote{This {\em heredity property}\/ is standard in the intuitionistic setting.
It means that for all frames $(W,{\leq},{R})$, for all valuations $V$ on $(W,{\leq},{R})$, for all atoms $p$ and for all $s,t{\in}W$, if $p$ is true at $s$ and $t$ is in the $\leq$-future of $s$, $p$ is also true at $t$, i.e. we do not lose information when we go from $s$ to $t$.}
A {\em model based on the frame $(W,{\leq},{R})$}\/ is a structure of the form $(W,{\leq},{R},V)$ where $V\ :\ \At{\longrightarrow}\wp(W)$ is a valuation on $(W,{\leq},{R})$.
\paragraph{Satisfiability}
With respect to a model $(W,{\leq},{R},V)$, for all $s{\in}W$ and for all formulas $A$, the {\em satisfiability of $A$ at $s$ in $(W,{\leq},{R},V)$}\/ (in symbols $s{\models}A$) is inductively defined as follows:\footnote{In the following truth conditions, the reader will immediately notice that all connectives but $\rightarrow$ have their duals: $\top$ vs $\bot$, $\vee$ vs $\wedge$ and for all groups $\alpha$, $\lbrack\alpha\rbrack$ vs $\langle\alpha\rangle$.
This suggests the future study of intuitionistic epistemic logics with dual implication $\leftarrow$ interpreted as follows in all models $(W,{\leq},{R},V)$ and at all $s{\in}W$: $s{\models}A{\leftarrow}B$ if and only if there exists $t{\in}W$ such that $s{\geq}t$, $t{\not\models}A$ and $t{\models}B$.
Such dual implication has been developed by Rauszer~\cite{Rauszer:1977,Rauszer:1980}.}
\begin{itemize}
\item $s{\models}p$ if and only if $s{\in}V(p)$,
\item $s{\models}A{\rightarrow}B$ if and only if for all $t{\in}W$, if $s{\leq}t$ then $t{\not\models}A$, or $t{\models}B$,
\item $s{\models}{\top}$,
\item $s{\not\models}{\bot}$,
\item $s{\models}A{\vee}B$ if and only if $s{\models}A$, or $s{\models}B$,
\item $s{\models}A{\wedge}B$ if and only if $s{\models}A$ and $s{\models}B$,
\item $s{\models}{\lbrack\alpha\rbrack}A$ if and only if for all $t{\in}W$, if $s{\leq}{\circ}{R(\alpha)}t$ then $t{\models}A$,\footnote{The mono-agent version of this truth condition~---~$s{\models}{\square}A$ if and only if for all $t{\in}W$, if $s{\leq}{\circ}{R}t$ then $t{\models}A$~---~is used in many articles~\cite{FischerServi:1984,Prenosil:2014,Wijesekera:1990}.}
\item $s{\models}{\langle\alpha\rangle}A$ if and only if there exists $t{\in}W$ such that $s{\geq}{\circ}{R(\alpha)}t$ and $t{\models}A$.\footnote{The mono-agent version of this truth condition~---~$s{\models}{\lozenge}A$ if and only if there exists $t{\in}W$ such that $s{\geq}{\circ}{R}t$ and $t{\models}A$~---~is developed by P\v{r}enosil~\cite{Prenosil:2014} and mentioned by Simpson~\cite[Page~$49$]{Simpson:1994}.
See also~\cite{Balbiani:Gencer:APAL:2024}.
This truth condition should be compared to the truth conditions that Fischer Servi~\cite{FischerServi:1984} and Wijesekera~\cite{Wijesekera:1990} would have developed if they were using our syntax: $s{\models_{\FischerServi}}{\langle\alpha\rangle}A$ if and only if there exists $t{\in}W$ such that $s{R(\alpha)}t$ and $t{\models_{\FischerServi}}A$~; $s{\models_{\Wijesekera}}{\langle\alpha\rangle}A$ if and only if for all $t{\in}W$, if $s{\leq}t$ then there exists $u{\in}W$ such that $t{R(\alpha)}u$ and $u{\models_{\Wijesekera}}A$.
The definition of the satisfiability of formulas that Fischer Servi would have considered necessitates to restrict the discussion to the class of all {\em forward confluent frames,} i.e. the class of all frames $(W,{\leq},{R})$ such that for all groups $\alpha$, ${\geq}{\circ}{R(\alpha)}{\subseteq}{R(\alpha)}{\circ}{\geq}$, otherwise the heredity property described in Proposition~\ref{proposition:heredite} would not hold.
The definition of the satisfiability of formulas that Wijesekera would have considered does not necessitate to restrict the discussion to a specific class of frames.
The reader may easily verify that in the class of all forward confluent frames, the definition of the satisfiability of formulas that Fischer Servi would have used, the definition of the satisfiability of formulas that Wijesekera would have used and our definition of the satisfiability of formulas are equivalent.}
\end{itemize}
\begin{proposition}\label{proposition:heredite}
Let $(W,{\leq},{R},V)$ be a model.
For all formulas $A$ and for all $s,t{\in}W$, if $s{\models}A$ and $s{\leq}t$ then $t{\models}A$.\footnote{This {\em heredity property}\/ is of course a consequence of the heredity property imposed on the intuitionistic valuation $V$.
It is standard in the intuitionistic setting.}
\end{proposition}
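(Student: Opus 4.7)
The plan is a standard structural induction on the formula $A$, using (i) the fact that $V(p)$ is closed for every atom $p$, (ii) transitivity of $\leq$, and (iii) the specific shape of the modal truth conditions (compositions with $\leq$ and $\geq$), which is precisely what makes the modal cases go through without any extra frame condition.

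\smallskip

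First I would dispatch the base cases. For $A = p$, heredity holds because $V(p)$ is closed by definition of a valuation; the cases $A = \top$ and $A = \bot$ are immediate. Next, the Boolean inductive cases $A \vee B$ and $A \wedge B$ follow directly from the induction hypothesis applied to the subformulas. For $A = B \rightarrow C$, suppose $s \models B \rightarrow C$ and $s \leq t$. If $t \leq u$, then by transitivity of $\leq$ we have $s \leq u$, so $u \not\models B$ or $u \models C$, which gives $t \models B \rightarrow C$.

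\smallskip

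The two modal cases are where the chosen truth conditions earn their keep. For $A = {\lbrack\alpha\rbrack}B$, assume $s \models {\lbrack\alpha\rbrack}B$ and $s \leq t$, and pick any $u$ with $t {\leq}{\circ}{R(\alpha)}u$; say $t \leq v$ and $v{R(\alpha)}u$. By transitivity of $\leq$, $s \leq v$, so $s{\leq}{\circ}{R(\alpha)}u$, and therefore $u \models B$ by hypothesis. For $A = {\langle\alpha\rangle}B$, assume $s \models {\langle\alpha\rangle}B$ and $s \leq t$, and pick $u$ with $s{\geq}{\circ}{R(\alpha)}u$ and $u \models B$; say $v \leq s$ and $v{R(\alpha)}u$. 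Since $v \leq s \leq t$, transitivity yields $v \leq t$, i.e.\ $t \geq v$, whence $t{\geq}{\circ}{R(\alpha)}u$ and $t \models {\langle\alpha\rangle}B$ by the existing witness $u$.

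\smallskip

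There is no real obstacle: each modal case reduces to a single use of transitivity of $\leq$, because $\lbrack\alpha\rbrack$ is interpreted via $\leq \circ R(\alpha)$ (so enlarging the starting point along $\leq$ only enlarges the set of reachable points) and dually $\langle\alpha\rangle$ via $\geq \circ R(\alpha)$ (so any existing witness for $s$ remains a witness for any $t \geq s$). The closest thing to a subtlety is noticing that, unlike in Fischer Servi's semantics, no forward-confluence assumption on $R$ is needed here; this is precisely the reason P\v{r}enosil's diamond was adopted, and the induction goes through uniformly on the class $\mathcal{C}_{\all}$ of all frames.
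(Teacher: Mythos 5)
Your proof is correct and follows exactly the route the paper intends: the paper's own proof is just the one-line ``By induction on $A$'', and your case analysis (closedness of $V(p)$ for atoms, transitivity of $\leq$ for implication and for the $\leq{\circ}R(\alpha)$ and $\geq{\circ}R(\alpha)$ compositions in the modal clauses) fills in precisely the expected details. Nothing is missing.
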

\begin{proof}
By induction on $A$.
\medskip
\end{proof}
\paragraph{Truth and validity}
A formula $A$ is {\em true in a model $(W,{\leq},{R},V)$}\/ (in symbols $(W,{\leq},{R},
$\linebreak$
V){\models}A$) if for all $s{\in}W$, $s{\models}A$.
A formula $A$ is {\em valid in a frame $(W,{\leq},{R})$}\/ (in symbols $(W,{\leq},{R}){\models}A$) if for all models $(W,{\leq},{R},V)$ based on $(W,{\leq},{R})$, $(W,{\leq},{R},V){\models}A$.
A formula $A$ is {\em valid on a class ${\mathcal C}$ of frames}\/ (in symbols ${\mathcal C}{\models}A$) if for all frames $(W,{\leq},{R})$ in ${\mathcal C}$, $(W,{\leq},{R}){\models}A$.
For all classes ${\mathcal C}$ of frames, let $\Log({\mathcal C}){=}\{A{\in}\Fo\ :\ {\mathcal C}{\models}A\}$ be the {\em logic of ${\mathcal C}$.}
%
%
%
%
\section{Results about validities}
The following propositions will be used to show the soundness~---~with respect to their respective relational semantics~---~of the different intuitionistic modal logics that we will introduce in Section~\ref{section:axiomatization:completeness}.
\begin{proposition}\label{proposition:validity:of:axioms}
For all groups $\alpha,\beta$, the following formulas are valid on any class of frames:\footnote{The mono-agent versions of some of these formulas~---~${\square}p{\wedge}{\square}q{\rightarrow}{\square}(p{\wedge}q)$, ${\lozenge}(p{\vee}q){\rightarrow}{\lozenge}p{\vee}{\lozenge}q$, ${\square}{\top}$ and ${\neg}{\lozenge}{\bot}$~---~have been already considered in the above-mentioned literature about intuitionistic modal logics.
The mono-agent version of the $5$th formula~---~${\square}(p{\vee}q){\rightarrow}(({\lozenge}p{\rightarrow}{\square}q){\rightarrow}{\square}q)$~---~has been firstly considered in~\cite{Balbiani:et:al:2024:CSL,Balbiani:et:al:2024:IJCAR}.}
\begin{itemize}
\item ${\lbrack\alpha\rbrack}p{\wedge}{\lbrack\alpha\rbrack}q{\rightarrow}{\lbrack\alpha\rbrack}(p{\wedge}q)$,
\item ${\langle\alpha\rangle}(p{\vee}q){\rightarrow}{\langle\alpha\rangle}p{\vee}{\langle\alpha\rangle}q$,
\item ${\lbrack\alpha\rbrack}{\top}$,
\item ${\neg}{\langle\alpha\rangle}{\bot}$,
\item ${\lbrack\alpha\rbrack}(p{\vee}q){\rightarrow}(({\langle\alpha\rangle}p{\rightarrow}{\lbrack\alpha\rbrack}q){\rightarrow}{\lbrack\alpha\rbrack}q)$.
\end{itemize}
\end{proposition}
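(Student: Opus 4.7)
The plan is to fix an arbitrary model $(W,\leq,R,V)$ and an arbitrary state $s\in W$, and in each of the five items unfold the satisfaction clauses of Section~\ref{section:relational:semantics} to verify $s\models A$. Throughout I freely use Proposition~\ref{proposition:heredite} and the reflexivity of $\leq$ (equivalently of $\geq$) that comes with $\leq$ being a preorder.

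The first four items are routine. For ${\lbrack\alpha\rbrack}p\wedge{\lbrack\alpha\rbrack}q\rightarrow{\lbrack\alpha\rbrack}(p\wedge q)$, the universal quantifier in the truth clause of $\lbrack\alpha\rbrack$ distributes over the conjunction. For ${\langle\alpha\rangle}(p\vee q)\rightarrow{\langle\alpha\rangle}p\vee{\langle\alpha\rangle}q$, the existential quantifier in the truth clause of $\langle\alpha\rangle$ distributes over the disjunction. The formulas ${\lbrack\alpha\rbrack}{\top}$ and ${\neg}{\langle\alpha\rangle}{\bot}$ are immediate since $\top$ is satisfied everywhere and $\bot$ is satisfied nowhere.

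The main work is the fifth item. Assume $s\models{\lbrack\alpha\rbrack}(p\vee q)$, fix $t$ with $s\leq t$, and assume $t\models{\langle\alpha\rangle}p\rightarrow{\lbrack\alpha\rbrack}q$; I show $t\models{\lbrack\alpha\rbrack}q$. Pick $u$ with $t{\leq}{\circ}{R(\alpha)}u$, witnessed by a state $v$ such that $t\leq v$ and $v R(\alpha) u$. By heredity applied to $s\models{\lbrack\alpha\rbrack}(p\vee q)$ we get $t\models{\lbrack\alpha\rbrack}(p\vee q)$, hence $u\models p\vee q$. If $u\models q$ we are done, so suppose $u\models p$. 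Using $v\geq v$ together with $v R(\alpha) u$ we obtain $v{\geq}{\circ}{R(\alpha)}u$, and therefore $v\models{\langle\alpha\rangle}p$. Applying the truth clause of $\rightarrow$ at $t$ along $t\leq v$ now yields $v\models{\lbrack\alpha\rbrack}q$, and since $v\leq v$ combined with $v R(\alpha) u$ gives $v{\leq}{\circ}{R(\alpha)}u$, we conclude $u\models q$.

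The one delicate point is in the last argument: the intermediate state $v$ is reused twice, first as a $\geq$-predecessor of itself to build a witness for ${\langle\alpha\rangle}p$ at $v$, and then as a $\leq$-successor of itself to pull the freshly obtained ${\lbrack\alpha\rbrack}q$ down to $u$. Both uses rely on the reflexivity of $\leq$ and on the fact that this validity should hold on \emph{every} class of frames (so no extra confluence property is available).
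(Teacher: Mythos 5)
Your proposal is correct and follows essentially the same route as the paper: the only work is in the fifth formula, and your direct argument is just the contrapositive of the paper's proof by contradiction, with the same key steps (passing to the intermediate witness $v$ of $t{\leq}{\circ}{R(\alpha)}u$, using reflexivity of the preorder to get $v{\models}{\langle\alpha\rangle}p$ and later to pull ${\lbrack\alpha\rbrack}q$ back to $u$, and using monotonicity of the implication along $t{\leq}v$). The "delicate point" you flag is exactly the move the paper makes with its state $w$, so there is no gap.
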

\begin{proof}
We only consider the $5$th formula.
If ${\lbrack\alpha\rbrack}(p{\vee}q){\rightarrow}(({\langle\alpha\rangle}p{\rightarrow}{\lbrack\alpha\rbrack}q){\rightarrow}{\lbrack\alpha\rbrack}q)$ is not valid in a class ${\mathcal C}$ of frames, there exists formulas $A,B$ and there exists a frame $(W,{\leq},
$\linebreak$
{R})$ in ${\mathcal C}$ such that $(W,{\leq},{R}){\not\models}{\lbrack\alpha\rbrack}(A{\vee}B){\rightarrow}(({\langle\alpha\rangle}A{\rightarrow}{\lbrack\alpha\rbrack}B){\rightarrow}{\lbrack\alpha\rbrack}B)$.
Hence, there exists a model $(W,{\leq},{R},V)$ based on $(W,{\leq},{R})$ such that $(W,{\leq},{R},V){\not\models}{\lbrack\alpha\rbrack}(A{\vee}B){\rightarrow}
$\linebreak$
(({\langle\alpha\rangle}A{\rightarrow}{\lbrack\alpha\rbrack}B){\rightarrow}{\lbrack\alpha\rbrack}B)$.
Thus, there exists $s{\in}W$ such that $s{\not\models}{\lbrack\alpha\rbrack}(A{\vee}B){\rightarrow}(({\langle\alpha\rangle}A{\rightarrow}
$\linebreak$
{\lbrack\alpha\rbrack}B){\rightarrow}{\lbrack\alpha\rbrack}B)$.
Consequently, there exists $t{\in}W$ such that $s{\leq}t$, $t{\models}{\lbrack\alpha\rbrack}(A{\vee}B)$ and $t{\not\models}
$\linebreak$
({\langle\alpha\rangle}A{\rightarrow}{\lbrack\alpha\rbrack}B){\rightarrow}{\lbrack\alpha\rbrack}B$.
Hence, there exists $u{\in}W$ such that $t{\leq}u$, $u{\models}{\langle\alpha\rangle}A{\rightarrow}{\lbrack\alpha\rbrack}B$ and $u{\not\models}{\lbrack\alpha\rbrack}B$.
Thus, there exists $v{\in}W$ such that $u{\leq}{\circ}{R(\alpha)}v$ and $v{\not\models}B$.
Since $t{\leq}u$, then $t{\leq}{\circ}{R(\alpha)}v$.
Since $t{\models}{\lbrack\alpha\rbrack}(A{\vee}B)$, then $v{\models}A{\vee}B$.
Consequently, $v{\models}A$, or $v{\models}B$.
Since $v{\not\models}B$, then $v{\models}A$.
Since $u{\leq}{\circ}{R(\alpha)}v$, then there exists $w{\in}W$ such that $u{\leq}w$ and $w{R(\alpha)}
$\linebreak$
v$.
Since $v{\models}A$, then $w{\models}{\langle\alpha\rangle}A$.
Since $u{\models}{\langle\alpha\rangle}A{\rightarrow}{\lbrack\alpha\rbrack}B$ and $u{\leq}w$, then $w{\models}{\lbrack\alpha\rbrack}B$.
Since $w{R(\alpha)}v$, then $v{\models}B$: a contradiction.
\medskip
\end{proof}
\begin{proposition}\label{proposition:validity:of:rules}
For all groups $\alpha$, the following rules preserve validity on any class of frames:\footnote{The mono-agent versions of some of these rules~---~$\frac{p{\rightarrow}q}{{\square}p{\rightarrow}{\square}q}$ and $\frac{p{\rightarrow}q}{{\lozenge}p{\rightarrow}{\lozenge}q}$~---~have been already considered in the above-mentioned literature about intuitionistic modal logics.
A variant of the mono-agent version of the $3$rd rule~---~$\frac{{\lozenge}p{\rightarrow}q{\vee}{\square}(p{\rightarrow}r)}{{\lozenge}p{\rightarrow}q{\vee}{\lozenge}r}$~---~has been firstly considered in~\cite{Prenosil:2014}.
See also~\cite{Balbiani:Gencer:APAL:2024}.}
\begin{itemize}
\item $\frac{p{\rightarrow}q}{{\lbrack\alpha\rbrack}p{\rightarrow}{\lbrack\alpha\rbrack}q}$,
\item $\frac{p{\rightarrow}q}{{\langle\alpha\rangle}p{\rightarrow}{\langle\alpha\rangle}q}$,
\item $\frac{{\langle\alpha\rangle}p{\rightarrow}q{\vee}{\lbrack\alpha\rbrack}(p{\rightarrow}r)}{{\langle\alpha\rangle}p{\rightarrow}q{\vee}{\langle\alpha\rangle}r}$.
\end{itemize}
\end{proposition}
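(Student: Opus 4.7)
The plan is to verify each of the three rules separately by a direct semantic argument, working at a fixed but arbitrary frame $(W,{\leq},{R})$ in the class and a fixed model $(W,{\leq},{R},V)$ based on it. In each case, I assume the premise is valid everywhere and deduce the conclusion holds at an arbitrary $s{\in}W$, by unfolding the truth conditions for $\lbrack\alpha\rbrack$ and $\langle\alpha\rangle$ and then invoking the premise at an appropriate state. The use of the heredity property (Proposition~\ref{proposition:heredite}) and the fact that $\leq$ is a preorder (so $s{\leq}s$ and $s{\geq}s$) will be constant throughout.

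For the first two rules the argument is routine. Assume $A{\rightarrow}B$ is valid; to show $s{\models}{\lbrack\alpha\rbrack}A{\rightarrow}{\lbrack\alpha\rbrack}B$, take $t$ with $s{\leq}t$ and $t{\models}{\lbrack\alpha\rbrack}A$, then for $u$ with $t{\leq}{\circ}{R(\alpha)}u$ the premise gives $u{\models}A{\rightarrow}B$ and the assumption gives $u{\models}A$, hence $u{\models}B$. The diamond case is symmetric: for $t$ with $s{\leq}t$ and $t{\models}{\langle\alpha\rangle}A$, pick $u$ with $t{\geq}{\circ}{R(\alpha)}u$ and $u{\models}A$; the premise applied at $u$ yields $u{\models}B$, and the witness $u$ shows $t{\models}{\langle\alpha\rangle}B$.

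The third rule is the one that requires real work. Assume ${\langle\alpha\rangle}A{\rightarrow}B{\vee}{\lbrack\alpha\rbrack}(A{\rightarrow}C)$ is valid, fix $s{\in}W$, take $t$ with $s{\leq}t$ and $t{\models}{\langle\alpha\rangle}A$, and aim to conclude $t{\models}B{\vee}{\langle\alpha\rangle}C$. Unfolding the diamond gives $w,u{\in}W$ with $t{\geq}w$, $w{R(\alpha)}u$ and $u{\models}A$. Since $w{\geq}w$ and $w{R(\alpha)}u$, we have $w{\models}{\langle\alpha\rangle}A$, so the premise applied at $w$ (using $w{\leq}w$) yields $w{\models}B$ or $w{\models}{\lbrack\alpha\rbrack}(A{\rightarrow}C)$. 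In the first case, heredity transported along $w{\leq}t$ gives $t{\models}B$. In the second case, $w{\leq}w$ and $w{R(\alpha)}u$ give $u{\models}A{\rightarrow}C$, and then $u{\leq}u$ with $u{\models}A$ give $u{\models}C$; the composition $t{\geq}w$ with $w{R(\alpha)}u$ exhibits $t{\geq}{\circ}{R(\alpha)}u$, whence $t{\models}{\langle\alpha\rangle}C$.

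The main obstacle is keeping the composition $\geq\circ R(\alpha)$ in the truth condition for $\langle\alpha\rangle$ straight and remembering to feed the intermediate state $w$ (rather than $t$) into the hypothesis so that the premise applied at $w$ produces the witnessing disjunct; once that diagram is drawn, the argument reduces to routine chases through the definition and one application of heredity.
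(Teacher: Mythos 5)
Your proposal is correct and follows essentially the same argument as the paper: for the third rule you pass to the intermediate state $w$ with $t{\geq}w$ and $w{R(\alpha)}u$, apply the validity of the premise at $w$, and split on the disjunction exactly as the paper does with its state $v$, the only difference being that you argue directly while the paper argues by contradiction. The first two rules, which the paper leaves implicit, are handled correctly as well.
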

\begin{proof}
We only consider the $3$rd rule.
If $\frac{{\langle\alpha\rangle}p{\rightarrow}q{\vee}{\lbrack\alpha\rbrack}(p{\rightarrow}r)}{{\langle\alpha\rangle}p{\rightarrow}q{\vee}{\langle\alpha\rangle}r}$ does not preserve validity on a class ${\mathcal C}$ of frames, there exists formulas $A,B,C$ and there exists a frame $(W,{\leq},{R})$ in ${\mathcal C}$ such that $(W,{\leq},{R}){\models}{\langle\alpha\rangle}A{\rightarrow}B{\vee}{\lbrack\alpha\rbrack}(A{\rightarrow}C)$ and $(W,{\leq},{R}){\not\models}{\langle\alpha\rangle}A{\rightarrow}
$\linebreak$
B{\vee}{\langle\alpha\rangle}C$.
Hence, there exists a model $(W,{\leq},{R},V)$ based on $(W,{\leq},{R})$ such that $(W,{\leq},{R},V){\not\models}{\langle\alpha\rangle}A{\rightarrow}B{\vee}{\langle\alpha\rangle}C$.
Thus, there exists $s{\in}W$ such that $s{\not\models}{\langle\alpha\rangle}A{\rightarrow}B{\vee}
$\linebreak$
{\langle\alpha\rangle}C$.
Consequently, there exists $t{\in}W$ such that $s{\leq}t$, $t{\models}{\langle\alpha\rangle}A$ and $t{\not\models}B{\vee}{\langle\alpha\rangle}C$.
Hence, there exists $u{\in}W$ such that $t{\geq}{\circ}{R(\alpha)}u$ and $u{\models}A$.
Thus, there exists $v{\in}W$ such that $t{\geq}v$ and $v{R(\alpha)}u$.
Since $u{\models}A$, then $v{\models}{\langle\alpha\rangle}A$.
Since $(W,{\leq},{R}){\models}{\langle\alpha\rangle}A{\rightarrow}B{\vee}{\lbrack\alpha\rbrack}(A{\rightarrow}C)$, then $(W,{\leq},{R},V){\models}{\langle\alpha\rangle}A{\rightarrow}B{\vee}{\lbrack\alpha\rbrack}(A{\rightarrow}C)$.
Consequently, $v{\models}{\langle\alpha\rangle}A{\rightarrow}B{\vee}{\lbrack\alpha\rbrack}(A{\rightarrow}C)$.
Since $v{\models}{\langle\alpha\rangle}A$, then $v{\models}B{\vee}{\lbrack\alpha\rbrack}(A{\rightarrow}C)$.
Hence, $v{\models}B$, or $v{\models}{\lbrack\alpha\rbrack}(A{\rightarrow}C)$.
In the former case, since $t{\geq}v$, then $t{\models}B$.
Thus, $t{\models}B{\vee}{\langle\alpha\rangle}C$: a contradiction.
In the latter case, since $v{R(\alpha)}u$, then $u{\models}A{\rightarrow}C$.
Since $u{\models}A$, then $u{\models}C$.
Since $t{\geq}{\circ}{R(\alpha)}u$, then $t{\models}{\langle\alpha\rangle}C$.
Consequently, $t{\models}B{\vee}{\langle\alpha\rangle}C$: a contradiction.
\medskip
\end{proof}
\begin{proposition}\label{proposition:validity:of:specific:axioms:doxastic:frames}
For all groups $\alpha$, the following formula is valid on any class of doxastic frames:\footnote{The mono-agent version of this formula~---~$p{\rightarrow}{\square}p$~---~has been considered in~\cite{Artemov:Protopopescu:2016}.
It can be read as follows: ``it is verified that $p$ has a proof once $p$ has been proved''.}
\begin{itemize}
\item $p{\rightarrow}{\lbrack\alpha\rbrack}p$.
\end{itemize}
\end{proposition}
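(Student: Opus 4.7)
The plan is to give a direct semantic argument using only the definition of a doxastic frame together with the heredity property (Proposition~\ref{proposition:heredite}). Suppose toward a contradiction that $p{\rightarrow}{\lbrack\alpha\rbrack}p$ fails in some model $(W,{\leq},{R},V)$ based on a doxastic frame $(W,{\leq},{R})$. Unpacking the truth condition for $\rightarrow$, I would obtain states $s,t{\in}W$ with $s{\leq}t$, $t{\models}p$ and $t{\not\models}{\lbrack\alpha\rbrack}p$. Unpacking the truth condition for $\lbrack\alpha\rbrack$, I would then extract $u,v{\in}W$ with $t{\leq}u$, $u{R(\alpha)}v$ and $v{\not\models}p$.

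The core of the argument is then a single chain of $\leq$-steps: since $t{\leq}u$ holds by construction and $u{\leq}v$ follows from $u{R(\alpha)}v$ by the defining condition of a doxastic frame, the transitivity of $\leq$ gives $t{\leq}v$. Applying Proposition~\ref{proposition:heredite} to $t{\models}p$ and $t{\leq}v$ yields $v{\models}p$, contradicting $v{\not\models}p$.

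There is no real obstacle: the proof is purely a matter of unfolding the semantic clauses and observing that the doxastic condition ${R(\alpha)}{\subseteq}{\leq}$ allows heredity to do the job. The only minor thing to be careful about is the universal quantification over the model, the state $s$, and the formulas substituted for $p$, which is dispatched by the usual schema: since $p$ is an atom, it suffices to treat the valuation of $p$ as an arbitrary closed subset of $W$, and since heredity holds for arbitrary formulas, the same argument would in fact validate the schema $A{\rightarrow}{\lbrack\alpha\rbrack}A$ for every formula $A$.
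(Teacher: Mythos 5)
Your proof is correct and follows essentially the same route as the paper's: unfold the truth conditions for $\rightarrow$ and $\lbrack\alpha\rbrack$, use the doxastic condition ${R(\alpha)}{\subseteq}{\leq}$ to chain the $\leq$-steps, and invoke heredity (Proposition~\ref{proposition:heredite}) to derive the contradiction. The paper likewise works with an arbitrary substitution instance $A{\rightarrow}{\lbrack\alpha\rbrack}A$, exactly as your closing remark anticipates.
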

\begin{proof}
If $p{\rightarrow}{\lbrack\alpha\rbrack}p$ is not valid on a class ${\mathcal C}$ of doxastic frames, there exists a formula $A$ and there exists a frame $(W,{\leq},{R})$ in ${\mathcal C}$ such that $(W,{\leq},{R}){\not\models}A{\rightarrow}{\lbrack\alpha\rbrack}A$.
Hence, there exists a model $(W,{\leq},{R},V)$ based on $(W,{\leq},{R})$ such that $(W,{\leq},{R},V){\not\models}A{\rightarrow}{\lbrack\alpha\rbrack}A$.
Thus, there exists $s{\in}W$ such that $s{\not\models}A{\rightarrow}{\lbrack\alpha\rbrack}A$.
Consequently, there exists $t{\in}W$ such that $s{\leq}t$, $t{\models}A$ and $t{\not\models}{\lbrack\alpha\rbrack}A$.
Hence, there exists $u{\in}W$ such that $t{\leq}{\circ}{R(\alpha)}u$ and $u{\not\models}A$.
Thus, there exists $v{\in}W$ such that $t{\leq}v$ and $v{R(\alpha)}u$.
Since $(W,{\leq},{R})$ is doxastic, then $v{\leq}u$.
Since $t{\leq}v$, then $t{\leq}u$.
Since $t{\models}A$, then $u{\models}A$: a contradiction.
\medskip
\end{proof}
\begin{proposition}\label{proposition:validity:of:specific:axioms:epistemic:frames}
For all groups $\alpha$, the following formula is valid on any class of epistemic frames:\footnote{Neither this formula, nor its mono-agent version~---~${\square}p{\rightarrow}{\neg}{\neg}{\lozenge}p$~---~have never been considered in the above-mentioned literature about intuitionistic modal logics.
The formula ${\lbrack\alpha\rbrack}p{\rightarrow}{\neg}{\neg}{\langle\alpha\rangle}p$ can be read as follows: ``it is not possible to produce a proof that the compatibility of $p$ with $\alpha$'s distributed knowledge cannot have a proof once it is verified by $\alpha$ that $p$ has a proof''.}
\begin{itemize}
\item ${\lbrack\alpha\rbrack}p{\rightarrow}{\neg}{\neg}{\langle\alpha\rangle}p$.
\end{itemize}
\end{proposition}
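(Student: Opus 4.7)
The plan is to argue by contradiction, mirroring the style of the preceding validity proofs in Section~3. Suppose ${\lbrack\alpha\rbrack}p{\rightarrow}{\neg}{\neg}{\langle\alpha\rangle}p$ fails on some class ${\mathcal C}$ of epistemic frames. Then there exist a formula $A$, an epistemic frame $(W,{\leq},{R}){\in}{\mathcal C}$, a model $(W,{\leq},{R},V)$ and a state $s{\in}W$ such that $s{\models}{\lbrack\alpha\rbrack}A$ but $s{\not\models}{\neg}{\neg}{\langle\alpha\rangle}A$. Unpacking the double negation via the truth condition of ${\rightarrow}$, I obtain $t{\in}W$ with $s{\leq}t$ and $t{\models}{\neg}{\langle\alpha\rangle}A$; in particular, for every $u{\in}W$ with $t{\leq}u$, one has $u{\not\models}{\langle\alpha\rangle}A$. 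By Proposition~\ref{proposition:heredite} applied to ${\lbrack\alpha\rbrack}A$, one also has $t{\models}{\lbrack\alpha\rbrack}A$.

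Next, I invoke the defining property of epistemic frames at the point $t$: there exist $v,u{\in}W$ with $t{\leq}v$ and $v{R(\alpha)}u$, i.e.\ $t{{\leq}{\circ}{R(\alpha)}}u$. Since $t{\models}{\lbrack\alpha\rbrack}A$, the truth condition for ${\lbrack\alpha\rbrack}$ gives $u{\models}A$. Combining $v{\geq}v$ with $v{R(\alpha)}u$ yields $v{{\geq}{\circ}{R(\alpha)}}u$, whence, by the truth condition for ${\langle\alpha\rangle}$, $v{\models}{\langle\alpha\rangle}A$. But $t{\leq}v$ together with the consequence of $t{\models}{\neg}{\langle\alpha\rangle}A$ noted above forces $v{\not\models}{\langle\alpha\rangle}A$, the required contradiction.

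No step looks genuinely difficult: the argument is structurally close to the proof of $p{\rightarrow}{\lbrack\alpha\rbrack}p$ on doxastic frames (Proposition~\ref{proposition:validity:of:specific:axioms:doxastic:frames}). The only mild subtlety is to unfold the double negation carefully so that the epistemic condition is applied at $t$ (not at $s$), and then to recognise that the intermediate ${\leq}$-successor $v$ of $t$ supplied by the epistemic condition is precisely the point that witnesses ${\langle\alpha\rangle}A$ and thereby contradicts $t{\models}{\neg}{\langle\alpha\rangle}A$.
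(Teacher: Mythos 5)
Your proof is correct and follows essentially the same route as the paper's: both arguments locate a $\leq$-successor where ${\neg}{\langle\alpha\rangle}A$ holds, invoke the epistemic condition there to produce a ${\leq}{\circ}{R(\alpha)}$-successor, and derive the contradiction between ${\lbrack\alpha\rbrack}A$ and ${\neg}{\langle\alpha\rangle}A$ at that successor. The only cosmetic difference is that you push ${\lbrack\alpha\rbrack}A$ upward by heredity and contradict ${\neg}{\langle\alpha\rangle}A$, whereas the paper propagates the failure of ${\lbrack\alpha\rbrack}A$ downward; the two are contrapositive renderings of the same step.
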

\begin{proof}
If ${\lbrack\alpha\rbrack}p{\rightarrow}{\neg}{\neg}{\langle\alpha\rangle}p$ is not valid on a class ${\mathcal C}$ of epistemic frames, there exists a formula $A$ and there exists a frame $(W,{\leq},{R})$ in ${\mathcal C}$ such that $(W,{\leq},{R}){\not\models}{\lbrack\alpha\rbrack}A{\rightarrow}{\neg}{\neg}{\langle\alpha\rangle}A$.
Hence, there exists a model $(W,{\leq},{R},V)$ based on $(W,{\leq},{R})$ such that $(W,{\leq},{R},V){\not\models}
$\linebreak$
{\lbrack\alpha\rbrack}A{\rightarrow}{\neg}{\neg}{\langle\alpha\rangle}A$.
Thus, there exists $s{\in}W$ such that $s{\not\models}{\lbrack\alpha\rbrack}A{\rightarrow}{\neg}{\neg}{\langle\alpha\rangle}A$.
Consequently, there exists $t{\in}W$ such that $s{\leq}t$, $t{\models}{\lbrack\alpha\rbrack}A$ and $t{\not\models}{\neg}{\neg}{\langle\alpha\rangle}A$.
Hence, there exists $u{\in}W$ such that $t{\leq}u$ and $u{\models}{\neg}{\langle\alpha\rangle}A$.
Since $(W,{\leq},{R})$ is epistemic, then there exists $v{\in}W$ such that $u{\leq}{\circ}{R(\alpha)}v$.
Thus, there exists $w{\in}W$ such that $u{\leq}w$ and $w{R(\alpha)}v$.
Since $u{\models}{\neg}{\langle\alpha\rangle}A$, then $w{\not\models}{\langle\alpha\rangle}A$.
Since $w{R(\alpha)}v$, then $v{\not\models}A$.
Since $u{\leq}{\circ}{R(\alpha)}v$, then $u{\not\models}{\lbrack\alpha\rbrack}A$.
Since $t{\leq}u$, then $t{\not\models}{\lbrack\alpha\rbrack}A$: a contradiction.
\medskip
\end{proof}
\begin{proposition}\label{proposition:validity:of:up:down}
For all groups $\alpha$, the following formulas are valid on any class of up and down reflexive and up and down symmetric frames:\footnote{Here, the reader should remind that as far as classical modal validity is concerned, the mono-agent versions of these formulas~---~${\square}p{\rightarrow}p$, $p{\rightarrow}{\lozenge}p$, $p{\rightarrow}{\square}{\lozenge}p$ and ${\lozenge}{\square}p{\rightarrow}p$~---~correspond to reflexivity and symmetry.
See~\cite[Chapter~$4$]{Blackburn:et:al:2001} and~\cite[Chapter~$3$]{Chagrov:Zakharyaschev:1997}.}
\begin{itemize}
\item ${\lbrack\alpha\rbrack}p{\rightarrow}p$,
\item $p{\rightarrow}{\langle\alpha\rangle}p$,
\item $p{\rightarrow}{\lbrack\alpha\rbrack}{\langle\alpha\rangle}p$,
\item ${\langle\alpha\rangle}{\lbrack\alpha\rbrack}p{\rightarrow}p$.
\end{itemize}
\end{proposition}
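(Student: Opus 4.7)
The plan is to handle the four formulas one by one, in the same contrapositive style as Propositions~\ref{proposition:validity:of:specific:axioms:doxastic:frames} and~\ref{proposition:validity:of:specific:axioms:epistemic:frames}: assume that the formula in question fails at some $s$ in a model $(W,{\leq},{R},V)$ based on a frame in the class; unpack the satisfiability clauses to extract a successor $t$ of $s$ witnessing the failure; and derive a contradiction using the appropriate up-and-down condition together with the heredity property of Proposition~\ref{proposition:heredite}.

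For ${\lbrack\alpha\rbrack}p{\rightarrow}p$, take $t$ with $t{\models}{\lbrack\alpha\rbrack}A$ and $t{\not\models}A$; up-and-down reflexivity yields $u,v$ with $t{\leq}u$, $u{R(\alpha)}v$ and $v{\leq}t$; then $v{\models}A$ by the clause for $\lbrack\alpha\rbrack$, and heredity along $v{\leq}t$ forces $t{\models}A$, a contradiction. For $p{\rightarrow}{\langle\alpha\rangle}p$, the second clause of up-and-down reflexivity provides $u,v$ with $t{\geq}u$, $u{R(\alpha)}v$ and $t{\leq}v$; heredity transports $t{\models}A$ to $v{\models}A$, and $t{\geq}{\circ}{R(\alpha)}v$ then gives $t{\models}{\langle\alpha\rangle}A$.

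For $p{\rightarrow}{\lbrack\alpha\rbrack}{\langle\alpha\rangle}p$, from $t{\models}A$ and $t{\leq}{\circ}{R(\alpha)}u$ pick $v$ with $t{\leq}v$ and $v{R(\alpha)}u$; the second clause of up-and-down symmetry applied to $v{R(\alpha)}u$ yields $y,z$ with $u{\geq}y$, $y{R(\alpha)}z$ and $z{\geq}v$; since $z{\geq}v{\geq}t$, heredity gives $z{\models}A$, hence $u{\models}{\langle\alpha\rangle}A$. For ${\langle\alpha\rangle}{\lbrack\alpha\rbrack}p{\rightarrow}p$, from $t{\models}{\langle\alpha\rangle}{\lbrack\alpha\rbrack}A$ extract $v,u$ with $t{\geq}v$, $v{R(\alpha)}u$ and $u{\models}{\lbrack\alpha\rbrack}A$; the first clause of up-and-down symmetry applied to $v{R(\alpha)}u$ provides $w,x$ with $u{\leq}w$, $w{R(\alpha)}x$ and $x{\leq}v$; then $u{\leq}{\circ}{R(\alpha)}x$ forces $x{\models}A$, and heredity along $x{\leq}v{\leq}t$ yields $t{\models}A$.

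The main subtlety is bookkeeping: each case requires choosing the correct clause of the frame condition (reflexivity versus symmetry, ``up'' clause versus ``down'' clause), and arranging the chase so that the final heredity step transports satisfaction along the $\leq$-direction rather than the $\geq$-direction. No induction is needed beyond what is already used in the proof of Proposition~\ref{proposition:heredite}.
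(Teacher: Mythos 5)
Your proposal is correct and follows essentially the same route as the paper: the paper writes out only the case of $p{\rightarrow}{\lbrack\alpha\rbrack}{\langle\alpha\rangle}p$, and your argument for that formula (unpacking the failure at a $\leq$-successor $t$, choosing $v$ with $t{\leq}v$ and $v{R(\alpha)}u$, applying the clause $u{{\geq}{\circ}{R(\alpha)}{\circ}{\geq}}v$ of up-and-down symmetry, and closing with heredity) coincides with it. Your treatments of the remaining three formulas are the routine analogues that the paper leaves to the reader, and they check out.
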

\begin{proof}
We only consider the $3$rd formula.
If $p{\rightarrow}{\lbrack\alpha\rbrack}{\langle\alpha\rangle}p$ is not valid in a class ${\mathcal C}$ of up and down reflexive and up and down symmetric frames, there exists a formula $A$ and there exists a frame $(W,{\leq},{R})$ in ${\mathcal C}$ such that $(W,{\leq},{R}){\not\models}A{\rightarrow}{\lbrack\alpha\rbrack}{\langle\alpha\rangle}A$.
Hence, there exists a model $(W,{\leq},{R},V)$ based on $(W,{\leq},{R})$ such that $(W,{\leq},{R},V){\not\models}A{\rightarrow}{\lbrack\alpha\rbrack}{\langle\alpha\rangle}A$.
Thus, there exists $s{\in}W$ such that $s{\not\models}A{\rightarrow}{\lbrack\alpha\rbrack}{\langle\alpha\rangle}A$.
Consequently, there exists $t{\in}W$ such that $s{\leq}t$, $t{\models}A$ and $t{\not\models}{\lbrack\alpha\rbrack}{\langle\alpha\rangle}A$.
Hence, there exists $u{\in}W$ such that $t{\leq}{\circ}{R(\alpha)}u$ and $u{\not\models}{\langle\alpha\rangle}A$.
Thus, there exists $v{\in}W$ such that $t{\leq}v$ and $v{R(\alpha)}u$.
Since $t{\models}A$, then $v{\models}A$.
Moreover, since $(W,{\leq},{R})$ is up and down symmetric, then $u{{\geq}{\circ}{R(\alpha)}{\circ}{\geq}}v$.
Consequently, there exists $w{\in}W$ such that $u{{\geq}{\circ}{R(\alpha)}}w$ and $w{\geq}v$.
Since $v{\models}A$, then $w{\models}A$.
Since $u{{\geq}{\circ}{R(\alpha)}}w$, then $u{\models}{\langle\alpha\rangle}A$: a contradiction.
\medskip
\end{proof}
\begin{proposition}\label{proposition:validity:of:axioms:inclusion}
For all groups $\alpha,\beta$, the following formulas are valid on any class of prestandard frames:\footnote{Within the classical context, these formulas are well-known to those who interest in epistemic logics with distributed knowledge~\cite{Halpern:Moses:1992,Wang:Agotnes:2020}.
Within the intuitionistic context, the $1$st formula has been considered in~\cite{Murai:Sano:2022,Su:et:al:2021} whereas the $2$nd formula has never been considered.}
\begin{itemize}
\item ${\lbrack\alpha\rbrack}p{\vee}{\lbrack\beta\rbrack}p{\rightarrow}{\lbrack\alpha{\cup}\beta\rbrack}p$,
\item ${\langle\alpha{\cup}\beta\rangle}p{\rightarrow}{\langle\alpha\rangle}p{\wedge}{\langle\beta\rangle}p$.
\end{itemize}
\end{proposition}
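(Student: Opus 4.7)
The plan is to follow the exact template used in the preceding propositions: assume for contradiction that one of the formulas fails on a prestandard frame, unfold the satisfiability conditions to extract a concrete witness, and then invoke the prestandard inclusion ${R(\alpha{\cup}\beta)}{\subseteq}{R(\alpha)}{\cap}{R(\beta)}$ to transport the witness between the relations and reach a contradiction.

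For the first formula, I would suppose some model $(W,{\leq},{R},V)$ on a prestandard frame has a state $s$ with $s{\not\models}{\lbrack\alpha\rbrack}A{\vee}{\lbrack\beta\rbrack}A{\rightarrow}{\lbrack\alpha{\cup}\beta\rbrack}A$. Unpacking as in Proposition~\ref{proposition:validity:of:specific:axioms:doxastic:frames}, this gives $t{\geq}s$ with $t{\models}{\lbrack\alpha\rbrack}A{\vee}{\lbrack\beta\rbrack}A$ and $u{\in}W$ with $t{\leq}{\circ}{R(\alpha{\cup}\beta)}u$ and $u{\not\models}A$. Then there is $v$ with $t{\leq}v$ and $v{R(\alpha{\cup}\beta)}u$; by prestandardness $v{R(\alpha)}u$ and $v{R(\beta)}u$, so $t{\leq}{\circ}{R(\alpha)}u$ and $t{\leq}{\circ}{R(\beta)}u$. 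Whichever disjunct holds at $t$ then forces $u{\models}A$, contradicting $u{\not\models}A$.

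For the second formula, I would analogously assume $s{\not\models}{\langle\alpha{\cup}\beta\rangle}A{\rightarrow}{\langle\alpha\rangle}A{\wedge}{\langle\beta\rangle}A$ in some model on a prestandard frame. Then there is $t{\geq}s$ with $t{\models}{\langle\alpha{\cup}\beta\rangle}A$ and $t{\not\models}{\langle\alpha\rangle}A{\wedge}{\langle\beta\rangle}A$, so $u$ exists with $t{\geq}{\circ}{R(\alpha{\cup}\beta)}u$ and $u{\models}A$. Picking $v$ with $t{\geq}v$ and $v{R(\alpha{\cup}\beta)}u$, prestandardness yields $v{R(\alpha)}u$ and $v{R(\beta)}u$, whence $t{\geq}{\circ}{R(\alpha)}u$ and $t{\geq}{\circ}{R(\beta)}u$. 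Since $u{\models}A$, both $t{\models}{\langle\alpha\rangle}A$ and $t{\models}{\langle\beta\rangle}A$ hold, contradicting $t{\not\models}{\langle\alpha\rangle}A{\wedge}{\langle\beta\rangle}A$.

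There is no real obstacle here: the only nontrivial step is recognising that the inclusion $R(\alpha{\cup}\beta){\subseteq}R(\alpha){\cap}R(\beta)$ propagates along $\leq$ and $\geq$, i.e. that it yields ${\leq}{\circ}{R(\alpha{\cup}\beta)}{\subseteq}{\leq}{\circ}{R(\alpha)}$ and ${\geq}{\circ}{R(\alpha{\cup}\beta)}{\subseteq}{\geq}{\circ}{R(\alpha)}$ (and likewise for $\beta$), which is immediate from the definitions. Once this is noticed, both arguments are purely routine unfoldings that mirror the soundness proofs already given in the excerpt, so a single one of them would suffice to display in the paper.
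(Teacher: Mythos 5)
Your proposal is correct and follows essentially the same argument as the paper: the paper's proof treats only the first formula, unfolding the falsification of the implication to obtain $t{\leq}v$, $v{R(\alpha{\cup}\beta)}u$, applying prestandardness to get $v{R(\alpha)}u$ and $v{R(\beta)}u$, and deriving the contradiction from whichever disjunct holds at $t$, exactly as you do. Your additional treatment of the second formula is the routine dual unfolding (using ${\geq}{\circ}{R(\alpha{\cup}\beta)}$ in place of ${\leq}{\circ}{R(\alpha{\cup}\beta)}$) that the paper leaves implicit, and it is correct.
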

\begin{proof}
We only consider the $1$st formula.
If ${\lbrack\alpha\rbrack}p{\vee}{\lbrack\beta\rbrack}p{\rightarrow}{\lbrack\alpha{\cup}\beta\rbrack}p$ is not valid in a class ${\mathcal C}$ of prestandard frames, there exists a formula $A$ and there exists a frame $(W,{\leq},{R})$ in ${\mathcal C}$ such that $(W,{\leq},{R}){\not\models}{\lbrack\alpha\rbrack}A{\vee}{\lbrack\beta\rbrack}A{\rightarrow}{\lbrack\alpha{\cup}\beta\rbrack}A$.
Hence, there exists a model $(W,{\leq},{R},V)$ based on $(W,{\leq},{R})$ such that $(W,{\leq},{R},V){\not\models}{\lbrack\alpha\rbrack}A{\vee}{\lbrack\beta\rbrack}A{\rightarrow}{\lbrack\alpha{\cup}\beta\rbrack}A$.
Thus, there exists $s{\in}W$ such that $s{\not\models}{\lbrack\alpha\rbrack}A{\vee}{\lbrack\beta\rbrack}A{\rightarrow}{\lbrack\alpha{\cup}\beta\rbrack}A$.
Consequently, there exists $t{\in}W$ such that $s{\leq}t$, $t{\models}{\lbrack\alpha\rbrack}A{\vee}{\lbrack\beta\rbrack}A$ and $t{\not\models}{\lbrack\alpha{\cup}\beta\rbrack}A$.
Hence, there exists $u{\in}W$ such that $t{\leq}{\circ}{R(\alpha{\cup}\beta)}u$ and $u{\not\models}A$.
Thus, there exists $v{\in}W$ such that $t{\leq}v$ and $v{R(\alpha{\cup}\beta)}u$.
Since $(W,{\leq},{R})$ is prestandard, then $v{R(\alpha)}u$ and $v{R(\beta)}u$.
Since $t{\models}{\lbrack\alpha\rbrack}A{\vee}{\lbrack\beta\rbrack}A$, then $t{\models}{\lbrack\alpha\rbrack}A$, or $t{\models}{\lbrack\beta\rbrack}A$.
In the former case, since $t{\leq}v$ and $v{R(\alpha)}u$, then $u{\models}A$: a contradiction.
In the latter case, since $t{\leq}v$ and $v{R(\beta)}u$, then $u{\models}A$: a contradiction.
\medskip
\end{proof}
\section{Results about logics}
The following propositions will be used to show the completeness~---~with respect to their respective relational semantics~---~of the different intuitionistic modal logics that we will introduce in Section~\ref{section:axiomatization:completeness}.
In particular, by studying the proofs of Propositions~\ref{proposition:bounded:morphic:images:all}, \ref{proposition:bounded:morphic:images:partition}, \ref{proposition:all:tra:a}, \ref{proposition:refsym:updown:partition:a} and~\ref{proposition:refsym:updown:partition:b}, the reader will understand how difficult it is sometimes to show the completeness of such logics.\footnote{Here, we want to inform the reader that the proofs of Proposition~\ref{proposition:bounded:morphic:images:all}, \ref{proposition:bounded:morphic:images:partition}, \ref{proposition:all:tra:a}, \ref{proposition:refsym:updown:partition:a} and~\ref{proposition:refsym:updown:partition:b} have never been presented before.}
\begin{proposition}\label{proposition:bounded:morphic:images:all}
$\Log({\mathcal C}_{\all}^{\prestandard}){=}\Log({\mathcal C}_{\all}^{\standard})$.
\end{proposition}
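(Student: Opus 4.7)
The inclusion $\Log({\mathcal C}_{\all}^{\standard}) \supseteq \Log({\mathcal C}_{\all}^{\prestandard})$ is immediate from the observation that ${\mathcal C}_{\all}^{\standard} \subseteq {\mathcal C}_{\all}^{\prestandard}$. So the content of the proposition lies in the reverse inclusion: every formula refuted on some prestandard frame must already be refuted on some standard one. My plan is to prove this by showing that every prestandard frame is the surjective bounded-morphic image of a standard frame, so that a counter-model on a prestandard $\mathcal{F}$ lifts to a counter-model on a standard $\mathcal{F}'$ via the pull-back valuation.

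Concretely, given a prestandard $\mathcal{F}=(W,{\leq},R)$, I would construct $\mathcal{F}'=(W',{\leq'},R')$ by duplicating each state. Elements of $W'$ would be pairs $(s,\phi)$ in which $\phi$ is a ``tag'' recording a choice of $R$-successors for every group at $s$; the projection $f(s,\phi)=s$ is the intended bounded morphism. The order $\leq'$ lifts $\leq$ in the first coordinate together with a compatibility condition on tags that maintains heredity. The singleton-agent relations $R'(\{a\})$ are defined so that $(s,\phi)\,R'(\{a\})\,(t,\psi)$ is licensed exactly when $t$ appears in $\phi$'s entries for every group containing $a$ and $\psi$ is suitably compatible. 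Setting $R'(\alpha):=\bigcap_{a\in\alpha}R'(\{a\})$ then immediately gives $R'(\alpha\cup\beta)=R'(\alpha)\cap R'(\beta)$, so that $\mathcal{F}'$ is standard. Because $\mathcal{F}$ is prestandard, $sR(\alpha)t$ implies $sR(\{a\})t$ for every $a\in\alpha$; this ensures that the singleton-witnesses needed to assemble an $R'(\alpha)$-edge in $\mathcal{F}'$ over a given $R(\alpha)$-edge in $\mathcal{F}$ always exist.

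The verifications required are then: first, that $\mathcal{F}'$ is well-defined and standard; second, the forth conditions, namely $f$ is monotone with respect to $\leq'$ and takes each $R'(\{a\})$-edge to an $R(\{a\})$-edge, hence each $R'(\alpha)$-edge to an $R(\alpha)$-edge; third, the back conditions for $\leq$ and for each relation $R(\alpha)$; and finally, by induction on $A$ using Proposition~\ref{proposition:heredite}, that the pull-back valuation $V'(p)=f^{-1}(V(p))$ yields $(\mathcal{F}',V'),s'\models A$ iff $(\mathcal{F},V),f(s')\models A$ for every $s'\in W'$. Applying this with a state $s_0$ refuting $A$ in $\mathcal{F}$ produces a refutation at any preimage $s'_0\in f^{-1}(s_0)$, which gives $A\notin \Log({\mathcal C}_{\all}^{\standard})$.

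The main obstacle will be coordinating the tags so that the forth condition for composite $\alpha$ survives: in general, $\bigcap_{a\in\alpha}R(\{a\})$ strictly contains $R(\alpha)$ in a merely prestandard frame, so the tags must record enough information to block the ``illicit'' intersections without also blocking legitimate singleton-witness assemblies. This is where delicate bookkeeping is needed, and where the interaction between $\leq'$ and $R'(\{a\})$ for the intuitionistic truth clauses involving ${\leq}{\circ}R(\alpha)$ and ${\geq}{\circ}R(\alpha)$ must be tracked carefully.
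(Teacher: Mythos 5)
Your strategy coincides with the paper's: duplicate each state with a ``tag'', define the accessibility relations on tagged states so that the resulting frame is standard and projects onto the given prestandard frame by a bounded morphism, and transfer the counter-model along the projection with the pull-back valuation. The problem is that the proposal stops exactly where the proof begins. The whole content of the proposition is the obstacle you name in your final paragraph: in a merely prestandard frame $\bigcap_{a\in\alpha}R(\{a\})$ may strictly contain $R(\alpha)$, so the tags must be engineered so that \emph{(i)} whenever the singleton conditions licensing $(s,\phi)\,R'(\{a\})\,(t,\psi)$ hold simultaneously for all $a\in\alpha$ one can already conclude $s\,R(\alpha)\,t$ (the forth condition for composite groups, without which the truth lemma fails at the $\lbrack\alpha\rbrack$ clause), while \emph{(ii)} every genuine $R(\alpha)$-edge of the base frame still lifts, and lifts from \emph{every} tag at the source (you need the back condition in this strong ``for all $\phi$ there exists $\psi$'' form for the induction to go through at both the $\lbrack\alpha\rbrack$ and $\langle\alpha\rangle$ clauses). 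You assert that ``delicate bookkeeping'' will reconcile (i) and (ii) but do not exhibit it; since this reconciliation is the entire difficulty, the proposal is a plan rather than a proof.

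For comparison, the paper resolves the obstacle algebraically: tags are functions $g:\wp^{\star}(\Ag)\times\Ag\longrightarrow\wp(W)$, and $(t,g)\,R'(\alpha)\,(u,h)$ demands that $g(\gamma,a)=h(\gamma,a)$ for every $a\in\alpha$ and every group $\gamma\ni a$, and that for every group $\gamma$ the symmetric difference $\Sigma\{g(\gamma,a):a\in\gamma\}\oplus\Sigma\{h(\gamma,a):a\in\gamma\}$ equals $\emptyset$ if $t\,R(\gamma)\,u$ and $W$ otherwise. The second condition is independent of $\alpha$, so $R'(\alpha\cup\beta)=R'(\alpha)\cap R'(\beta)$ is immediate (standardness); and if $\gamma\subseteq\alpha$ the first condition forces the symmetric difference to be $\emptyset$, hence $t\,R(\gamma)\,u$~---~this is what blocks the illicit intersections. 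The back condition holds from every source tag because, for each $\gamma\not\subseteq\alpha$, one distinguished agent of $\gamma\setminus\alpha$ can absorb the required correction term in the Boolean ring $(\wp(W),\emptyset,W,\oplus,\cap)$, prestandardness being used to handle the groups $\gamma\subseteq\alpha$. Note also that the paper lets $\leq'$ ignore the tags entirely, which is what makes the back conditions for $\leq$ and $\geq$ trivial and the three-way truth lemma (``$t\models B$ iff for all $g$, $(t,g)\models B$, iff for some $g$, $(t,g)\models B$'') provable; your suggestion of adding a ``compatibility condition on tags'' to $\leq'$ would put exactly that at risk, given that the truth clauses for $\lbrack\alpha\rbrack$ and $\langle\alpha\rangle$ involve ${\leq}{\circ}{R(\alpha)}$ and ${\geq}{\circ}{R(\alpha)}$.
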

\begin{proof}
Since ${\mathcal C}_{\all}^{\prestandard}$ contains ${\mathcal C}_{\all}^{\standard}$, then it suffices to prove that $\Log({\mathcal C}_{\all}^{\prestandard}){\supseteq}\Log({\mathcal C}_{\all}^{\standard})$.
If $\Log({\mathcal C}_{\all}^{\prestandard}){\not\supseteq}\Log({\mathcal C}_{\all}^{\standard})$, there exists a formula $A$ such that ${\mathcal C}_{\all}^{\prestandard}{\not\models}A$ and ${\mathcal C}_{\all}^{\standard}{\models}A$.
Hence, there exists a prestandard frame $(W,{\leq},{R})$ such that $(W,{\leq},{R}){\not\models}A$.
Thus, there exists a model $(W,{\leq},{R},V)$ based on $(W,{\leq},{R})$ such that $(W,{\leq},{R},V){\not\models}A$.
Consequently, there exists $s{\in}W$ such that $s{\not\models}A$.
For all groups $\alpha$, let $\pi(\alpha)\ :\ W{\times}W{\longrightarrow}
$\linebreak$
{\wp}(W)$ be the function such that for all $t,u{\in}W$, if $t{R(\alpha)}u$ then $\pi(\alpha)(t,u){=}\emptyset$ else $\pi(\alpha)(t,u){=}W$.
Let $I$ be the nonempty set of all functions from ${\wp^{\star}}(\Ag){\times}\Ag$ to ${\wp}(W)$.
Let $W^{\prime}{=}W{\times}I$.
Let $\leq^{\prime}$ be the preorder on $W^{\prime}$ such that for all $(t,g),(u,h){\in}W^{\prime}$, $(t,g){\leq^{\prime}}(u,h)$ if and only if $t{\leq}u$.
Let ${R^{\prime}}\ :\ {\wp^{\star}}(\Ag){\longrightarrow}\wp(W^{\prime}{\times}W^{\prime})$ be the function such that for all groups $\alpha$ and for all $(t,g),(u,h){\in}W^{\prime}$, $(t,g){R^{\prime}(\alpha)}(u,h)$ if and only if for all agents $a$ and for all groups $\gamma$,
\begin{itemize}
\item if $a{\in}\alpha$ and $a{\in}\gamma$ then $g(\gamma,a){\oplus}h(\gamma,a){=}\emptyset$,
\item $\Sigma\{g(\gamma,a):\ a{\in}\gamma\}{\oplus}\Sigma\{h(\gamma,a):\ a{\in}\gamma\}{=}\pi(\gamma)(t,u)$,
\end{itemize}
where $\oplus$ is the operation of symmetric difference in ${\wp}(W)$, $\Sigma\{g(\gamma,a):\ a{\in}\gamma\}$ denotes the result of applying $\oplus$ to the elements of $\{g(\gamma,a):\ a{\in}\gamma\}$ and $\Sigma\{h(\gamma,a):\ a{\in}\gamma\}$ denotes the result of applying $\oplus$ to the elements of $\{h(\gamma,a):\ a{\in}\gamma\}$.\footnote{As is well-known, the algebraic structure $({\wp}(W),{\emptyset},{W},{\oplus},{\cap})$ is a Boolean ring.
In particular, for all $X,Y{\in}{\wp}(W)$, $X{\oplus}Y{=}\emptyset$ if and only if $X{=}Y$.
In other respect, notice that since $\Ag$ is finite, then $\{g(\gamma,a):\ a{\in}\gamma\}$ and $\{h(\gamma,a):\ a{\in}\gamma\}$ are finite subsets of ${\wp}(W)$.}
Direct calculations would lead to the conclusion that for all groups $\alpha,\beta$, $R^{\prime}(\alpha{\cup}\beta){\subseteq}R^{\prime}(\alpha){\cap}R^{\prime}(\beta)$.
Hence, the frame $(W^{\prime},{\leq^{\prime}},{R^{\prime}})$ is prestandard.
\begin{claim}
The frame $(W^{\prime},{\leq^{\prime}},{R^{\prime}})$ is standard.
\end{claim}
\begin{proofclaim}
If the frame $(W^{\prime},{\leq^{\prime}},{R^{\prime}})$ is not standard, there exists groups $\alpha,\beta$ such that $R^{\prime}(\alpha{\cup}\beta){\not\supseteq}R^{\prime}(\alpha){\cap}R^{\prime}(\beta)$.
Hence, there exists $(t,g),(u,h){\in}W^{\prime}$ such that not $(t,g){R^{\prime}(\alpha}
$\linebreak$
{{\cup}\beta)}(u,h)$, $(t,g){R^{\prime}(\alpha)}(u,h)$ and $(t,g){R^{\prime}(\beta)}(u,h)$.
Thus, for all groups $\gamma$, $\Sigma\{g(\gamma,a):\ a{\in}\gamma\}{\oplus}\Sigma\{h(\gamma,a):\ a{\in}\gamma\}{=}\pi(\gamma)(t,u)$.
Since not $(t,g){R^{\prime}(\alpha{\cup}\beta)}(u,h)$, then there exists an agent $a$ and a group $\gamma$ such that $a{\in}\alpha{\cup}\beta$, $a{\in}\gamma$ and $g(\gamma,a){\oplus}h(\gamma,a){\not=}\emptyset$.
Consequently, $a{\in}\alpha$, or $a{\in}\beta$.
In the first case, since $a{\in}\gamma$ and $g(\gamma,a){\oplus}h(\gamma,a){\not=}\emptyset$, then not $(t,g){R^{\prime}(\alpha)}(u,h)$: a contradiction.
In the second case, since $a{\in}\gamma$ and $g(\gamma,a){\oplus}h(\gamma,a){\not=}
$\linebreak$
\emptyset$, then not $(t,g){R^{\prime}(\beta)}(u,h)$: a contradiction.
\medskip
\end{proofclaim}
\begin{claim}
For all groups $\alpha$, for all $t,u{\in}W$ and for all $g{\in}I$, the following conditions are equivalent:
\begin{enumerate}
\item $t{R(\alpha)}u$,
\item there exists $h{\in}I$ such that $(t,g){R^{\prime}(\alpha)}(u,h)$.
\end{enumerate}
\end{claim}
\begin{proofclaim}
$\mathbf{(1){\Rightarrow}(2)}:$
Suppose $t{R(\alpha)}u$.
Let $\tau\ :\ {\wp^{\star}}(\Ag){\longrightarrow}\Ag$ be a function such that for all groups $\gamma$, if $\gamma{\not\subseteq}\alpha$ then $\tau(\gamma){\in}\gamma$ and $\tau(\gamma){\not\in}\alpha$.\footnote{We can define such function by considering a total order on $\Ag$ and by saying that for all groups $\gamma$, $\tau(\gamma)$ is the greatest lower bound of $\gamma{\setminus}\alpha$.}
Let $h{\in}I$ be such that for all groups $\beta$ and for all agents $a$,
\begin{itemize}
\item if $a{\not\in}\beta$ then $h(\beta,a){=}\emptyset$,
\item if $a{\not\in}\alpha$, $a{\in}\beta$ and $a{\not=}\tau(\beta)$ then $h(\beta,a){=}\emptyset$,
\item if $a{\not\in}\alpha$, $a{\in}\beta$ and $a{=}\tau(\beta)$ then $h(\beta,a){=}\Sigma\{g(\beta,b):\ b{\in}\beta{\setminus}\alpha\}{\oplus}\pi(\gamma)(t,u)$,
\item if $a{\in}\alpha$ and $a{\in}\beta$ then $h(\beta,a){=}g(\beta,a)$.
\end{itemize}
Direct calculations would lead to the conclusion that $(t,g){R^{\prime}(\alpha)}(u,h)$.
\\
$\mathbf{(2){\Rightarrow}(1)}:$
Suppose there exists $h{\in}I$ such that $(t,g){R^{\prime}(\alpha)}(u,h)$.
Hence, for all agents $a$, if $a{\in}\alpha$ then $g(\alpha,a){\oplus}h(\alpha,a){=}\emptyset$.
Moreover, $\Sigma\{g(\alpha,a):\ a{\in}\alpha\}{\oplus}\Sigma\{h(\alpha,a):\ a{\in}\alpha\}
$\linebreak$
{=}\pi(\alpha)(t,u)$.
Thus, $\pi(\alpha)(t,u){=}\emptyset$.
Consequently, $t{R(\alpha)}u$.
\medskip
\end{proofclaim}
\\
Let $V^{\prime}\ :\ \At{\longrightarrow}\wp(W^{\prime})$ be the valuation on $(W^{\prime},{\leq^{\prime}},{R^{\prime}})$ such that for all atoms $p$, $V^{\prime}{p}{=}V(p){\times}I$.
\begin{claim}
For all formulas $B$ and for all $t{\in}W$, the following conditions are equivalent:
\begin{enumerate}
\item $t{\models}B$,
\item for all $g{\in}I$, $(t,g){\models}B$,
\item there exists $g{\in}I$ such that $(t,g){\models}B$.
\end{enumerate}
\end{claim}
\begin{proofclaim}
By induction on $B$.
\medskip
\end{proofclaim}
\\
Since $s{\not\models}A$, then there exists $f{\in}I$ such that $(s,f){\not\models}A$.
Thus, $(W^{\prime},{\leq^{\prime}},{R^{\prime}}){\not\models}A$.
Since $(W^{\prime},{\leq^{\prime}},{R^{\prime}})$ is standard, then ${\mathcal C}_{\all}^{\standard}{\not\models}A$: a contradiction.
\medskip
\end{proof}
\begin{proposition}\label{proposition:bounded:morphic:images:doxastic}
$\Log({\mathcal C}_{\doxastic}^{\prestandard}){=}\Log({\mathcal C}_{\doxastic}^{\standard})$.
\end{proposition}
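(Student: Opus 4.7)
The plan is to replay the construction used in the proof of Proposition~\ref{proposition:bounded:morphic:images:all} verbatim, and to verify one extra closure property. Since ${\mathcal C}_{\doxastic}^{\standard}{\subseteq}{\mathcal C}_{\doxastic}^{\prestandard}$, it suffices to show $\Log({\mathcal C}_{\doxastic}^{\prestandard}){\supseteq}\Log({\mathcal C}_{\doxastic}^{\standard})$. Assume, toward a contradiction, that there is a formula $A$ with ${\mathcal C}_{\doxastic}^{\prestandard}{\not\models}A$ while ${\mathcal C}_{\doxastic}^{\standard}{\models}A$. Then pick a doxastic prestandard frame $(W,{\leq},{R})$, a valuation $V$ and a state $s{\in}W$ such that $s{\not\models}A$.

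Next, I build exactly the same index set $I$, the same functions $\pi(\alpha)$, and the same structure $(W^{\prime},{\leq^{\prime}},{R^{\prime}},V^{\prime})$ as in the proof of Proposition~\ref{proposition:bounded:morphic:images:all}. The three claims there go through without modification, because none of their statements or arguments uses any property of $(W,{\leq},{R})$ beyond its being a prestandard frame: $(W^{\prime},{\leq^{\prime}},{R^{\prime}})$ is standard; for every $g{\in}I$ the $R^{\prime}(\alpha)$-edges leaving $(t,g)$ project exactly onto the $R(\alpha)$-edges leaving $t$; and satisfiability of a formula at $t$ is equivalent to satisfiability at some (equivalently, every) $(t,g)$. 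In particular there exists $f{\in}I$ with $(s,f){\not\models}A$, so $(W^{\prime},{\leq^{\prime}},{R^{\prime}}){\not\models}A$.

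The one additional verification is that $(W^{\prime},{\leq^{\prime}},{R^{\prime}})$ is doxastic. Suppose $(t,g){R^{\prime}(\alpha)}(u,h)$. By the ${(2){\Rightarrow}(1)}$ direction of Claim $2$ in the proof of Proposition~\ref{proposition:bounded:morphic:images:all}, this forces $t{R(\alpha)}u$; since $(W,{\leq},{R})$ is doxastic, $t{\leq}u$, and the definition of ${\leq^{\prime}}$ gives $(t,g){\leq^{\prime}}(u,h)$. Hence $(W^{\prime},{\leq^{\prime}},{R^{\prime}})$ is a doxastic standard frame refuting $A$, contradicting ${\mathcal C}_{\doxastic}^{\standard}{\models}A$. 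There is essentially no obstacle here: the hard work~---~the construction of $I$ and the symmetric-difference bookkeeping that makes the $R^{\prime}$-claims go through~---~has already been done, and the doxastic condition is transported automatically by the fact that $R^{\prime}(\alpha)$ projects onto $R(\alpha)$ on first components and that $\leq^{\prime}$ is defined componentwise from $\leq$.
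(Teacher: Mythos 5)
Your proposal is correct and follows exactly the route the paper takes: reuse the construction from Proposition~\ref{proposition:bounded:morphic:images:all} unchanged and observe that the doxastic condition is inherited, since $(t,g){R^{\prime}(\alpha)}(u,h)$ forces $t{R(\alpha)}u$, hence $t{\leq}u$, hence $(t,g){\leq^{\prime}}(u,h)$. This is precisely the verification the paper delegates to the reader in its footnote, so there is nothing to add.
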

\begin{proof}
Since ${\mathcal C}_{\doxastic}^{\prestandard}$ contains ${\mathcal C}_{\doxastic}^{\standard}$, then it suffices to prove that $\Log({\mathcal C}_{\doxastic}^{\prestandard}){\supseteq}\Log({\mathcal C}_{\doxastic}^{\standard})$.
If $\Log({\mathcal C}_{\doxastic}^{\prestandard}){\not\supseteq}\Log({\mathcal C}_{\doxastic}^{\standard})$, there exists a formula $A$ such that ${\mathcal C}_{\doxastic}^{\prestandard}{\not\models}A$ and ${\mathcal C}_{\doxastic}^{\standard}{\models}A$.
Hence, there exists a prestandard frame $(W,{\leq},{R})$ in ${\mathcal C}_{\doxastic}$ such that $(W,{\leq},{R}){\not\models}A$.
Thus, there exists a model $(W,{\leq},{R},V)$ based on $(W,{\leq},{R})$ such that $(W,{\leq},{R},V){\not\models}
$\linebreak$
A$.
Consequently, there exists $s{\in}W$ such that $s{\not\models}A$.
The rest of the proof is done by imitating the argument developed in the proof of Proposition~\ref{proposition:bounded:morphic:images:all}.\footnote{The reader may easily verify that since the prestandard frame $(W,{\leq},{R})$ is doxastic, then the frame $(W^{\prime},{\leq^{\prime}},{R^{\prime}})$ constructed in this argument is doxastic.}
\medskip
\end{proof}
\begin{proposition}\label{proposition:bounded:morphic:images:epistemic}
$\Log({\mathcal C}_{\epistemic}^{\prestandard}){=}\Log({\mathcal C}_{\epistemic}^{\standard})$.
\end{proposition}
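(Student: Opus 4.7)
Since ${\mathcal C}_{\epistemic}^{\prestandard}$ contains ${\mathcal C}_{\epistemic}^{\standard}$, it suffices to prove that $\Log({\mathcal C}_{\epistemic}^{\prestandard}){\supseteq}\Log({\mathcal C}_{\epistemic}^{\standard})$. The plan is to mimic the construction used in the proof of Proposition~\ref{proposition:bounded:morphic:images:all} and then verify that the resulting standard frame is still epistemic.

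Suppose for contradiction that there exists a formula $A$ such that ${\mathcal C}_{\epistemic}^{\prestandard}{\not\models}A$ while ${\mathcal C}_{\epistemic}^{\standard}{\models}A$. Then there is a prestandard epistemic frame $(W,{\leq},{R})$, a model $(W,{\leq},{R},V)$ on it and some $s{\in}W$ with $s{\not\models}A$. I would now construct $(W^{\prime},{\leq^{\prime}},{R^{\prime}})$ and $V^{\prime}$ exactly as in the proof of Proposition~\ref{proposition:bounded:morphic:images:all}, obtaining from the claims proved there that $(W^{\prime},{\leq^{\prime}},{R^{\prime}})$ is standard and that $t{\models}B$ if and only if $(t,g){\models}B$ for all formulas $B$, all $t{\in}W$ and all $g{\in}I$.

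The only new verification is that $(W^{\prime},{\leq^{\prime}},{R^{\prime}})$ is epistemic. Let $(t,g){\in}W^{\prime}$ and let $\alpha$ be a group. Since $(W,{\leq},{R})$ is epistemic, there exist $t^{\prime},u{\in}W$ such that $t{\leq}t^{\prime}$ and $t^{\prime}{R(\alpha)}u$. By $(1){\Rightarrow}(2)$ of the second claim of Proposition~\ref{proposition:bounded:morphic:images:all}, applied with $g^{\prime}{=}g$, there exists $h{\in}I$ such that $(t^{\prime},g){R^{\prime}(\alpha)}(u,h)$. Since $t{\leq}t^{\prime}$ yields $(t,g){\leq^{\prime}}(t^{\prime},g)$, it follows that $(t,g){{\leq^{\prime}}{\circ}{R^{\prime}(\alpha)}}(u,h)$, which proves that $(W^{\prime},{\leq^{\prime}},{R^{\prime}})$ is epistemic.

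Finally, since $s{\not\models}A$, there is $f{\in}I$ with $(s,f){\not\models}A$, so $(W^{\prime},{\leq^{\prime}},{R^{\prime}}){\not\models}A$. As $(W^{\prime},{\leq^{\prime}},{R^{\prime}})$ is a standard epistemic frame, this contradicts ${\mathcal C}_{\epistemic}^{\standard}{\models}A$. The main obstacle, if any, lies not in this proposition but rather in the reusable construction of Proposition~\ref{proposition:bounded:morphic:images:all}; the present result only requires checking that taking a product with the parameter set $I$ and equipping it with $\leq^{\prime}$ preserves the existence of a $\leq{\circ}R(\alpha)$-successor, which is immediate from the second claim there.
\medskip
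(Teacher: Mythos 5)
Your proof is correct and takes essentially the same route as the paper: the paper likewise reduces everything to the construction of Proposition~\ref{proposition:bounded:morphic:images:all} and relegates the verification that $(W^{\prime},{\leq^{\prime}},{R^{\prime}})$ is epistemic to a footnote, whereas you spell out that verification via the $(1){\Rightarrow}(2)$ direction of the second claim. The only detail worth adding is that ``epistemic'' presupposes ``doxastic'', and that condition is also preserved, since $(t,g){R^{\prime}(\alpha)}(u,h)$ implies $t{R(\alpha)}u$ by the $(2){\Rightarrow}(1)$ direction of the same claim, hence $t{\leq}u$, hence $(t,g){\leq^{\prime}}(u,h)$.
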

\begin{proof}
Since ${\mathcal C}_{\epistemic}^{\prestandard}$ contains ${\mathcal C}_{\epistemic}^{\standard}$, then it suffices to prove that $\Log({\mathcal C}_{\epistemic}^{\prestandard}){\supseteq}\Log({\mathcal C}_{\epistemic}^{\standard})$.
If $\Log({\mathcal C}_{\epistemic}^{\prestandard}){\not\supseteq}\Log({\mathcal C}_{\epistemic}^{\standard})$, there exists a formula $A$ such that ${\mathcal C}_{\epistemic}{\not\models}A$ and ${\mathcal C}_{\epistemic}^{\standard}{\models}A$.
Thus, there exists a prestandard frame $(W,{\leq},{R})$ in ${\mathcal C}_{\epistemic}$ such that $(W,{\leq},{R}){\not\models}A$.
Consequently, there exists a model $(W,{\leq},{R},V)$ based on $(W,{\leq},{R})$ such that $(W,{\leq},{R},V)
$\linebreak$
{\not\models}A$.
Hence, there exists $s{\in}W$ such that $s{\not\models}A$.
The rest of the proof is done by imitating the argument developed in the proof of Proposition~\ref{proposition:bounded:morphic:images:all}.\footnote{The reader may easily verify that since the prestandard frame $(W,{\leq},{R})$ is epistemic, then the frame $(W^{\prime},{\leq^{\prime}},{R^{\prime}})$ constructed in this argument is epistemic.}
\medskip
\end{proof}
\begin{proposition}\label{proposition:bounded:morphic:images:updown}
$\Log({\mathcal C}_{\updown}^{\prestandard}){=}\Log({\mathcal C}_{\updown}^{\standard})$.
\end{proposition}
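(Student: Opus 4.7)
The plan is to follow the pattern of Propositions~\ref{proposition:bounded:morphic:images:all}, \ref{proposition:bounded:morphic:images:doxastic} and~\ref{proposition:bounded:morphic:images:epistemic}: the easy inclusion $\Log({\mathcal C}_{\updown}^{\prestandard}){\subseteq}\Log({\mathcal C}_{\updown}^{\standard})$ holds because ${\mathcal C}_{\updown}^{\standard}{\subseteq}{\mathcal C}_{\updown}^{\prestandard}$, so the work is in the reverse inclusion. For that, I would start from a formula $A$ not valid in some prestandard updown frame $(W,{\leq},{R})$ and transport the countermodel to a standard updown frame via the $W{\times}I$ construction introduced in the proof of Proposition~\ref{proposition:bounded:morphic:images:all}.

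Concretely, first I would pick a prestandard updown frame $(W,{\leq},{R})$, a valuation $V$ and a state $s{\in}W$ with $s{\not\models}A$, and then apply verbatim the construction of $(W^{\prime},{\leq^{\prime}},{R^{\prime}})$ and $V^{\prime}$ from the proof of Proposition~\ref{proposition:bounded:morphic:images:all}. The two substantial lemmas there~---~that $(W^{\prime},{\leq^{\prime}},{R^{\prime}})$ is standard (Claim~1) and that $(W^{\prime},{\leq^{\prime}},{R^{\prime}},V^{\prime})$ is truth-equivalent to $(W,{\leq},{R},V)$ fiberwise (Claim~3)~---~carry over unchanged, so that $(s,f){\not\models}A$ for some $f{\in}I$ and $(W^{\prime},{\leq^{\prime}},{R^{\prime}}){\not\models}A$. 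The only thing specific to the present proposition that needs to be verified is that the new frame $(W^{\prime},{\leq^{\prime}},{R^{\prime}})$ is also up and down reflexive and up and down symmetric.

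For up and down reflexivity at $(t,g){\in}W^{\prime}$, updown reflexivity of $(W,{\leq},{R})$ supplies $u,v{\in}W$ with $t{\leq}u$, $u{R(\alpha)}v$ and $v{\leq}t$. By the $(1){\Rightarrow}(2)$ direction of Claim~2 in the proof of Proposition~\ref{proposition:bounded:morphic:images:all}, for any $h_{1}{\in}I$ there exists $h_{2}{\in}I$ with $(u,h_{1}){R^{\prime}(\alpha)}(v,h_{2})$, and then $(t,g){\leq^{\prime}}(u,h_{1}){R^{\prime}(\alpha)}(v,h_{2}){\leq^{\prime}}(t,g)$ as required; the ${\geq^{\prime}}{\circ}{R^{\prime}(\alpha)}{\circ}{\geq^{\prime}}$ condition is symmetric. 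For up and down symmetry, given $(t,g){R^{\prime}(\alpha)}(u,h)$, the $(2){\Rightarrow}(1)$ direction of that same claim yields $t{R(\alpha)}u$, whence updown symmetry of $(W,{\leq},{R})$ supplies $v,w{\in}W$ with $u{\leq}v$, $v{R(\alpha)}w$ and $w{\leq}t$; applying $(1){\Rightarrow}(2)$ once more gives $h_{1},h_{2}{\in}I$ with $(u,h){\leq^{\prime}}(v,h_{1}){R^{\prime}(\alpha)}(w,h_{2}){\leq^{\prime}}(t,g)$, and the dual assertion is obtained analogously.

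The main obstacle is conceptual rather than technical: one must observe that the second coordinates never interfere with $\leq^{\prime}$~---~which is projection-based~---~and that Claim~2 of Proposition~\ref{proposition:bounded:morphic:images:all} provides a tight enough two-way correspondence between $R(\alpha)$ and $R^{\prime}(\alpha)$ to lift the existential witnesses demanded by up and down reflexivity and symmetry. Once this observation is made, the verification reduces to a routine chase through compositions of relations, and the footnote-style remark used in Propositions~\ref{proposition:bounded:morphic:images:doxastic} and~\ref{proposition:bounded:morphic:images:epistemic} suffices.
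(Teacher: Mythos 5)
Your proposal is correct and follows exactly the route the paper takes: reuse the $W{\times}I$ construction and the three claims from the proof of Proposition~\ref{proposition:bounded:morphic:images:all}, with the only new content being the check that up and down reflexivity and up and down symmetry are preserved~---~a check the paper relegates to a footnote and you carry out explicitly (and correctly) via Claim~2 of that proof. No gap; same approach.
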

\begin{proof}
Since ${\mathcal C}_{\updown}^{\prestandard}$ contains ${\mathcal C}_{\updown}^{\standard}$, then it suffices to prove that $\Log({\mathcal C}_{\updown}^{\prestandard}){\supseteq}\Log({\mathcal C}_{\updown}^{\standard})$.
If $\Log({\mathcal C}_{\updown}^{\prestandard}){\not\supseteq}\Log({\mathcal C}_{\updown}^{\standard})$, there exists a formula $A$ such that ${\mathcal C}_{\updown}^{\prestandard}{\not\models}A$ and ${\mathcal C}_{\updown}^{\standard}{\models}A$.
Hence, there exists a prestandard frame $(W,{\leq},{R})$ in ${\mathcal C}_{\updown}$ such that $(W,{\leq},{R}){\not\models}A$.
Thus, there exists a model $(W,{\leq},{R},V)$ based on $(W,{\leq},{R})$ such that $(W,{\leq},{R},V){\not\models}
$\linebreak$
A$.
Consequently, there exists $s{\in}W$ such that $s{\not\models}A$.
The rest of the proof is done by imitating the argument developed in the proof of Proposition~\ref{proposition:bounded:morphic:images:all}.\footnote{The reader may easily verify that since the prestandard frame $(W,{\leq},{R})$ is up and down reflexive and up and down symmetric, then the frame $(W^{\prime},{\leq^{\prime}},{R^{\prime}})$ constructed in this argument is up and down reflexive and up and down symmetric.}
\medskip
\end{proof}
\begin{proposition}\label{proposition:bounded:morphic:images:refsym}
$\Log({\mathcal C}_{\refsym}^{\prestandard}){=}\Log({\mathcal C}_{\refsym}^{\standard})$.
\end{proposition}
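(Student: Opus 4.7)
The plan is to follow exactly the template established by the three preceding propositions on bounded-morphic images. First I would reduce to showing $\Log({\mathcal C}_{\refsym}^{\prestandard}){\supseteq}\Log({\mathcal C}_{\refsym}^{\standard})$, pick a formula $A$ witnessing failure of this inclusion, extract a prestandard, reflexive and symmetric countermodel $(W,{\leq},{R},V)$ refuting $A$ at some state $s$, and then apply the symmetric-difference construction of Proposition~\ref{proposition:bounded:morphic:images:all} verbatim to produce a standard frame $(W',{\leq'},{R'})$ with a valuation $V'$ in which $A$ is still refuted at some $(s,f)$. The three internal claims of that proof (that $(W',{\leq'},{R'})$ is standard, that $t{R(\alpha)}u$ iff there exists $h{\in}I$ with $(t,g){R'(\alpha)}(u,h)$, and that $t{\models}B$ iff $(t,g){\models}B$ for every $g{\in}I$) transfer unchanged to the present setting, so what remains is only to check that $(W',{\leq'},{R'})$ inherits reflexivity and symmetry from $(W,{\leq},{R})$.

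Reflexivity is immediate from the algebraic identity $X{\oplus}X{=}\emptyset$: whenever $t{R(\alpha)}t$ we have $\pi(\alpha)(t,t){=}\emptyset$, and both defining clauses of $R'(\alpha)$ are trivially satisfied at $((t,g),(t,g))$ because every ${\oplus}$-sum involved pairs each function value with itself. Symmetry is equally immediate from the commutativity of $\oplus$ together with the fact that $R(\alpha)$ symmetric forces $\pi(\alpha)(t,u){=}\pi(\alpha)(u,t)$: if $(t,g){R'(\alpha)}(u,h)$ then both $g(\gamma,a){\oplus}h(\gamma,a){=}h(\gamma,a){\oplus}g(\gamma,a)$ and $\Sigma\{g(\gamma,a):\,a{\in}\gamma\}{\oplus}\Sigma\{h(\gamma,a):\,a{\in}\gamma\}$ are unchanged by swapping $(t,g)$ and $(u,h)$, whence $(u,h){R'(\alpha)}(t,g)$. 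Once these two observations are recorded, the constructed $(W',{\leq'},{R'})$ lies in ${\mathcal C}_{\refsym}^{\standard}$ and refutes $A$, contradicting the choice of $A$.

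The main obstacle, as hinted by the footnotes attached to Propositions~\ref{proposition:bounded:morphic:images:doxastic}, \ref{proposition:bounded:morphic:images:epistemic} and~\ref{proposition:bounded:morphic:images:updown}, is purely psychological: one must convince oneself that the property being lifted really survives the fairly intricate definition of $R'(\alpha)$. The lift works here because reflexivity and symmetry of $R(\alpha)$ are concisely encoded by statements about the single quantity $\pi(\alpha)(t,u)$~---~namely $\pi(\alpha)(t,t){=}\emptyset$ and $\pi(\alpha)(t,u){=}\pi(\alpha)(u,t)$~---~and the two conditions defining $R'(\alpha)$ respect these encodings thanks to the idempotence and commutativity of $\oplus$. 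Consequently, no new construction beyond the one of Proposition~\ref{proposition:bounded:morphic:images:all} is required, and the proof amounts to the same skeleton as Propositions~\ref{proposition:bounded:morphic:images:doxastic}, \ref{proposition:bounded:morphic:images:epistemic} and~\ref{proposition:bounded:morphic:images:updown}, with a one-line footnote verifying inheritance of reflexivity and symmetry.
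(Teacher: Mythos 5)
Your proposal is correct and follows essentially the same route as the paper, which likewise reduces to the $\supseteq$ inclusion, reuses the symmetric-difference construction of Proposition~\ref{proposition:bounded:morphic:images:all} verbatim, and merely notes (in a footnote) that reflexivity and symmetry are inherited by $(W^{\prime},{\leq^{\prime}},{R^{\prime}})$. Your explicit verification of that inheritance via $\pi(\gamma)(t,t){=}\emptyset$ and $\pi(\gamma)(t,u){=}\pi(\gamma)(u,t)$ together with the idempotence and commutativity of $\oplus$ is exactly the check the paper leaves to the reader.
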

\begin{proof}
Since ${\mathcal C}_{\refsym}^{\prestandard}$ contains ${\mathcal C}_{\refsym}^{\standard}$, then it suffices to prove that $\Log({\mathcal C}_{\refsym}^{\prestandard}){\supseteq}\Log({\mathcal C}_{\refsym}^{\standard})$.
If $\Log({\mathcal C}_{\refsym}^{\prestandard}){\not\supseteq}\Log({\mathcal C}_{\refsym}^{\standard})$, there exists a formula $A$ such that ${\mathcal C}_{\refsym}^{\prestandard}{\not\models}A$ and ${\mathcal C}_{\refsym}^{\standard}{\models}A$.
Hence, there exists a prestandard frame $(W,{\leq},{R})$ in ${\mathcal C}_{\refsym}$ such that $(W,{\leq},{R}){\not\models}A$.
Thus, there exists a model $(W,{\leq},{R},V)$ based on $(W,{\leq},{R})$ such that $(W,{\leq},{R},V){\not\models}A$.
Consequently, there exists $s{\in}W$ such that $s{\not\models}A$.
The rest of the proof is done by imitating the argument developed in the proof of Proposition~\ref{proposition:bounded:morphic:images:all}.\footnote{The reader may easily verify that since the prestandard frame $(W,{\leq},{R})$ is reflexive and symmetric, then the frame $(W^{\prime},{\leq^{\prime}},{R^{\prime}})$ constructed in this argument is reflexive and symmetric.}
\medskip
\end{proof}
\begin{proposition}\label{proposition:bounded:morphic:images:partition}
$\Log({\mathcal C}_{\partition}^{\prestandard}){=}\Log({\mathcal C}_{\partition}^{\standard})$.
\end{proposition}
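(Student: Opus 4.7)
The plan is to follow the template of the preceding five propositions. Since ${\mathcal C}_{\partition}^{\standard}$ is contained in ${\mathcal C}_{\partition}^{\prestandard}$, it suffices to prove $\Log({\mathcal C}_{\partition}^{\prestandard}){\supseteq}\Log({\mathcal C}_{\partition}^{\standard})$. Supposing otherwise, I would fix a formula $A$, a prestandard partition frame $(W,{\leq},{R})$, a model on it, and a state $s{\in}W$ with $s{\not\models}A$; the goal is then to build a standard partition frame refuting $A$, following the construction of Proposition~\ref{proposition:bounded:morphic:images:all}: $W^{\prime}{=}W{\times}I$ with $I$ the set of functions from ${\wp^{\star}}(\Ag){\times}\Ag$ to ${\wp}(W)$, $\leq^{\prime}$ lifted from $\leq$, and $R^{\prime}$ defined by the symmetric-difference conditions parametrized by an auxiliary map $\pi(\gamma)(t,u){\in}{\wp}(W)$.

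Reflexivity and symmetry of $R^{\prime}(\alpha)$ would transfer as in Proposition~\ref{proposition:bounded:morphic:images:refsym}, using reflexivity and symmetry of the $R(\gamma)$ together with $X{\oplus}X{=}\emptyset$ and the commutativity of $\oplus$. The genuine difficulty, and the main obstacle of the proof, is transitivity: unfolding the two defining conditions of $R^{\prime}$ across consecutive steps $(t,g){R^{\prime}(\alpha)}(u,h)$ and $(u,h){R^{\prime}(\alpha)}(v,k)$, one obtains, for every group $\gamma$, $\Sigma\{g(\gamma,a):\ a{\in}\gamma\}{\oplus}\Sigma\{k(\gamma,a):\ a{\in}\gamma\}{=}\pi(\gamma)(t,u){\oplus}\pi(\gamma)(u,v)$, so closing $R^{\prime}(\alpha)$ under composition forces the cocycle identity $\pi(\gamma)(t,v){=}\pi(\gamma)(t,u){\oplus}\pi(\gamma)(u,v)$ for all $t,u,v{\in}W$. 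The original choice $\pi(\gamma)(t,u){=}\emptyset$ when $t{R(\gamma)}u$ and $\pi(\gamma)(t,u){=}W$ otherwise fails this identity whenever $t,u,v$ lie in three pairwise distinct $R(\gamma)$-equivalence classes, for then each of the three values equals $W$ but $W{\oplus}W{=}\emptyset{\not=}W$.

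The remedy I propose is to redefine $\pi$ as a coboundary in the Boolean ring ${\wp}(W)$: for each group $\gamma$, let $X_{\gamma}:\ W{\longrightarrow}{\wp}(W)$ send each state to its $R(\gamma)$-equivalence class and set $\pi(\gamma)(t,u){=}X_{\gamma}(t){\oplus}X_{\gamma}(u)$. Since $R(\gamma)$ is an equivalence relation, $\pi(\gamma)(t,u){=}\emptyset$ iff $t{R(\gamma)}u$, and the cocycle identity is then automatic. With this replacement, the three partition properties of $R^{\prime}$ all follow by direct calculation, while standardness of $(W^{\prime},{\leq^{\prime}},{R^{\prime}})$ is obtained verbatim from Proposition~\ref{proposition:bounded:morphic:images:all}, whose argument uses only the first defining condition of $R^{\prime}$. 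The characterization ``$t{R(\alpha)}u$ iff there exists $h{\in}I$ such that $(t,g){R^{\prime}(\alpha)}(u,h)$'' is reobtained by a minor adjustment of the definition of the witnessing $h$ (replacing the old $\pi(\beta)(t,u)$ by the new one in the single coordinate $h(\beta,\tau(\beta))$), using along the way that prestandardness gives $R(\alpha){\subseteq}R(\beta)$ whenever $\beta{\subseteq}\alpha$. The semantic-agreement claim then goes through as an unchanged induction on formulas, producing $(W^{\prime},{\leq^{\prime}},{R^{\prime}}){\not\models}A$ on a standard partition frame and contradicting ${\mathcal C}_{\partition}^{\standard}{\models}A$.
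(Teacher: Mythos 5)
Your proposal is correct and coincides with the paper's own proof: the paper likewise reduces to the construction of Proposition~\ref{proposition:bounded:morphic:images:all} and replaces $\pi(\gamma)(t,u)$ by $\lbrack t\rbrack_{R(\gamma)}{\oplus}\lbrack u\rbrack_{R(\gamma)}$, which is exactly your coboundary $X_{\gamma}(t){\oplus}X_{\gamma}(u)$. Your explanation of \emph{why} the original two-valued $\pi$ breaks transitivity (failure of the cocycle identity on three distinct $R(\gamma)$-classes) is a helpful elaboration the paper leaves implicit, but the mathematical content is the same.
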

\begin{proof}
Since ${\mathcal C}_{\partition}^{\prestandard}$ contains ${\mathcal C}_{\partition}^{\standard}$, then it suffices to prove that $\Log({\mathcal C}_{\partition}^{\prestandard}){\supseteq}\Log({\mathcal C}_{\partition}^{\standard})$.
If $\Log({\mathcal C}_{\partition}^{\prestandard}){\not\supseteq}\Log({\mathcal C}_{\partition}^{\standard})$, there exists a formula $A$ such that ${\mathcal C}_{\partition}^{\prestandard}{\not\models}A$ and ${\mathcal C}_{\partition}^{\standard}{\models}A$.
Hence, there exists a prestandard frame $(W,{\leq},{R})$ in ${\mathcal C}_{\partition}$ such that $(W,{\leq},{R}){\not\models}A$.
Thus, there exists a model $(W,{\leq},{R},V)$ based on $(W,{\leq},{R})$ such that $(W,{\leq},{R},V){\not\models}
$\linebreak$
A$.
Consequently, there exists $s{\in}W$ such that $s{\not\models}A$.
The rest of the proof is done by imitating the argument developed in the proof of Proposition~\ref{proposition:bounded:morphic:images:all}, the only change being for all groups $\alpha$, about the function $\pi(\alpha)\ :\ W{\times}W{\longrightarrow}{\wp}(W)$ that should be now defined by saying that for all $t,u{\in}W$, $\pi(\alpha)(t,u){=}\lbrack t\rbrack_{R(\alpha)}{\oplus}\lbrack u\rbrack_{R(\alpha)}$ where for all $v{\in}W$, $\lbrack v\rbrack_{R(\alpha)}$ denotes the equivalence class of $v$ modulo $R(\alpha)$.\footnote{The reader may easily verify that since the prestandard frame $(W,{\leq},{R})$ is a partition, then the frame $(W^{\prime},{\leq^{\prime}},{R^{\prime}})$ constructed in this argument is a partition.}
\medskip
\end{proof}
Propositions~\ref{proposition:bounded:morphic:images:all}, \ref{proposition:bounded:morphic:images:doxastic}, \ref{proposition:bounded:morphic:images:epistemic}, \ref{proposition:bounded:morphic:images:updown}, \ref{proposition:bounded:morphic:images:refsym} and~\ref{proposition:bounded:morphic:images:partition} say that our syntax and semantics cannot distinguish ${\mathcal C}^{\prestandard}$ from ${\mathcal C}^{\standard}$, ${\mathcal C}_{\doxastic}^{\prestandard}$ from ${\mathcal C}_{\doxastic}^{\standard}$, ${\mathcal C}_{\epistemic}^{\prestandard}$ from ${\mathcal C}_{\epistemic}^{\standard}$, ${\mathcal C}_{\updown}^{\prestandard}$ from ${\mathcal C}_{\updown}^{\standard}$, ${\mathcal C}_{\refsym}^{\prestandard}$ from ${\mathcal C}_{\refsym}^{\standard}$ and ${\mathcal C}_{\partition}^{\prestandard}$ from ${\mathcal C}_{\partition}^{\standard}$.
\begin{proposition}\label{proposition:all:tra:a}
$\Log({\mathcal C}_{\all}){=}\Log({\mathcal C}_{\transitive})$.
\end{proposition}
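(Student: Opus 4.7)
The inclusion $\Log({\mathcal C}_{\all}) \subseteq \Log({\mathcal C}_{\transitive})$ is immediate since ${\mathcal C}_{\transitive} \subseteq {\mathcal C}_{\all}$. For the converse inclusion $\Log({\mathcal C}_{\transitive}) \subseteq \Log({\mathcal C}_{\all})$, my plan is to argue contrapositively: given a formula $A$ refuted in some model $(W, {\leq}, R, V)$ at some state $s_{0}$, I would construct a transitive frame which also refutes $A$.

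My strategy is a tree-unravelling from $s_{0}$. I would take $W^{\ast}$ to be the set of finite alternating sequences $\sigma = (t_{0}, \alpha_{1}, t_{1}, \ldots, \alpha_{n}, t_{n})$ with $t_{0} = s_{0}$, $t_{i} \in W$, $\alpha_{i} \in \wp^{\star}(\Ag)$, and $t_{i-1} R(\alpha_{i}) t_{i}$ for $1 \leq i \leq n$. The preorder $\leq^{\ast}$ would compare sequences sharing a common prefix whose terminal states are $\leq$-related in $W$; the relation $R^{\ast}(\alpha)$ would be the ``descendant-with-terminal-edge-$\alpha$'' relation, namely $\sigma R^{\ast}(\alpha) \tau$ iff $\tau$ properly extends $\sigma$ and the last edge-label appearing in $\tau$ is $\alpha$. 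A short argument yields that $R^{\ast}(\alpha)$ is transitive on $W^{\ast}$. Letting $\pi$ denote the projection sending each sequence to its last state and setting $V^{\ast}(p) := \pi^{-1}(V(p))$, one obtains a transitive model $(W^{\ast}, \leq^{\ast}, R^{\ast}, V^{\ast})$.

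The key step is a truth-preservation lemma: for every formula $B$ and every $\sigma \in W^{\ast}$, $\sigma \models B$ iff $\pi(\sigma) \models B$. The propositional cases are routine; the modal cases require showing that the $\leq^{\ast} \circ R^{\ast}(\alpha)$-successors of $\sigma$ project onto exactly the $\leq \circ R(\alpha)$-successors of $\pi(\sigma)$ and, dually, that the $\geq^{\ast} \circ R^{\ast}(\alpha)$-successors of $\sigma$ project onto the $\geq \circ R(\alpha)$-successors of $\pi(\sigma)$.

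The main obstacle will be reconciling the multi-step character of the descendant relation $R^{\ast}(\alpha)$, forced by transitivity, with the one-step character of the original $\lbrack\alpha\rbrack$- and $\langle\alpha\rangle$-semantics. If multi-step extensions create box- or diamond-witnesses that do not correspond, via $\pi$, to $\leq \circ R(\alpha)$- or $\geq \circ R(\alpha)$-successors in $W$, the truth lemma fails. Overcoming this will likely require tightening the definition of $\leq^{\ast}$ so that longer descendant chains collapse, up to $\leq^{\ast}$-equivalence, to one-step extensions already present, or alternatively adopting an indexed-product construction in the spirit of the proof of Proposition~\ref{proposition:bounded:morphic:images:all}, where each sequence is equipped with an auxiliary parameter whose role is precisely to mediate between the one-step box/diamond interpretation demanded by the semantics and the transitive ``ancestor'' definition chosen for $R^{\ast}$.
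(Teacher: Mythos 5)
There is a genuine gap, and it is exactly the one you flag yourself at the end without resolving: the truth-preservation lemma fails for the descendant relation $R^{\ast}$ as you define it. If $\sigma R^{\ast}(\alpha)\tau$ merely requires that $\tau$ properly extend $\sigma$ with last edge-label $\alpha$, then $\sigma{=}(s_{0})$ has as an $R^{\ast}(\alpha)$-successor the path $(s_{0},\beta,u,\alpha,v)$ whenever $s_{0}R(\beta)u$ and $uR(\alpha)v$, even if $s_{0}$ has no ${\leq}{\circ}{R(\alpha)}$-successor at all in $W$. In that situation $s_{0}{\models}{\lbrack\alpha\rbrack}{\bot}$ vacuously while $(s_{0}){\not\models}{\lbrack\alpha\rbrack}{\bot}$ in the unravelled model, so the projection $\pi$ does not preserve satisfiability and the argument collapses. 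The dilemma is structural: the one-step parent--child relation on the tree satisfies the truth lemma but is not transitive, while the ancestor relation is transitive but satisfies no truth lemma; your closing suggestions (``tightening $\leq^{\ast}$'' or ``an indexed-product construction'') are gestures at a repair, not a repair, so the proposal as it stands does not prove the converse inclusion.

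The paper resolves the dilemma by a much more economical device than unravelling: take $W^{\prime}{=}W{\times}\{0,1\}$, let $(t,j){\leq^{\prime}}(u,k)$ iff $t{\leq}u$, and let $(t,j){R^{\prime}(\alpha)}(u,k)$ iff $t{R(\alpha)}u$, $j{\not=}1$ and $k{=}1$. No $R^{\prime}$-edge leaves the layer $1$, so no two $R^{\prime}(\alpha)$-steps compose and transitivity holds vacuously; at the same time the ${\leq^{\prime}}{\circ}{R^{\prime}(\alpha)}$- and ${\geq^{\prime}}{\circ}{R^{\prime}(\alpha)}$-successors of $(t,j)$ project exactly onto the ${\leq}{\circ}{R(\alpha)}$- and ${\geq}{\circ}{R(\alpha)}$-successors of $t$, so the truth lemma (in the form: $t{\models}B$ iff $(t,j){\models}B$ for some, equivalently all, $j$) goes through by a routine induction. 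If you want to rescue your write-up, the shortest path is to abandon the tree and adopt this two-layer product; the tension between the one-step semantics of $\lbrack\alpha\rbrack$ and $\langle\alpha\rangle$ and the closure condition of transitivity then disappears because there is nothing to close under.
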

\begin{proof}
Since ${\mathcal C}_{\all}$ contains ${\mathcal C}_{\transitive}$, then it suffices to prove that $\Log({\mathcal C}_{\all}){\supseteq}\Log({\mathcal C}_{\transitive})$.
If $\Log({\mathcal C}_{\all}){\not\supseteq}\Log({\mathcal C}_{\transitive})$, there exists a formula $A$ such that ${\mathcal C}_{\all}{\not\models}A$ and ${\mathcal C}_{\transitive}{\models}A$.
Hence, there exists a frame $(W,{\leq},{R})$ such that $(W,{\leq},{R}){\not\models}A$.
Thus, there exists a model $(W,{\leq},{R},V)$ based on $(W,{\leq},{R})$ such that $(W,{\leq},{R},V){\not\models}A$.
Consequently, there exists $s{\in}W$ such that $s{\not\models}A$.
Let $W^{\prime}{=}W{\times}\{0,1\}$.
Let $\leq^{\prime}$ be the preorder on $W^{\prime}$ such that for all $(t,j),(u,k){\in}W^{\prime}$, $(t,j){\leq^{\prime}}(u,k)$ if and only if $t{\leq}u$.
Let ${R^{\prime}}\ :\ {\wp^{\star}}(\Ag){\longrightarrow}\wp(W^{\prime}
$\linebreak$
{\times}W^{\prime})$ be the function such that for all groups $\alpha$ and for all $(t,j),(u,k){\in}W^{\prime}$, $(t,j)
$\linebreak$
{R^{\prime}(\alpha)}(u,k)$ if and only if $t{R^{\prime}(\alpha)}u$, $j{\not=}1$ and $k{=}1$.
%
%
\begin{claim}
The frame $(W^{\prime},{\leq^{\prime}},{R^{\prime}})$ is transitive.
\end{claim}
\begin{proofclaim}
If the frame $(W^{\prime},{\leq^{\prime}},{R^{\prime}})$ is not transitive, there exists a group $\alpha$ and there exists $(t,j),(u,k),(v,l){\in}W^{\prime}$ such that $(t,j){R^{\prime}(\alpha)}(u,k)$ and $(u,k){R^{\prime}(\alpha)}(v,l)$.
Hence, $k{=}1$ and $k{\not=}1$: a contradiction.
\medskip
\end{proofclaim}
\begin{claim}
For all groups $\alpha$, for all $t,u,v{\in}W$ and for all $j{\in}\{0,1\}$, the following conditions are equivalent:
\begin{enumerate}
\item $t{\leq}u{R(\alpha)}v$,
\item there exists $k,l{\in}\{0,1\}$ such that $(t,j){\leq^{\prime}}(u,k){R^{\prime}(\alpha)}(v,l)$.
\end{enumerate}
\end{claim}
\begin{proofclaim}
Left to the reader.
\medskip
\end{proofclaim}
\begin{claim}
For all groups $\alpha$, for all $t,u,v{\in}W$ and for all $j{\in}\{0,1\}$, the following conditions are equivalent:
\begin{enumerate}
\item $t{\geq}u{R(\alpha)}v$,
\item there exists $k,l{\in}\{0,1\}$ such that $(t,j){\geq^{\prime}}(u,k){R^{\prime}(\alpha)}(v,l)$.
\end{enumerate}
\end{claim}
\begin{proofclaim}
Left to the reader.
\medskip
\end{proofclaim}
\\
Let $V^{\prime}\ :\ \At{\longrightarrow}\wp(W^{\prime})$ be the valuation on $(W^{\prime},{\leq^{\prime}},{R^{\prime}})$ such that for all atoms $p$, $V^{\prime}(p){=}V(p){\times}\{0,1\}$.
\begin{claim}
For all formulas $B$ and for all $t{\in}W$, the following conditions are equivalent:
\begin{enumerate}
\item $(W,{\leq},{R},V),t{\models}B$,
\item for all $j{\in}\{0,1\}$, $(W^{\prime},{\leq^{\prime}},{R^{\prime}},V^{\prime}),(t,j){\models}B$,
\item there exists $j{\in}\{0,1\}$ such that $(W^{\prime},{\leq^{\prime}},{R^{\prime}},V^{\prime}),(t,j){\models}B$.
\end{enumerate}
\end{claim}
\begin{proofclaim}
By induction on $B$.
\medskip
\end{proofclaim}
\\
Since $(W,{\leq},{R},V),s{\not\models}A$, then $(W^{\prime},{\leq^{\prime}},{R^{\prime}},V^{\prime}),(s,0){\not\models}A$.
Thus, $(W^{\prime},{\leq^{\prime}},{R^{\prime}},V^{\prime}){\not\models}
$\linebreak$
A$.
Consequently, $(W^{\prime},{\leq^{\prime}},{R^{\prime}}){\not\models}A$.
Since $(W^{\prime},{\leq^{\prime}},{R^{\prime}})$ is transitive, then ${\mathcal C}_{\transitive}{\not\models}A$: a contradiction.
\medskip
\end{proof}
\begin{proposition}\label{proposition:all:tra:b}
$\Log({\mathcal C}_{\all}^{\prestandard}){=}\Log({\mathcal C}_{\transitive}^{\prestandard})$.
\end{proposition}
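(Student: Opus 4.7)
The plan is to imitate the argument developed in the proof of Proposition~\ref{proposition:all:tra:a}, with the only additional burden being a verification that the frame construction used there preserves prestandardness. First I would argue by contradiction: assuming $\Log({\mathcal C}_{\all}^{\prestandard}){\not\supseteq}\Log({\mathcal C}_{\transitive}^{\prestandard})$, I would fix a formula $A$ valid on ${\mathcal C}_{\transitive}^{\prestandard}$ but refuted at some state $s$ of a model $(W,{\leq},{R},V)$ based on a prestandard frame $(W,{\leq},{R})$.

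Next, I would apply the layered-doubling construction from Proposition~\ref{proposition:all:tra:a}, setting $W'{=}W{\times}\{0,1\}$, defining $\leq'$ and $R'$ by $(t,j){\leq'}(u,k)$ iff $t{\leq}u$ and $(t,j){R'(\alpha)}(u,k)$ iff $t{R(\alpha)}u$, $j{\not=}1$ and $k{=}1$, and putting $V'(p){=}V(p){\times}\{0,1\}$. From the argument of Proposition~\ref{proposition:all:tra:a} I can import verbatim the transitivity of $(W',{\leq'},{R'})$, the two auxiliary characterizations of ${\leq}{\circ}{R(\alpha)}$ and ${\geq}{\circ}{R(\alpha)}$ in the new frame, and the satisfaction-preservation claim. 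These together yield $(W',{\leq'},{R'},V'),(s,0){\not\models}A$, so $A$ is refuted on $(W',{\leq'},{R'})$.

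The remaining step, which is the only substantive addition to the earlier argument, is to check that $(W',{\leq'},{R'})$ is itself prestandard. This is straightforward: suppose $(t,j){R'(\alpha{\cup}\beta)}(u,k)$; then by definition $t{R(\alpha{\cup}\beta)}u$, $j{\not=}1$ and $k{=}1$, and since $(W,{\leq},{R})$ is prestandard, $t{R(\alpha)}u$ and $t{R(\beta)}u$, whence $(t,j){R'(\alpha)}(u,k)$ and $(t,j){R'(\beta)}(u,k)$. Hence $(W',{\leq'},{R'})$ belongs to ${\mathcal C}_{\transitive}^{\prestandard}$ and refutes $A$, contradicting the hypothesis.

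I do not anticipate any serious obstacle. The doubling only enlarges the state space along the $\{0,1\}$ coordinate without ever inserting new pairs into $R'(\alpha)$ beyond those already justified by $R(\alpha)$ in the base frame, so any inclusion present among the $R(\alpha)$~---~in particular the defining inclusion of prestandardness~---~transfers to the $R'(\alpha)$ without modification.
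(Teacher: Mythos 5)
Your proposal is correct and follows essentially the same route as the paper: reduce to the non-trivial inclusion, reuse the doubling construction of Proposition~\ref{proposition:all:tra:a} verbatim, and check that prestandardness transfers to $(W',{\leq'},{R'})$ (a verification the paper relegates to a footnote and which you carry out correctly).
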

\begin{proof}
Since ${\mathcal C}_{\all}^{\prestandard}$ contains ${\mathcal C}_{\transitive}^{\prestandard}$, then it suffices to prove that $\Log({\mathcal C}_{\all}^{\prestandard}){\supseteq}\Log({\mathcal C}_{\transitive}^{\prestandard})$.
If $\Log({\mathcal C}_{\all}^{\prestandard}){\not\supseteq}\Log({\mathcal C}_{\transitive}^{\prestandard})$, there exists a formula $A$ such that ${\mathcal C}_{\all}^{\prestandard}{\not\models}A$ and ${\mathcal C}_{\transitive}^{\prestandard}{\models}A$.
Hence, there exists a prestandard frame $(W,{\leq},{R})$ such that $(W,{\leq},{R}){\not\models}A$.
Thus, there exists a model $(W,{\leq},{R},V)$ based on $(W,{\leq},{R})$ such that $(W,{\leq},{R},V){\not\models}A$.
Consequently, there exists $s{\in}W$ such that $s{\not\models}A$.
The rest of the proof is done by imitating the argument developed in the proof of Proposition~\ref{proposition:all:tra:a}.\footnote{The reader may easily verify that since the frame $(W,{\leq},{R})$ is prestandard, then the frame $(W^{\prime},{\leq^{\prime}},{R^{\prime}})$ constructed in this argument is prestandard.}
\medskip
\end{proof}
\begin{proposition}\label{proposition:all:tra:c}
$\Log({\mathcal C}_{\all}^{\standard}){=}\Log({\mathcal C}_{\transitive}^{\standard})$.
\end{proposition}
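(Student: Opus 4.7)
The plan is to mimic the argument of Proposition~\ref{proposition:all:tra:a} verbatim, starting this time from a \emph{standard} frame falsifying the formula, and to verify that the transitivization construction used there also preserves standardness.

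First, since ${\mathcal C}_{\all}^{\standard}$ contains ${\mathcal C}_{\transitive}^{\standard}$, it suffices to prove $\Log({\mathcal C}_{\all}^{\standard}){\supseteq}\Log({\mathcal C}_{\transitive}^{\standard})$. Assume for contradiction that some formula $A$ is in $\Log({\mathcal C}_{\transitive}^{\standard}){\setminus}\Log({\mathcal C}_{\all}^{\standard})$. Then there exists a standard frame $(W,{\leq},{R})$, a valuation $V$ on it, and $s{\in}W$ with $s{\not\models}A$. I would then apply the exact construction from the proof of Proposition~\ref{proposition:all:tra:a}: set $W^{\prime}{=}W{\times}\{0,1\}$, let $\leq^{\prime}$ be the componentwise lifting of $\leq$, and for each group $\alpha$ define $R^{\prime}(\alpha)$ by $(t,j){R^{\prime}(\alpha)}(u,k)$ iff $t{R(\alpha)}u$, $j{\not=}1$ and $k{=}1$.

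The transitivity argument from Proposition~\ref{proposition:all:tra:a} carries over unchanged, and so do the two intermediate claims about $\leq^{\prime}{\circ}R^{\prime}(\alpha)$ and $\geq^{\prime}{\circ}R^{\prime}(\alpha)$, because those claims do not rely on any frame condition imposed on $(W,{\leq},{R})$. Consequently the truth lemma (by induction on $B$) still gives, for the valuation $V^{\prime}(p){=}V(p){\times}\{0,1\}$, the equivalence between $s{\models}B$ and $(s,0){\models}B$, so that $(W^{\prime},{\leq^{\prime}},{R^{\prime}}){\not\models}A$.

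The only genuinely new verification is standardness of $(W^{\prime},{\leq^{\prime}},{R^{\prime}})$. For any groups $\alpha,\beta$ and any $(t,j),(u,k){\in}W^{\prime}$, the equivalence $(t,j){R^{\prime}(\alpha{\cup}\beta)}(u,k)$ iff $t{R(\alpha{\cup}\beta)}u$, $j{\not=}1$ and $k{=}1$, combined with the standardness of $(W,{\leq},{R})$ (which gives $t{R(\alpha{\cup}\beta)}u$ iff $t{R(\alpha)}u$ and $t{R(\beta)}u$), shows that $R^{\prime}(\alpha{\cup}\beta){=}R^{\prime}(\alpha){\cap}R^{\prime}(\beta)$. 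Thus $(W^{\prime},{\leq^{\prime}},{R^{\prime}})$ is both transitive and standard, contradicting ${\mathcal C}_{\transitive}^{\standard}{\models}A$. There is no real obstacle to this proof: the doubling-by-a-bit construction was arranged so that any identity relating the family $\{R(\alpha)\}_{\alpha}$ with intersection automatically transfers to $\{R^{\prime}(\alpha)\}_{\alpha}$, and standardness falls out immediately.
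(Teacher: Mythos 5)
Your proposal is correct and follows essentially the same route as the paper: the paper's proof of Proposition~\ref{proposition:all:tra:c} likewise reduces to one inclusion, takes a standard frame falsifying $A$, and reuses the doubling construction of Proposition~\ref{proposition:all:tra:a}, relegating to a footnote exactly the standardness verification that you spell out (correctly) via $t{R(\alpha{\cup}\beta)}u$ iff $t{R(\alpha)}u$ and $t{R(\beta)}u$.
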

\begin{proof}
Since ${\mathcal C}_{\all}^{\standard}$ contains ${\mathcal C}_{\transitive}^{\standard}$, then it suffices to prove that $\Log({\mathcal C}_{\all}^{\standard}){\supseteq}\Log({\mathcal C}_{\transitive}^{\standard})$.
If $\Log({\mathcal C}_{\all}^{\standard}){\not\supseteq}\Log({\mathcal C}_{\transitive}^{\standard})$, there exists a formula $A$ such that ${\mathcal C}_{\all}^{\standard}{\not\models}A$ and ${\mathcal C}_{\transitive}^{\standard}{\models}A$.
Hence, there exists a standard frame $(W,{\leq},{R})$ such that $(W,{\leq},{R}){\not\models}A$.
Thus, there exists a model $(W,{\leq},{R},V)$ based on $(W,{\leq},{R})$ such that $(W,{\leq},{R},V){\not\models}A$.
Consequently, there exists $s{\in}W$ such that $s{\not\models}A$.
The rest of the proof is done by imitating the argument developed in the proof of Proposition~\ref{proposition:all:tra:a}.\footnote{The reader may easily verify that since the frame $(W,{\leq},{R})$ is standard, then the frame $(W^{\prime},{\leq^{\prime}},{R^{\prime}})$ constructed in this argument is standard.}
\medskip
\end{proof}
Propositions~\ref{proposition:all:tra:a}, \ref{proposition:all:tra:b} and~\ref{proposition:all:tra:c} say that our syntax and semantics cannot distinguish ${\mathcal C}_{\all}$ from ${\mathcal C}_{\transitive}$, ${\mathcal C}_{\all}^{\prestandard}$ from ${\mathcal C}_{\transitive}^{\prestandard}$ and ${\mathcal C}_{\all}^{\standard}$ from ${\mathcal C}_{\transitive}^{\standard}$.
\begin{proposition}\label{proposition:refsym:updown:partition:a}
$\Log({\mathcal C}_{\updown}){=}\Log({\mathcal C}_{\refsym}){=}\Log({\mathcal C}_{\partition})$.
\end{proposition}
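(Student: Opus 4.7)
The easy inclusions $\Log({\mathcal C}_{\updown}){\subseteq}\Log({\mathcal C}_{\refsym}){\subseteq}\Log({\mathcal C}_{\partition})$ follow at once from the containments ${\mathcal C}_{\partition}{\subseteq}{\mathcal C}_{\refsym}{\subseteq}{\mathcal C}_{\updown}$, where the second containment is recorded in the footnote attached to the paragraph introducing up and down reflexive and up and down symmetric frames. The heart of the proposition is thus the converse inclusion $\Log({\mathcal C}_{\partition}){\subseteq}\Log({\mathcal C}_{\updown})$, which I would prove by contraposition: starting from a refutation of $A$ on an up and down reflexive and up and down symmetric frame, build a refutation on a partition frame.

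Fix a model $(W,{\leq},{R},V)$ based on an up and down reflexive and up and down symmetric frame, together with $s_{0}{\in}W$ such that $s_{0}{\not\models}A$. Following the layering pattern used in the proof of Proposition~\ref{proposition:all:tra:a}, I would set $W^{\prime}{:=}W{\times}\{0,1\}$ (allowing a larger index set if that is convenient), lift the preorder coordinate-wise by declaring $(t,j){\leq^{\prime}}(u,k)$ iff $t{\leq}u$, and lift the valuation via $V^{\prime}(p){:=}V(p){\times}\{0,1\}$, which is automatically closed in $(W^{\prime},{\leq^{\prime}})$. The delicate point is the definition of $R^{\prime}$: for every group $\alpha$, $R^{\prime}(\alpha)$ must be reflexive, symmetric and transitive, so that $(W^{\prime},{\leq^{\prime}},{R^{\prime}})$ is a partition, while at the same time satisfying the matching claim that, for every subformula $B$ of $A$ and every $t{\in}W$, $(W,{\leq},{R},V),t{\models}B$ iff $(W^{\prime},{\leq^{\prime}},{R^{\prime}},V^{\prime}),(t,j){\models}B$ for every (equivalently, some) $j$.

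My intended $R^{\prime}(\alpha)$ would relate $(t,j)$ and $(u,k)$ precisely when $t$ and $u$ lie in a common $\alpha$-cluster of the original frame~---~that is, in a common block of a suitable closure of $R(\alpha)$ under ${\leq}$- and ${\geq}$-shifts, large enough to be an equivalence relation~---~with the index coordinate serving a bookkeeping role. The principal obstacle will be exactly this semantic fidelity. Enlarging $R(\alpha)$ to an equivalence relation on $W^{\prime}$ tends to reach many additional states, which might spuriously satisfy ${\langle\alpha\rangle}B$ or refute ${\lbrack\alpha\rbrack}B$. The layering, and the choice of which pairs are $R^{\prime}(\alpha)$-related across layers, has to be tuned, much as in the proof of Proposition~\ref{proposition:all:tra:a}, so that ${\leq^{\prime}}{\circ}R^{\prime}(\alpha)$ and ${\geq^{\prime}}{\circ}R^{\prime}(\alpha)$ project back~---~through the first coordinate and up to heredity~---~to ${\leq}{\circ}R(\alpha)$ and ${\geq}{\circ}R(\alpha)$. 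Once the matching claim is proved by induction on $B$, we conclude that $(s_{0},0){\not\models}A$ in the new partition model, whence $A{\not\in}\Log({\mathcal C}_{\partition})$, which is the desired contraposition.
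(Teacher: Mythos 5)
Your reduction of the statement to the single inclusion $\Log({\mathcal C}_{\partition}){\subseteq}\Log({\mathcal C}_{\updown})$ and your overall strategy (transform a countermodel on an up and down reflexive and up and down symmetric frame into a countermodel on a partition, via a matching claim proved by induction on subformulas) agree with the paper. But the proposal stops exactly where the proof begins: you name the ``principal obstacle''~---~that closing $({\leq}{\circ}{R(\alpha)}{\circ}{\leq}){\cap}({\geq}{\circ}{R(\alpha)}{\circ}{\geq})$ up to an equivalence relation reaches too many states and wrecks the truth of ${\lbrack\alpha\rbrack}B$ and ${\langle\alpha\rangle}B$~---~and then say the relation ``has to be tuned'' without exhibiting any tuning. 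Moreover, the specific device you point to, the two-layer index set $\{0,1\}$ of Proposition~\ref{proposition:all:tra:a}, cannot work here: that construction secures transitivity by making compositions $R^{\prime}(\alpha){\circ}R^{\prime}(\alpha)$ empty (the condition $j{\not=}1$, $k{=}1$), which is flatly incompatible with the reflexivity and symmetry a partition requires.

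The idea you are missing is the paper's second stage. After first passing (on the same carrier set) to the reflexive and symmetric frame with $R^{\prime}(\alpha){=}({\leq}{\circ}{R(\alpha)}{\circ}{\leq}){\cap}({\geq}{\circ}{R(\alpha)}{\circ}{\geq})$, one does \emph{not} take any transitive closure. Instead one unravels: the index set is the (infinite) set $J$ of all choice functions $g:\ W^{\prime}{\times}{\wp^{\star}}(\Ag){\longrightarrow}W^{\prime}$ with $t{R^{\prime}(\beta)}g(t,\beta)$, and one puts $(t,g){R^{\prime\prime}(\beta)}(u,h)$ iff $t{R^{\prime}(\beta)}u$ and $\{t,g(t,\beta)\}{=}\{u,h(u,\beta)\}$. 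Each $R^{\prime\prime}(\beta)$-equivalence class then projects onto at most the two worlds $t$ and $g(t,\beta)$, so transitivity holds for free (together with reflexivity and symmetry inherited from $R^{\prime}$), while the back-and-forth claims~---~$t{\leq^{\prime}}u{R^{\prime}(\alpha)}v$ iff suitable $h,i{\in}J$ exist with $(t,g){\leq^{\prime\prime}}(u,h){R^{\prime\prime}(\alpha)}(v,i)$, and dually for $\geq$~---~show that no spurious $R^{\prime\prime}$-successors are created and the inductive matching claim goes through. Without this (or an equivalent) construction, your argument is a statement of intent rather than a proof.
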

\begin{proof}
Since ${\mathcal C}_{\updown}$ contains ${\mathcal C}_{\refsym}$ and ${\mathcal C}_{\refsym}$ contains ${\mathcal C}_{\partition}$, then it suffices to prove that $\Log({\mathcal C}_{\updown}){\supseteq}\Log({\mathcal C}_{\partition})$.
If $\Log({\mathcal C}_{\updown}){\not\supseteq}\Log({\mathcal C}_{\partition})$, there exists a formula $A$ such that ${\mathcal C}_{\updown}{\not\models}A$ and ${\mathcal C}_{\partition}{\models}A$.
Hence, there exists a frame $(W,{\leq},{R})$ in ${\mathcal C}_{\updown}$ such that $(W,{\leq},{R})
$\linebreak$
{\not\models}A$.
Thus, there exists a model $(W,{\leq},{R},V)$ based on $(W,{\leq},{R})$ such that $(W,{\leq},{R},
$\linebreak$
V){\not\models}A$.
Consequently, there exists $s{\in}W$ such that $s{\not\models}A$.
Let $(W^{\prime},{\leq^{\prime}},{R^{\prime}})$ be the frame defined by $W^{\prime}{=}W$, ${\leq^{\prime}}{=}{\leq}$ and for all groups $\alpha$, ${R^{\prime}(\alpha)}{=}{({\leq}{\circ}{R(\alpha)}{\circ}{\leq}){\cap}({\geq}{\circ}{R(\alpha)}{\circ}{\geq})}$.
Since $(W,{\leq},{R})$ is up and down reflexive and up and down symmetric, then obviously, $(W^{\prime},{\leq^{\prime}},{R^{\prime}})$ is reflexive and symmetric.
Let $V^{\prime}\ :\ \At{\longrightarrow}\wp(W^{\prime})$ be the valuation on $(W^{\prime},{\leq^{\prime}},{R^{\prime}})$ such that for all atoms $p$, $V^{\prime}(p){=}V(p)$.
\begin{claim}
For all formulas $B$ and for all $t{\in}W$, the following conditions are equivalent:
\begin{enumerate}
\item $t{\models}B$,
\item $(W^{\prime},{\leq^{\prime}},{R^{\prime}},V^{\prime}),t{\models}B$.
\end{enumerate}
\end{claim}
\begin{proofclaim}
By induction on $B$.
\medskip
\end{proofclaim}
\\
Since $s{\not\models}A$, then $(W^{\prime},{\leq^{\prime}},{R^{\prime}},V^{\prime}),s{\not\models}A$.
Let $J$ be the nonempty set of all functions $g:\ W^{\prime}{\times}{\wp^{\star}}(\Ag){\longrightarrow}W^{\prime}$ such that for all $t{\in}W^{\prime}$ and for all $\beta{\in}{\wp^{\star}}(\Ag)$, $t{R^{\prime}(\beta)}g(t,\beta)$.\footnote{In order to prove that this set is nonempty, it suffices to consider the function $f:\ W^{\prime}{\times}{\wp^{\star}}(\Ag){\longrightarrow}W^{\prime}$ such that for all $t{\in}W^{\prime}$ and for all $\beta{\in}{\wp^{\star}}(\Ag)$, $f(t,\beta){=}t$.}
Let $W^{\prime\prime}{=}W^{\prime}{\times}J$.
Let ${\leq^{\prime\prime}}$ be the preorder on $W^{\prime\prime}$ such that for all $(t,g),(u,h){\in}W^{\prime\prime}$, $(t,g){\leq^{\prime\prime}}(u,h)$ if and only if $t{\leq^{\prime}}u$.
Let ${R^{\prime\prime}}\ :\ {\wp^{\star}}(\Ag){\longrightarrow}\wp(W^{\prime\prime}{\times}W^{\prime\prime})$ be the function such that for all groups $\beta$ and for all $(t,g),(u,h){\in}W^{\prime\prime}$, $(t,g){R^{\prime\prime}(\beta)}(u,h)$ if and only if $t{R^{\prime}(\beta)}u$ and $\{t,g(t,\beta)\}{=}\{u,h(u,\beta)\}$.
\begin{claim}
The frame $(W^{\prime\prime},{\leq^{\prime\prime}},{R^{\prime\prime}})$ is a partition.
\end{claim}
\begin{proofclaim}
If the frame $(W^{\prime\prime},{\leq^{\prime\prime}},{R^{\prime\prime}})$ is not a partition, the frame $(W^{\prime\prime},{\leq^{\prime\prime}},{R^{\prime\prime}})$ is not reflexive, or the frame $(W^{\prime\prime},{\leq^{\prime\prime}},{R^{\prime\prime}})$ is not symmetric, or the frame $(W^{\prime\prime},{\leq^{\prime\prime}},{R^{\prime\prime}})$ is not transitive.
Leaving the $1$st and $2$nd cases to the reader, we only consider the $3$rd case.
Hence, there exists a group $\alpha$ and there exists $(t,g),(u,h),(v,i){\in}W^{\prime\prime}$ such that $(t,g){R^{\prime\prime}(\alpha)}(u,h)$, $(u,h){R^{\prime\prime}(\alpha)}(v,i)$ and not $(t,g){R^{\prime\prime}(\alpha)}(v,i)$.
Thus, $t{R^{\prime}(\alpha)}u$, $\{t,g(t,\alpha)\}{=}\{u,h(u,\alpha)\}$, $u{R^{\prime}(\alpha)}v$ and $\{u,h(u,\alpha)\}{=}\{v,i(v,\alpha)\}$.
Moreover, not $t
$\linebreak$
{R^{\prime}(\alpha)}v$, or $\{t,g(t,\alpha)\}{\not=}\{v,i(v,\alpha)\}$.
Since $\{t,g(t,\alpha)\}{=}\{u,h(u,\alpha)\}$ and $\{u,h(u,\alpha)\}
$\linebreak$
{=}\{v,i(v,\alpha)\}$, then $\{t,g(t,\alpha)\}{=}\{v,i(v,\alpha)\}$.
Since not $t{R^{\prime}(\alpha)}v$, or $\{t,g(t,\alpha)\}{\not=}\{v,
$\linebreak$
i(v,\alpha)\}$, then not $t{R^{\prime}(\alpha)}v$.
Since $\{t,g(t,\alpha)\}{=}\{u,h(u,\alpha)\}$, then $t{=}u$, or $t{=}h(u,\alpha)$.
In the former case, since $u{R^{\prime}(\alpha)}v$, then $t{R^{\prime}(\alpha)}v$: a contradiction.
In the latter case, since $\{u,h(u,\alpha)\}{=}\{v,i(v,\alpha)\}$, then $u{=}v$, or $h(u,\alpha){=}v$.
In the former case, since $(W^{\prime},{\leq^{\prime}},{R^{\prime}})$ is symmetric and $t{=}h(u,\alpha)$, then $t{R^{\prime}(\alpha)}v$: a contradiction.
In the latter case, since $(W^{\prime},{\leq^{\prime}},{R^{\prime}})$ is reflexive and $t{=}h(u,\alpha)$, then $t{R^{\prime}(\alpha)}v$: a contradiction.
\medskip
\end{proofclaim}
\begin{claim}
For all groups $\alpha$, for all $t,u,v{\in}W^{\prime}$ and for all $g{\in}J$, the following conditions are equivalent:
\begin{enumerate}
\item $t{\leq^{\prime}}u{R^{\prime}(\alpha)}v$,
\item there exists $h,i{\in}J$ such that $(t,g){\leq^{\prime\prime}}(u,h){R^{\prime\prime}(\alpha)}(v,i)$.
\end{enumerate}
\end{claim}
\begin{proofclaim}
$\mathbf{(1){\Rightarrow}(2)}:$
Suppose $t{\leq^{\prime}}u{R^{\prime}(\alpha)}v$.
Let $h,i{\in}J$ be such that for all $w{\in}W^{\prime}$ and for all groups $\beta$,
\begin{itemize}
\item if $w{\not=}u$, or $\beta{\not=}\alpha$ then $h(w,\beta){=}w$,
\item if $w{\not=}v$, or $\beta{\not=}\alpha$ then $i(w,\beta){=}w$,
\item if $w{=}u$ and $\beta{=}\alpha$ then $h(w,\beta){=}v$,
\item if $w{=}v$ and $\beta{=}\alpha$ then $i(w,\beta){=}u$.
\end{itemize}
Direct calculations would lead to the conclusion that $(t,g){\leq^{\prime\prime}}(u,h){R^{\prime\prime}(\alpha)}(v,i)$.
\\
$\mathbf{(2){\Rightarrow}(1)}:$
Suppose there exists $h,i{\in}J$ such that $(t,g){\leq^{\prime\prime}}(u,h){R^{\prime\prime}(\alpha)}(v,i)$.
Consequently, $t{\leq^{\prime}}u{R^{\prime}(\alpha)}v$.
\medskip
\end{proofclaim}
\begin{claim}
For all groups $\alpha$, for all $t,u,v{\in}W^{\prime}$ and for all $g{\in}J$, the following conditions are equivalent:
\begin{enumerate}
\item $t{\geq^{\prime}}u{R^{\prime}(\alpha)}v$,
\item there exists $h,i{\in}J$ such that $(t,g){\geq^{\prime\prime}}(u,h){R^{\prime\prime}(\alpha)}(v,i)$.
\end{enumerate}
\end{claim}
\begin{proofclaim}
Similar to the proof of the previous claim.
\medskip
\end{proofclaim}
\\
Let $V^{\prime\prime}\ :\ \At{\longrightarrow}\wp(W^{\prime\prime})$ be the valuation on $(W^{\prime\prime},{\leq^{\prime\prime}},{R^{\prime\prime}})$ such that for all atoms $p$, $V^{\prime\prime}(p){=}\{(t,g):\ (t,g){\in}W^{\prime\prime}$ is such that $t{\in}V^{\prime}(p)\}$.
\begin{claim}
For all formulas $B$ and for all $t{\in}W^{\prime}$, the following conditions are equivalent:
\begin{enumerate}
\item $(W^{\prime},{\leq^{\prime}},{R^{\prime}},V^{\prime}),t{\models}B$,
\item for all $g{\in}J$, $(W^{\prime\prime},{\leq^{\prime\prime}},{R^{\prime\prime}},V^{\prime\prime}),(t,g){\models}B$,
\item there exists $g{\in}J$ such that $(W^{\prime\prime},{\leq^{\prime\prime}},{R^{\prime\prime}},V^{\prime\prime}),(t,g){\models}B$.
\end{enumerate}
\end{claim}
\begin{proofclaim}
By induction on $B$.
\medskip
\end{proofclaim}
\\
Since $(W^{\prime},{\leq^{\prime}},{R^{\prime}},V^{\prime}),s{\not\models}A$, then $(W^{\prime\prime},{\leq^{\prime\prime}},{R^{\prime\prime}},V^{\prime\prime}),(s,f){\not\models}A$ where $f{\in}J$ is such that for all $t{\in}W^{\prime}$ and for all $\beta{\in}{\wp^{\star}}(\Ag)$, $f(t,\beta){=}t$.
Thus, $(W^{\prime\prime},{\leq^{\prime\prime}},{R^{\prime\prime}},V^{\prime\prime}){\not\models}A$.
Consequently, $(W^{\prime\prime},{\leq^{\prime\prime}},{R^{\prime\prime}}){\not\models}A$.
Since $(W^{\prime\prime},{\leq^{\prime\prime}},{R^{\prime\prime}})$ is a partition, then ${\mathcal C}_{\partition}{\not\models}A$: a contradiction.
\medskip
\end{proof}
\begin{proposition}\label{proposition:refsym:updown:partition:b}
$\Log({\mathcal C}_{\updown}^{\prestandard}){=}\Log({\mathcal C}_{\refsym}^{\prestandard}){=}\Log({\mathcal C}_{\partition}^{\prestandard})$.
\end{proposition}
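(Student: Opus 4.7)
The plan is to adapt the two-stage construction from the proof of Proposition~\ref{proposition:refsym:updown:partition:a} so that, starting from a prestandard frame in $\mathcal{C}_{\updown}$, both intermediate frames are prestandard. Since $\mathcal{C}_{\partition}^{\prestandard} \subseteq \mathcal{C}_{\refsym}^{\prestandard} \subseteq \mathcal{C}_{\updown}^{\prestandard}$, the inclusions $\Log(\mathcal{C}_{\updown}^{\prestandard}) \subseteq \Log(\mathcal{C}_{\refsym}^{\prestandard}) \subseteq \Log(\mathcal{C}_{\partition}^{\prestandard})$ are free, so it suffices to prove $\Log(\mathcal{C}_{\updown}^{\prestandard}) \supseteq \Log(\mathcal{C}_{\partition}^{\prestandard})$. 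Suppose a formula $A$ is refuted at some $s$ in a model based on a prestandard $(W,\leq,R)\in\mathcal{C}_{\updown}$; I will build a prestandard partition model refuting $A$.

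For the first stage, take $(W',\leq',R')$ exactly as in Proposition~\ref{proposition:refsym:updown:partition:a}, namely $W'=W$, ${\leq'}={\leq}$ and $R'(\alpha)=({\leq}\circ R(\alpha)\circ{\leq})\cap({\geq}\circ R(\alpha)\circ{\geq})$. This frame is reflexive and symmetric as before, and also prestandard: since $R(\alpha\cup\beta)\subseteq R(\alpha)$ one has $({\leq}\circ R(\alpha\cup\beta)\circ{\leq})\subseteq({\leq}\circ R(\alpha)\circ{\leq})$, and similarly for $\geq$; hence $R'(\alpha\cup\beta)\subseteq R'(\alpha)$, and symmetrically $R'(\alpha\cup\beta)\subseteq R'(\beta)$. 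The semantic claim from the first stage of the original argument is unaffected.

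The second stage is where the obstacle lies: the original definition $(t,g)R''(\alpha)(u,h)$ iff $tR'(\alpha)u$ and $\{t,g(t,\alpha)\}=\{u,h(u,\alpha)\}$ does not pass to subgroups, so $R''$ fails to be prestandard (one can easily concoct functions $g,h$ for which the equation holds at $\alpha\cup\beta$ but not at $\beta$). I therefore strengthen the relation: $(t,g)R''(\alpha)(u,h)$ iff $tR'(\alpha)u$ and $\{t,g(t,\gamma)\}=\{u,h(u,\gamma)\}$ holds for \emph{every} $\gamma\subseteq\alpha$. Prestandard-ness of $R''$ is then immediate from prestandard-ness of $R'$, since any $\gamma\subseteq\alpha$ satisfies $\gamma\subseteq\alpha\cup\beta$. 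Reflexivity, symmetry and transitivity of $R''$ still go through, because the witnessing casework in the transitivity proof of Proposition~\ref{proposition:refsym:updown:partition:a} only uses the equation at $\gamma=\alpha$, which is still in force.

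The remaining check, and the technical heart of the argument, is the semantic equivalence between $(W',\leq',R',V')$ and the new $(W'',\leq'',R'',V'')$ for the modified relation. The delicate direction of the induction for the modal clauses requires, for a given pair $uR'(\alpha)v$, to exhibit $h,i\in J$ with $(u,h)R''(\alpha)(v,i)$; with the strengthened condition, this is achieved by setting $h(u,\gamma)=v$ and $i(v,\gamma)=u$ for every $\gamma\subseteq\alpha$, which is legal precisely because prestandard-ness of $R'$ yields $uR'(\gamma)v$ for all such $\gamma$, and by using reflexivity of $R'$ to extend $h$ and $i$ consistently on the remaining arguments. Transferring the refutation of $A$ through both stages then produces a prestandard partition refuting $A$, contradicting $\mathcal{C}_{\partition}^{\prestandard}\models A$.
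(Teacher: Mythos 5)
Your proposal is correct and follows essentially the same route as the paper: the paper likewise reduces to $\Log({\mathcal C}_{\updown}^{\prestandard}){\supseteq}\Log({\mathcal C}_{\partition}^{\prestandard})$, reuses the two-stage construction of Proposition~\ref{proposition:refsym:updown:partition:a}, and makes exactly your modification, namely requiring $\{t,g(t,\gamma)\}{=}\{u,h(u,\gamma)\}$ for \emph{all} $\gamma{\subseteq}\beta$ in the definition of $R^{\prime\prime}(\beta)$ so that both intermediate frames remain prestandard. Your additional remarks on why the witnesses $h,i$ stay in $J$ (via prestandardness and reflexivity of $R^{\prime}$) correctly fill in details the paper leaves to the reader.
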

\begin{proof}
Since ${\mathcal C}_{\updown}^{\prestandard}$ contains ${\mathcal C}_{\refsym}^{\prestandard}$ and ${\mathcal C}_{\refsym}^{\prestandard}$ contains ${\mathcal C}_{\partition}^{\prestandard}$, then it suffices to prove that $\Log({\mathcal C}_{\updown}^{\prestandard}){\supseteq}\Log({\mathcal C}_{\partition}^{\prestandard})$.
If $\Log({\mathcal C}_{\updown}^{\prestandard}){\not\supseteq}\Log({\mathcal C}_{\partition}^{\prestandard})$, there exists a formula $A$ such that ${\mathcal C}_{\updown}^{\prestandard}{\not\models}A$ and ${\mathcal C}_{\partition}^{\prestandard}{\models}A$.
Hence, there exists a prestandard frame $(W,{\leq},{R})$ in ${\mathcal C}_{\updown}$ such that $(W,{\leq},{R}){\not\models}A$.
Thus, there exists a model $(W,{\leq},{R},V)$ based on $(W,{\leq},{R})$ such that $(W,{\leq},{R},V){\not\models}A$.
Consequently, there exists $s{\in}W$ such that $s{\not\models}A$.
The rest of the proof is done by imitating the argument developed in the proof of Proposition~\ref{proposition:refsym:updown:partition:a}, the only change being for all groups $\beta$, about the binary relation $R^{\prime\prime}(\beta)$ that should be now defined by saying that for all $(t,g),(u,h){\in}W^{\prime\prime}$, $(t,g){R^{\prime\prime}(\beta)}(u,h)$ if and only if $t{R^{\prime}(\beta)}u$ and for all groups $\gamma$, if $\gamma{\subseteq}\beta$ then $\{t,g(t,\gamma)\}{=}\{u,h(u,\gamma)\}$.\footnote{The reader may easily verify that since the frame $(W,{\leq},{R})$ is prestandard, then the frames $(W^{\prime},{\leq^{\prime}},{R^{\prime}})$ and $(W^{\prime\prime},{\leq^{\prime\prime}},{R^{\prime\prime}})$ constructed in this argument are prestandard.}
\medskip
\end{proof}
\begin{proposition}\label{proposition:refsym:updown:partition:c}
$\Log({\mathcal C}_{\updown}^{\standard}){=}\Log({\mathcal C}_{\refsym}^{\standard}){=}\Log({\mathcal C}_{\partition}^{\standard})$.
\end{proposition}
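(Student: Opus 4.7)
The plan is to establish the result by reducing it to propositions already proved. The inclusions ${\mathcal C}_{\partition}^{\standard}{\subseteq}{\mathcal C}_{\refsym}^{\standard}{\subseteq}{\mathcal C}_{\updown}^{\standard}$ yield immediately $\Log({\mathcal C}_{\updown}^{\standard}){\subseteq}\Log({\mathcal C}_{\refsym}^{\standard}){\subseteq}\Log({\mathcal C}_{\partition}^{\standard})$, so the whole claim reduces to showing $\Log({\mathcal C}_{\updown}^{\standard}){\supseteq}\Log({\mathcal C}_{\partition}^{\standard})$.

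For this inclusion, the cleanest approach is to chain three already-proved propositions. Given a formula $A{\in}\Log({\mathcal C}_{\partition}^{\standard})$, first Proposition~\ref{proposition:bounded:morphic:images:partition} gives $A{\in}\Log({\mathcal C}_{\partition}^{\prestandard})$; then Proposition~\ref{proposition:refsym:updown:partition:b} gives $A{\in}\Log({\mathcal C}_{\updown}^{\prestandard})$; and finally Proposition~\ref{proposition:bounded:morphic:images:updown} gives $A{\in}\Log({\mathcal C}_{\updown}^{\standard})$, as required. The analogous chain through ${\mathcal C}_{\refsym}^{\prestandard}$ using Propositions~\ref{proposition:bounded:morphic:images:refsym} and~\ref{proposition:refsym:updown:partition:b} also works.

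Alternatively, following the direct pattern set by Propositions~\ref{proposition:refsym:updown:partition:a} and~\ref{proposition:refsym:updown:partition:b}, one could attempt to imitate the argument of Proposition~\ref{proposition:refsym:updown:partition:a} starting from a standard up-and-down reflexive and up-and-down symmetric frame, verifying at each stage that standardness is preserved. The main obstacle here is that the intermediate frame $(W^{\prime},{\leq^{\prime}},{R^{\prime}})$ defined by $R^{\prime}(\alpha){=}({\leq}{\circ}R(\alpha){\circ}{\leq}){\cap}({\geq}{\circ}R(\alpha){\circ}{\geq})$ is only prestandard: since composition with $\leq$ and $\geq$ does not distribute over intersection, the reverse inclusion $R^{\prime}(\alpha){\cap}R^{\prime}(\beta){\subseteq}R^{\prime}(\alpha{\cup}\beta)$ may fail even when $R$ itself is standard. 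To circumvent this, one would need to interleave the lifting of Proposition~\ref{proposition:refsym:updown:partition:b} with the symmetric-difference standardization technique of Proposition~\ref{proposition:bounded:morphic:images:all} (the Boolean-ring coding of groups by functions in $I$), which is essentially what the chained-proposition argument accomplishes implicitly. For this reason the reduction route described in the second paragraph is preferable, and is the approach I would record as the proof.
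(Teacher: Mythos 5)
Your proof is correct and is essentially the paper's own argument: the paper also derives the statement by combining Propositions~\ref{proposition:bounded:morphic:images:updown}, \ref{proposition:bounded:morphic:images:refsym}, \ref{proposition:bounded:morphic:images:partition} and~\ref{proposition:refsym:updown:partition:b}, i.e. by passing from standard to prestandard frames, applying the prestandard-level identification, and passing back. Your remark on why a direct imitation of Proposition~\ref{proposition:refsym:updown:partition:a} would fail to preserve standardness is a sensible justification for preferring the reduction, but it is not needed beyond that.
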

\begin{proof}
By Propositions~\ref{proposition:bounded:morphic:images:updown}, \ref{proposition:bounded:morphic:images:refsym}, \ref{proposition:bounded:morphic:images:partition} and~\ref{proposition:refsym:updown:partition:b}.
\medskip
\end{proof}
Propositions~\ref{proposition:refsym:updown:partition:a}, \ref{proposition:refsym:updown:partition:b} and~\ref{proposition:refsym:updown:partition:c} say that our syntax and semantics cannot distinguish ${\mathcal C}_{\updown}$ from ${\mathcal C}_{\refsym}$ and from ${\mathcal C}_{\partition}$, ${\mathcal C}_{\updown}^{\prestandard}$ from ${\mathcal C}_{\refsym}^{\prestandard}$ and from ${\mathcal C}_{\partition}^{\prestandard}$ and ${\mathcal C}_{\updown}^{\standard}$ from ${\mathcal C}_{\refsym}^{\standard}$ and from ${\mathcal C}_{\partition}^{\standard}$.
\section{Axiomatization and canonical model construction}\label{section:axiomatization:completeness}
\paragraph{Intuitionistic modal logics}
An {\em intuitionistic modal logic}\/ is a set of formulas closed for uniform substitution, containing the standard axioms of $\IPL$ and axioms
%
%
\begin{description}
\item[$(\Axiom1)$] ${\lbrack\alpha\rbrack}p{\wedge}{\lbrack\alpha\rbrack}q{\rightarrow}{\lbrack\alpha\rbrack}(p{\wedge}q)$,
\item[$(\Axiom2)$] ${\langle\alpha\rangle}(p{\vee}q){\rightarrow}{\langle\alpha\rangle}p{\vee}{\langle\alpha\rangle}q$,
\item[$(\Axiom3)$] ${\lbrack\alpha\rbrack}{\top}$,
\item[$(\Axiom4)$] ${\neg}{\langle\alpha\rangle}{\bot}$,
\item[$(\Axiom5)$] ${\lbrack\alpha\rbrack}(p{\vee}q){\rightarrow}(({\langle\alpha\rangle}p{\rightarrow}{\lbrack\alpha\rbrack}q){\rightarrow}{\lbrack\alpha\rbrack}q)$,
\end{description}
and closed with respect to the standard rules of $\IPL$ and the rules
%
%
\begin{description}
\item[$(\Rule1)$] $\frac{p{\rightarrow}q}{{\lbrack\alpha\rbrack}p{\rightarrow}{\lbrack\alpha\rbrack}q}$,
\item[$(\Rule2)$] $\frac{p{\rightarrow}q}{{\langle\alpha\rangle}p{\rightarrow}{\langle\alpha\rangle}q}$,
\item[$(\Rule3)$] $\frac{{\langle\alpha\rangle}p{\rightarrow}q{\vee}{\lbrack\alpha\rbrack}(p{\rightarrow}r)}{{\langle\alpha\rangle}p{\rightarrow}q{\vee}{\langle\alpha\rangle}r}$.
\end{description}
Let $\L_{\all}$, $\L_{\doxastic}$, $\L_{\epistemic}$ and $\L_{\partition}$ be respectively the least intuitionistic modal logic, the least intuitionistic modal logic containing axiom
\begin{description}
\item[$(\Axiom6)$] $p{\rightarrow}{\lbrack\alpha\rbrack}p$,
\end{description}
the least intuitionistic modal logic containing axioms~$(\Axiom6)$ and
\begin{description}
\item[$(\Axiom7)$] ${\lbrack\alpha\rbrack}p{\rightarrow}{\neg}{\neg}{\langle\alpha\rangle}p$
\end{description}
and the least intuitionistic modal logic containing axioms
\begin{description}
\item[$(\Axiom8)$] ${\lbrack\alpha\rbrack}p{\rightarrow}p$,
\item[$(\Axiom9)$] $p{\rightarrow}{\langle\alpha\rangle}p$,
\item[$(\Axiom10)$] $p{\rightarrow}{\lbrack\alpha\rbrack}{\langle\alpha\rangle}p$,
\item[$(\Axiom11)$] ${\langle\alpha\rangle}{\lbrack\alpha\rbrack}p{\rightarrow}p$.
\end{description}
Let $\L_{\all}^{\Distrib}$, $\L_{\doxastic}^{\Distrib}$, $\L_{\epistemic}^{\Distrib}$ and $\L_{\partition}^{\Distrib}$ be respectively the least intuitionistic modal logic containing $\L_{\all}$ and axioms
\begin{description}
\item[$(\Axiom12)$] ${\lbrack\alpha\rbrack}p{\vee}{\lbrack\beta\rbrack}p{\rightarrow}{\lbrack\alpha{\cup}\beta\rbrack}p$,
\item[$(\Axiom13)$] ${\langle\alpha{\cup}\beta\rangle}p{\rightarrow}{\langle\alpha\rangle}p{\wedge}{\langle\beta\rangle}p$,
\end{description}
the least intuitionistic modal logic containing $\L_{\doxastic}$ and axioms~$(\Axiom12)$ and~$(\Axiom13)$, the least intuitionistic modal logic containing $\L_{\epistemic}$ and axioms~$(\Axiom12)$ and~$(\Axiom13)$ and the least intuitionistic modal logic containing $\L_{\partition}$ and axioms~$(\Axiom12)$ and~$(\Axiom13)$.
\paragraph{Theories}
Let $\L$ be an intuitionistic modal logic.\footnote{The tools and techniques considered in this paragraph will be used to show the completeness~---~with respect to their respective relational semantics~---~of the above-mentioned intuitionistic modal logics.}
A {\em $\L$-theory}\/ is a set of formulas containing $\L$ and closed under modus ponens.
Obviously, for all indexed families $(\Gamma_{i})_{i{\in}I}$ of $\L$-theories, $\bigcap\{\Gamma_{i}\ :\ i{\in}I\}$ is a $\L$-theory and for all nonempty chains $(\Gamma_{i})_{i{\in}I}$ of $\L$-theories, $\bigcup\{\Gamma_{i}\ :\ i{\in}I\}$ is a $\L$-theory.
As a result, there exists a least $\L$-theory (which is nothing but $\L$).
In other respect, obviously, $\Fo$ is the greatest $\L$-theory.
A $\L$-theory $\Gamma$ is {\em proper}\/ if ${\bot}{\not\in}\Gamma$.
Obviously, for all $\L$-theories $\Gamma$, $\Gamma$ is proper if and only if $\Gamma{\not=}\Fo$.
Moreover, for all proper $\L$-theories $\Gamma$ and for all groups $\alpha$, ${\langle\alpha\rangle}{\bot}{\not\in}\Gamma$.
A proper $\L$-theory $\Gamma$ is {\em prime}\/ if for all formulas $A,B$, if $A{\vee}B{\in}\Gamma$ then $A{\in}\Gamma$, or $B{\in}\Gamma$.
For all $\L$-theories $\Gamma$ and for all sets $\Delta$ of formulas, let $\Gamma{+}\Delta{=}\{B{\in}\Fo\ :$ there exists $m{\in}\N$ and there exists $A_{1},\ldots,A_{m}{\in}\Delta$ such that $A_{1}{\wedge}\ldots{\wedge}A_{m}{\rightarrow}B{\in}\Gamma\}$.
For all $\L$-theories $\Gamma$ and and for all formulas $A$, we write $\Gamma{+}A$ instead of $\Gamma{+}\{A\}$.
For all $\L$-theories $\Gamma$ and for all groups $\alpha$, let ${\lbrack\alpha\rbrack}\Gamma{=}\{A{\in}\Fo\ :\ {\lbrack\alpha\rbrack}A{\in}\Gamma\}$.
Using the standard axioms of $\IPL$ and the standard rules of $\IPL$, the reader may easily verify that for all $\L$-theories $\Gamma$ and for all sets $\Delta$ of formulas,
\begin{enumerate}
\item $\Gamma{+}\Delta$ is a $\L$-theory,
\item $\Gamma{\subseteq}\Gamma{+}\Delta$,
\item $\Delta{\subseteq}\Gamma{+}\Delta$,
\item for all $\L$-theories $\Lambda$, if $\Gamma{\subseteq}\Lambda$ and $\Delta{\subseteq}\Lambda$ then $\Gamma{+}\Delta{\subseteq}\Lambda$,
\item $\Gamma{+}\Delta$ is proper if and only if for all $m{\in}\N$ and for all $A_{1},\ldots,A_{m}{\in}\Delta$, ${\neg}(A_{1}{\wedge}\ldots{\wedge}
$\linebreak$
A_{m}){\not\in}\Gamma$.
\end{enumerate}
Moreover, using axioms $(\Axiom1)$ and $(\Axiom3)$ and rule $(\Rule1)$, the reader may easily verify that for all $\L$-theories $\Gamma$ and for all groups $\alpha$, ${\lbrack\alpha\rbrack}\Gamma$ is a $\L$-theory.
\paragraph{Existence properties}
Let $\L$ be an intuitionistic modal logic.
The following propositions will be used in the canonical model construction.
See~\cite[Lemmas~$21$, $22$, $23$ and~$24$]{Balbiani:Gencer:APAL:2024} for their proofs in the mono-agent case.
\begin{proposition}\label{lemma:prime:proper:for:implication}
Let $\Gamma$ be a prime $\L$-theory.
Let $B,C$ be formulas.
If $B{\rightarrow}C{\not\in}\Gamma$ then there exists a prime $\L$-theory $\Delta$ such that $\Gamma{\subseteq}\Delta$, $B{\in}\Delta$ and $C{\not\in}\Delta$.
\end{proposition}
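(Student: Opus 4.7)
The plan is a standard Lindenbaum-style extension argument, adapted to the intuitionistic setting where primality replaces maximality.

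First I would set $\Delta_0{=}\Gamma{+}B$ and check that $\Delta_0$ is a $\L$-theory with $\Gamma{\subseteq}\Delta_0$, $B{\in}\Delta_0$ and $C{\not\in}\Delta_0$. The first three properties follow immediately from items~(1)--(3) in the list of properties of $\Gamma{+}\Delta$ recalled in the preamble. For $C{\not\in}\Delta_0$: if $C$ were in $\Gamma{+}B$, the definition of $+$ would force $B{\rightarrow}C{\in}\Gamma$, contradicting the hypothesis.

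Next, since $\Fo$ is countable, I would enumerate $\Fo{=}\{A_{0},A_{1},\ldots\}$ and define a chain of $\L$-theories inductively by $\Delta_{n+1}{=}\Delta_{n}{+}A_{n}$ if $C{\not\in}\Delta_{n}{+}A_{n}$, and $\Delta_{n+1}{=}\Delta_{n}$ otherwise. Using item~(1) for $\Gamma{+}\Delta$, each $\Delta_{n}$ is a $\L$-theory with $\Gamma{\subseteq}\Delta_{n}$, $B{\in}\Delta_{n}$ and $C{\not\in}\Delta_{n}$. Setting $\Delta{=}\bigcup_{n}\Delta_{n}$, the preamble's remark that unions of nonempty chains of $\L$-theories are $\L$-theories gives that $\Delta$ is a $\L$-theory, clearly with $\Gamma{\subseteq}\Delta$, $B{\in}\Delta$ and $C{\not\in}\Delta$ (the latter because $C{\not\in}\Delta_{n}$ for every $n$).

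The heart of the argument is primality, and this is the step I expect to be the main obstacle. Suppose $D{\vee}E{\in}\Delta$ with $D{\not\in}\Delta$ and $E{\not\in}\Delta$. Let $n,m$ be the indices with $A_{n}{=}D$ and $A_{m}{=}E$. Since $D{\not\in}\Delta$ we have $D{\not\in}\Delta_{n+1}$, so the construction did not take the ``$+D$'' branch at step $n{+}1$, hence $C{\in}\Delta_{n}{+}D$, so by the definition of $+$ we get $D{\rightarrow}C{\in}\Delta_{n}{\subseteq}\Delta$. Symmetrically $E{\rightarrow}C{\in}\Delta$. The $\IPL$-tautology $(D{\rightarrow}C){\wedge}(E{\rightarrow}C){\rightarrow}((D{\vee}E){\rightarrow}C)$ together with closure of $\Delta$ under modus ponens yields $(D{\vee}E){\rightarrow}C{\in}\Delta$ and then $C{\in}\Delta$, contradicting $C{\not\in}\Delta$. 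Thus $\Delta$ is the required prime $\L$-theory.
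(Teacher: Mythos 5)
Your proof is correct. The paper does not actually spell out a proof of this proposition (it defers to Lemmas~$21$--$24$ of the cited Balbiani--Gencer paper), but the extension arguments it does carry out in full~---~e.g.\ in the proof of Item~$2$ of Proposition~\ref{proposition:A8:A9:canonical:frame}~---~obtain the required prime theory as a Zorn-maximal element of the family of $\L$-theories extending $\Gamma{+}B$ and omitting the forbidden formula, and then derive primality from maximality exactly as you do: the impossibility of adding $D$ forces $D{\rightarrow}C$ into the theory, likewise for $E$, and disjunction elimination in $\IPL$ then puts $C$ back in. Your $\omega$-enumeration of the countable set $\Fo$ is an equivalent, slightly more elementary substitute for Zorn's Lemma; the initial step ($C{\in}\Gamma{+}B$ would force $B{\rightarrow}C{\in}\Gamma$) and the primality argument coincide with the paper's pattern. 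One small point to add: the paper's definition of ``prime'' presupposes ``proper'', so you should record explicitly that $\Delta$ is proper~---~immediate, since $C{\not\in}\Delta$ gives $\Delta{\not=}\Fo$.
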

\begin{proposition}\label{lemma:prime:proper:for:square}
Let $\Gamma$ be a prime $\L$-theory.
Let $B$ be a formula and $\beta$ be a group.
If ${\lbrack\beta\rbrack}B{\not\in}\Gamma$ then there exists prime $\L$-theories $\Delta,\Lambda$ such that $\Gamma{\subseteq}\Delta$, ${\lbrack\beta\rbrack}\Delta{\subseteq}\Lambda$, ${\langle\beta\rangle}\Lambda{\subseteq}\Delta$ and $B{\not\in}\Lambda$.
\end{proposition}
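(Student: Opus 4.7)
The plan follows the mono-agent approach of~\cite{Balbiani:Gencer:APAL:2024}, adapted to groups. I proceed in three phases.

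\textbf{Phase 1 (auxiliary saturation).} I first build (not yet prime) $\L$-theories $T$ and $U$ satisfying the four target conditions. Put $T_{0}{=}\Gamma$, $U_{0}{=}{\lbrack\beta\rbrack}T_{0}$, and inductively $T_{n+1}{=}T_{n}{+}\{{\langle\beta\rangle}A\ :\ A{\in}U_{n}\}$, $U_{n+1}{=}{\lbrack\beta\rbrack}T_{n+1}$. The unions $T{=}\bigcup_{n}T_{n}$ and $U{=}\bigcup_{n}U_{n}$ are $\L$-theories with $\Gamma{\subseteq}T$, ${\lbrack\beta\rbrack}T{\subseteq}U$ and ${\langle\beta\rangle}U{\subseteq}T$ by construction. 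The delicate point is ${\lbrack\beta\rbrack}B{\not\in}T$, proved by induction on $n$. Assuming ${\lbrack\beta\rbrack}B{\in}T_{n+1}$, there exist $A_{1},\ldots,A_{k}{\in}U_{n}$ with ${\langle\beta\rangle}A_{1}{\wedge}\ldots{\wedge}{\langle\beta\rangle}A_{k}{\rightarrow}{\lbrack\beta\rbrack}B{\in}T_{n}$. Using axiom $(\Axiom1)$ we gather ${\lbrack\beta\rbrack}(A_{1}{\wedge}\ldots{\wedge}A_{k}){\in}T_{n}$; rule $(\Rule2)$ yields ${\langle\beta\rangle}(A_{1}{\wedge}\ldots{\wedge}A_{k}){\rightarrow}{\langle\beta\rangle}A_{i}{\in}\L$ for every $i$, whence ${\langle\beta\rangle}(A_{1}{\wedge}\ldots{\wedge}A_{k}){\rightarrow}{\lbrack\beta\rbrack}B{\in}T_{n}$. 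Axiom $(\Axiom5)$ with $p{=}A_{1}{\wedge}\ldots{\wedge}A_{k}$ and $q{=}B$, together with rule $(\Rule1)$ applied to ${\lbrack\beta\rbrack}(A_{1}{\wedge}\ldots{\wedge}A_{k}){\in}T_{n}$, then forces ${\lbrack\beta\rbrack}B{\in}T_{n}$, contradicting the inductive hypothesis.

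\textbf{Phase 2 (maximality).} I apply Zorn's lemma to the family of pairs $(T',U')$ of $\L$-theories extending $(T,U)$ componentwise and still satisfying the four conditions of the statement. This family is nonempty and closed under unions of chains, so admits a maximal element $(\Delta,\Lambda)$.

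\textbf{Phase 3 (primality).} It remains to show $\Delta$ and $\Lambda$ are prime. For $\Delta$, assume $A_{0}{\vee}A_{1}{\in}\Delta$ with $A_{0},A_{1}{\not\in}\Delta$. By maximality, running the closure of Phase~1 starting from $(\Delta{+}A_{i},\Lambda)$ must, for both $i{=}0,1$, eventually force $B$ into the $\Lambda$-component. Extracting witness derivations and using $A_{0}{\vee}A_{1}{\in}\Delta$, I combine them---through iterated applications of $(\Axiom1)$, $(\Axiom5)$ and $(\Rule1)$, $(\Rule2)$---into a derivation of ${\lbrack\beta\rbrack}B$ inside $\Delta$, contradicting $B{\not\in}\Lambda$. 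Primality of $\Lambda$ is handled symmetrically, this time invoking axiom $(\Axiom2)$ and the P\v{r}enosil rule $(\Rule3)$ to distribute ${\langle\beta\rangle}$ correctly over the disjunction.

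The main obstacle is Phase~3. The crux calculation of Phase~1 already shows a single closure step cannot accidentally introduce ${\lbrack\beta\rbrack}B$, but in Phase~3 I must splice two \emph{parallel} such computations (one per disjunct) into one valid derivation from the disjunction in $\Delta$. Rule $(\Rule3)$, engineered for exactly this kind of bookkeeping in the P\v{r}enosil-style diamond semantics, is precisely what makes the splicing go through.
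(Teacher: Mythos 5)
The paper itself does not prove this proposition; it delegates to \cite[Lemmas~21--24]{Balbiani:Gencer:APAL:2024}, so the comparison can only be against the standard argument. Your Phase~1 is correct and well executed: the iterated saturation $T_{n+1}{=}T_{n}{+}{\langle\beta\rangle}U_{n}$, $U_{n+1}{=}{\lbrack\beta\rbrack}T_{n+1}$, and the preservation of ${\lbrack\beta\rbrack}B{\not\in}T_{n}$ via $(\Axiom1)$, $(\Rule1)$, $(\Rule2)$ and $(\Axiom5)$ is exactly the right use of the fifth axiom, and it settles the nonemptiness of the poset. Phase~2 is routine. The problem is Phase~3, which is where essentially all of the difficulty of this lemma lives, and your description of it does not survive scrutiny.

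Concretely, two things go wrong. For the primality of $\Delta$: when you run the saturation from $(\Delta{+}A_{0},\Lambda)$ and $(\Delta{+}A_{1},\Lambda)$ and try to splice the two witness derivations, the natural induction (``anything in both saturations lies in $\Delta$, resp.\ $\Lambda$'') forces you, already at the first level involving diamonds, to split a disjunction of the form ${\langle\beta\rangle}E^{0}{\vee}{\langle\beta\rangle}E^{1}$ inside $\Delta$ --- which presupposes the primality of $\Delta$, the very thing being proved; your sketch does not say how this circularity is broken. For the primality of $\Lambda$: the mixed case is not addressed and is a genuine obstruction. If $D{\vee}E{\in}\Lambda$ with $D{\rightarrow}F{\in}\Lambda$, ${\langle\beta\rangle}F{\not\in}\Delta$ and $E{\rightarrow}B{\in}\Lambda$, then $F{\vee}B{\in}\Lambda$, and $(\Axiom2)$ plus primality of $\Delta$ only yields ${\langle\beta\rangle}B{\in}\Delta$, which is not a contradiction (semantically, ${\langle\beta\rangle}B$ at $\Delta$ is perfectly compatible with $B$ failing at $\Lambda$). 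A pair that is merely Zorn-maximal in your poset can realize this configuration; escaping it requires enlarging $\Delta$ itself --- via $(\Axiom5)$ together with the implication existence property (Proposition~\ref{lemma:prime:proper:for:implication}) one shows ${\langle\beta\rangle}F{\rightarrow}{\lbrack\beta\rbrack}B{\not\in}\Delta$ and jumps to a larger prime theory containing ${\langle\beta\rangle}F$ --- and this enlargement must be interleaved fairly with the extensions of $\Lambda$ over an $\omega$-stage construction. Zorn's lemma on pairs alone does not license this. Finally, $(\Rule3)$ is not the right tool here: it is what drives the diamond existence property (Proposition~\ref{lemma:prime:proper:for:lozenge}), not the primality of $\Lambda$ in the box case, which rests on $(\Axiom2)$, $(\Axiom5)$ and the primality of $\Delta$. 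As it stands, Phase~3 is a statement of intent rather than a proof.
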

\begin{proposition}\label{lemma:prime:proper:for:lozenge}
Let $\Gamma$ be a prime $\L$-theory.
Let $B$ be a formula and $\beta$ be a group.
If ${\langle\beta\rangle}B{\in}\Gamma$ then there exists prime $\L$-theories $\Delta,\Lambda$ such that $\Gamma{{\supseteq}}\Delta$, ${\lbrack\beta\rbrack}\Delta{\subseteq}\Lambda$, ${\langle\beta\rangle}\Lambda{\subseteq}\Delta$ and $B{\in}\Lambda$.
\end{proposition}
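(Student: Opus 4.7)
The plan is to build the pair $(\Delta,\Lambda)$ by a simultaneous Lindenbaum-style construction on pairs of theories, analogous to the mono-agent case cited by the author. Call a pair $(\Delta',\Lambda')$ of $\L$-theories \emph{compatible} if $\Delta'\subseteq\Gamma$, $B\in\Lambda'$, $[\beta]\Delta'\subseteq\Lambda'$, and $\langle\beta\rangle\Lambda'\subseteq\Delta'$. The goal is to produce a maximal compatible pair via Zorn's lemma and then verify that both components are prime.

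First I would exhibit a compatible pair by iterated closure starting from $(\L+\langle\beta\rangle B,\L+B)$: alternately extend $\Lambda'$ by $\{C:[\beta]C\in\Delta'\}$ and $\Delta'$ by $\{\langle\beta\rangle C:C\in\Lambda'\}$, and take the union over stages. The crucial invariant to maintain is $\Delta'\subseteq\Gamma$. Whenever a new generator $\langle\beta\rangle C$ is added to $\Delta'$ because $C$ has appeared in $\Lambda'$, the formula $C$ is derivable in $\L$ from $B$ together with finitely many formulas of the form $[\beta]D$ with $\langle\beta\rangle B\to\langle\beta\rangle D\in\L$ established at an earlier stage. Rule~$(\Rule3)$, applied along this derivation in the shape $\langle\beta\rangle p\to[\beta](p\to r)\vdash\langle\beta\rangle p\to\langle\beta\rangle r$, converts the whole chain into a proof of $\langle\beta\rangle B\to\langle\beta\rangle C\in\L$, whence $\langle\beta\rangle C\in\Gamma$. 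Axioms~$(\Axiom1)$, $(\Axiom2)$ and rules~$(\Rule1)$, $(\Rule2)$ handle conjunctive combinations and box/diamond monotonicity. Chain-completeness of compatibility under componentwise unions of theories is immediate, so Zorn's lemma yields a maximal compatible pair $(\Delta,\Lambda)$.

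For primality of $\Lambda$, suppose $A_1\vee A_2\in\Lambda$ with $A_1,A_2\notin\Lambda$. By maximality, for each $i$ the extension $\Lambda+A_i$ cannot sit inside any compatible pair; since $B$ and $[\beta]\Delta\subseteq\Lambda\subseteq\Lambda+A_i$ are inherited, the obstruction must be that no extension $\Delta'$ of $\Delta$ inside $\Gamma$ satisfies $\langle\beta\rangle(\Lambda+A_i)\subseteq\Delta'$, so there is $C_i$ with $A_i\to C_i\in\Lambda$ and $\langle\beta\rangle C_i\notin\Gamma$. The IPL tautology $(A_1\vee A_2)\wedge(A_1\to C_1)\wedge(A_2\to C_2)\to C_1\vee C_2$ yields $C_1\vee C_2\in\Lambda$, so $\langle\beta\rangle(C_1\vee C_2)\in\Delta\subseteq\Gamma$, and axiom~$(\Axiom2)$ with primality of $\Gamma$ gives $\langle\beta\rangle C_j\in\Gamma$ for some $j$, a contradiction. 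For primality of $\Delta$, suppose $A_1\vee A_2\in\Delta$ with $A_1,A_2\notin\Delta$. Since $\Delta\subseteq\Gamma$ and $\Gamma$ is prime, some $A_j\in\Gamma$, hence $\Delta+A_j\subseteq\Gamma$; maximality forces $[\beta](\Delta+A_j)\not\subseteq\Lambda$, yielding $C_j$ with $A_j\to[\beta]C_j\in\Delta$ and $C_j\notin\Lambda$ for both $j$. Combining with $A_1\vee A_2\in\Delta$ gives $[\beta]C_1\vee[\beta]C_2\in\Delta$, hence $[\beta](C_1\vee C_2)\in\Delta$ via $(\Rule1)$ applied to the IPL theorems $C_i\to C_1\vee C_2$, so $C_1\vee C_2\in[\beta]\Delta\subseteq\Lambda$; primality of $\Lambda$ (already established) contradicts $C_1,C_2\notin\Lambda$.

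The main obstacle is the opening step of showing that the iterated closure remains inside $\Gamma$ in the first coordinate; this is where rule~$(\Rule3)$ is indispensable, since it is the only inference capable of propagating $\langle\beta\rangle$-information through nested $[\beta]$-implications in $\L$. Once a compatible pair is in hand, the Zorn argument and the two primality arguments—each using $(\Axiom2)$ plus primality of $\Gamma$ on the $\Lambda$-side and $(\Rule1)$ plus primality of $\Lambda$ on the $\Delta$-side—follow a standard template.
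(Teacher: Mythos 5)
The paper does not actually prove this proposition---it defers to Lemmas~21--24 of the cited Balbiani--Gencer preprint---so your attempt can only be judged on its own terms. Your architecture (compatible pairs ordered componentwise, Zorn, primality extracted from the maximality obstructions together with $(\Axiom2)$ and $(\Rule1)$) is the right kind of argument, and you correctly identify that rule $(\Rule3)$, in the instance with $q{=}{\bot}$, is what keeps the first coordinate inside $\Gamma$ during the closure. But your description of that step is off: the formulas imported from $\Delta'$ into $\Lambda'$ are the $D$ with ${\lbrack\beta\rbrack}D{\in}\Delta'$, not formulas ``of the form ${\lbrack\beta\rbrack}D$'', and the invariant one actually has to maintain is that every $C$ entering the second coordinate satisfies ${\langle\beta\rangle}B{\rightarrow}{\lbrack\beta\rbrack}(B{\rightarrow}C){\in}\L$ (whence ${\langle\beta\rangle}B{\rightarrow}{\langle\beta\rangle}C{\in}\L$ by $(\Rule3)$), not ${\langle\beta\rangle}B{\rightarrow}{\langle\beta\rangle}D{\in}\L$ for the imported $D$. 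This part is repairable.

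The genuine gaps are in the primality arguments. For $\Lambda$: the clean consequence of maximality comes from the candidate pair $(\Delta,\Lambda{+}A_i)$, whose only possibly failing compatibility condition is ${\langle\beta\rangle}(\Lambda{+}A_i){\subseteq}\Delta$; this yields $C_i$ with $A_i{\rightarrow}C_i{\in}\Lambda$ and ${\langle\beta\rangle}C_i{\not\in}\Delta$ --- not ${\langle\beta\rangle}C_i{\not\in}\Gamma$. Your stronger conclusion rests on the assertion that the only obstruction to placing $\Lambda{+}A_i$ inside some compatible pair is the first coordinate leaving $\Gamma$; that is unjustified, because enlarging $\Delta$ to absorb ${\langle\beta\rangle}(\Lambda{+}A_i)$ can make new formulas ${\lbrack\beta\rbrack}F$ derivable in the first coordinate and thereby break ${\lbrack\beta\rbrack}\Delta'{\subseteq}\Lambda'$, forcing a further re-closure whose failure point need not have the shape you describe. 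With only ${\langle\beta\rangle}C_i{\not\in}\Delta$, the final step ${\langle\beta\rangle}C_1{\vee}{\langle\beta\rangle}C_2{\in}\Delta$ requires primality of $\Delta$, not of $\Gamma$. But your proof of primality of $\Delta$ in turn (i) invokes primality of $\Lambda$, closing a circle, and (ii) claims the obstruction formula $C_j$ ``for both $j$'', whereas the argument producing it (compatibility of $(\Delta{+}A_j,\Lambda)$ failing only at ${\lbrack\beta\rbrack}(\Delta{+}A_j){\subseteq}\Lambda$) is available only for a disjunct with $A_j{\in}\Gamma$, and primality of $\Gamma$ guarantees that for just one $j$. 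So neither primality proof closes as written; breaking this interdependence --- for instance by first building a prime $\Delta{\subseteq}\Gamma$ containing ${\langle\beta\rangle}B$ and closed under ``${\lbrack\beta\rbrack}(B{\rightarrow}C){\in}\Delta$ implies ${\langle\beta\rangle}C{\in}\Delta$'' (non-emptiness of that family is exactly where $(\Rule3)$ enters), and only then running Zorn over the $\Lambda$'s with ${\lbrack\beta\rbrack}\Delta{\subseteq}\Lambda$ and ${\langle\beta\rangle}\Lambda{\subseteq}\Delta$ as in the paper's proof of Proposition~\ref{proposition:A8:A9:canonical:frame} --- is precisely the content of the lemma that is still missing from your argument.
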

\begin{proposition}\label{lemma:almost:completeness}
Let $A$ be a formula.
If $A{\not\in}\L$ then there exists a prime $\L$-theory $\Gamma$ such that $A{\not\in}\Gamma$.
\end{proposition}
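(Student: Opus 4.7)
The plan is to carry out a standard Lindenbaum-style construction, relying on the closure properties of $\L$-theories summarized in the paragraph on Theories.

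First I would form the set $\mathcal{S}$ of all $\L$-theories that do not contain $A$. Since $A{\not\in}\L$ and $\L$ itself is the least $\L$-theory, $\L{\in}\mathcal{S}$, so $\mathcal{S}$ is nonempty. The paper has already observed that every nonempty chain of $\L$-theories has its union as a $\L$-theory; and a union of theories not containing $A$ clearly does not contain $A$ either. Hence $\mathcal{S}$ is inductively ordered by inclusion, and Zorn's Lemma yields a $\L$-theory $\Gamma{\in}\mathcal{S}$ maximal for this property.

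Next I would verify that $\Gamma$ is proper. If ${\bot}{\in}\Gamma$, then since ${\bot}{\rightarrow}A$ is a standard axiom of $\IPL$ and hence belongs to $\L{\subseteq}\Gamma$, closure under modus ponens would give $A{\in}\Gamma$, contradicting $\Gamma{\in}\mathcal{S}$. Thus $\Gamma$ is proper.

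The heart of the argument is showing $\Gamma$ is prime. Suppose toward a contradiction that $B{\vee}C{\in}\Gamma$ while $B{\not\in}\Gamma$ and $C{\not\in}\Gamma$. Using properties (2) and (3) of the ``${+}$'' operation, $\Gamma{+}B$ and $\Gamma{+}C$ are $\L$-theories strictly extending $\Gamma$, so by maximality both fall outside $\mathcal{S}$, i.e., $A{\in}\Gamma{+}B$ and $A{\in}\Gamma{+}C$. Unwinding the definition of ${+}$ with singletons yields $B{\rightarrow}A{\in}\Gamma$ and $C{\rightarrow}A{\in}\Gamma$. Since $(B{\rightarrow}A){\wedge}(C{\rightarrow}A){\rightarrow}((B{\vee}C){\rightarrow}A)$ is a theorem of $\IPL$ and $B{\vee}C{\in}\Gamma$, closure under modus ponens forces $A{\in}\Gamma$, a contradiction.

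This is a routine Lindenbaum construction and I do not expect any genuine obstacle; the only delicate point is ensuring that the strict extensions $\Gamma{+}B$ and $\Gamma{+}C$ are themselves $\L$-theories before invoking maximality, but this is guaranteed by property (1) of the ${+}$ operation stated earlier.
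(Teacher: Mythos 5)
Your proof is correct and follows exactly the standard Lindenbaum--Zorn construction that the paper relies on (it defers the proof to Lemma~24 of the cited Balbiani--Gencer reference, and the same maximality-then-properness-then-primeness pattern appears verbatim in the paper's own proof of Proposition~\ref{proposition:A8:A9:canonical:frame}, Item~2). No gaps: the use of properties (1)--(3) of the $+$ operation to justify that $\Gamma{+}B$ and $\Gamma{+}C$ are theories strictly extending $\Gamma$, and the $\IPL$ step from $B{\rightarrow}A$ and $C{\rightarrow}A$ to $(B{\vee}C){\rightarrow}A$, are exactly what is needed.
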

\paragraph{Canonical model construction}
Let $\L$ be an intuitionistic modal logic.
The {\em canonical frame of $\L$}\/ is the relational structure $(W_{\L},{\leq_{\L}},{R_{\L}})$ where $W_{\L}$ is the nonempty set of all prime $\L$-theories, $\leq_{\L}$ is the preorder on $W_{\L}$ such that for all $\Gamma,\Delta{\in}W_{\L}$, $\Gamma{\leq_{\L}}\Delta$ if and only if $\Gamma{\subseteq}\Delta$ and ${R_{\L}}\ :\ {\wp^{\star}}(\Ag){\longrightarrow}\wp(W_{\L}{\times}W_{\L})$ is the function such that for all groups $\alpha$ and for all $\Gamma,\Delta{\in}W_{\L}$, $\Gamma{R_{\L}(\alpha)}\Delta$ if and only if ${\lbrack\alpha\rbrack}\Gamma{\subseteq}\Delta$ and ${\langle\alpha\rangle}\Delta{\subseteq}\Gamma$.
The following proposition will be used in the canonical model construction.
See~\cite[Lemmas~$30$, $31$, $32$ and~$33$]{Balbiani:Gencer:APAL:2024} for a proof of its Item~$3$ in the mono-agent case.
\begin{proposition}\label{proposition:A8:A9:canonical:frame}
\begin{enumerate}
\item If $\L$ contains $(\Axiom6)$ then $(W_{\L},{\leq_{\L}},{R_{\L}})$ is doxastic,
\item If $\L$ contains $(\Axiom6)$ and $(\Axiom7)$ then $(W_{\L},{\leq_{\L}},{R_{\L}})$ is epistemic,
\item If $\L$ contains $(\Axiom8)$, $(\Axiom9)$, $(\Axiom10)$ and $(\Axiom11)$ then $(W_{\L},{\leq_{\L}},{R_{\L}})$ is up and down reflexive and up and down symmetric.
\end{enumerate}
\end{proposition}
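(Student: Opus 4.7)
I would treat the three items in turn, exploiting in each case that every substitution instance of an axiom of $\L$ lies in every prime $\L$-theory (by closure of $\L$ under uniform substitution and modus ponens), together with Proposition~\ref{lemma:prime:proper:for:square} where needed.

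For Item~$1$, let $\Gamma,\Delta\in W_{\L}$ with $\Gamma{R_{\L}(\alpha)}\Delta$ and let $A\in\Gamma$. The instance $A\to{\lbrack\alpha\rbrack}A$ of $(\Axiom6)$ lies in $\L\subseteq\Gamma$, so ${\lbrack\alpha\rbrack}A\in\Gamma$ by modus ponens, whence $A\in{\lbrack\alpha\rbrack}\Gamma\subseteq\Delta$. Thus $\Gamma\subseteq\Delta$, i.e.\ $\Gamma\leq_{\L}\Delta$, as desired.

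For Item~$2$, Item~$1$ already gives that $(W_{\L},\leq_{\L},R_{\L})$ is doxastic, so it remains to produce, for each $\Gamma\in W_{\L}$ and each group $\alpha$, a $\Delta\in W_{\L}$ with $\Gamma\leq_{\L}{\circ}R_{\L}(\alpha)\Delta$. The key observation is that ${\lbrack\alpha\rbrack}{\bot}\notin\Gamma$: otherwise $(\Axiom7)$ would force ${\neg}{\neg}{\langle\alpha\rangle}{\bot}\in\Gamma$, while $(\Axiom4)$ always puts ${\neg}{\langle\alpha\rangle}{\bot}$ in $\Gamma$, contradicting properness. Applying Proposition~\ref{lemma:prime:proper:for:square} to ${\lbrack\alpha\rbrack}{\bot}\notin\Gamma$ then produces prime $\L$-theories $\Delta^{\prime},\Delta$ with $\Gamma\subseteq\Delta^{\prime}$, ${\lbrack\alpha\rbrack}\Delta^{\prime}\subseteq\Delta$ and ${\langle\alpha\rangle}\Delta\subseteq\Delta^{\prime}$, which is precisely $\Gamma\leq_{\L}\Delta^{\prime}R_{\L}(\alpha)\Delta$.

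For Item~$3$, I would actually prove the stronger statement that $R_{\L}(\alpha)$ is reflexive and symmetric, which at once yields both up and down properties by taking all $\leq$-witnesses to be identities. Reflexivity $\Gamma R_{\L}(\alpha)\Gamma$ breaks into ${\lbrack\alpha\rbrack}\Gamma\subseteq\Gamma$ (immediate from $(\Axiom8)$) and ${\langle\alpha\rangle}\Gamma\subseteq\Gamma$ (immediate from $(\Axiom9)$), each by one modus ponens on an axiom instance. For symmetry, assume $\Gamma R_{\L}(\alpha)\Delta$. To obtain ${\lbrack\alpha\rbrack}\Delta\subseteq\Gamma$, from ${\lbrack\alpha\rbrack}A\in\Delta$ I first derive ${\langle\alpha\rangle}{\lbrack\alpha\rbrack}A\in{\langle\alpha\rangle}\Delta\subseteq\Gamma$ and then use $(\Axiom11)$ to conclude $A\in\Gamma$. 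Dually, to obtain ${\langle\alpha\rangle}\Gamma\subseteq\Delta$, from $A\in\Gamma$ I derive ${\lbrack\alpha\rbrack}{\langle\alpha\rangle}A\in\Gamma$ via $(\Axiom10)$, whence ${\langle\alpha\rangle}A\in{\lbrack\alpha\rbrack}\Gamma\subseteq\Delta$. The only step demanding any real care is the application of Proposition~\ref{lemma:prime:proper:for:square} in Item~$2$, whose hypothesis ${\lbrack\alpha\rbrack}{\bot}\notin\Gamma$ is exactly what $(\Axiom4)$ together with $(\Axiom7)$ deliver; all the other verifications are single-line consequences of having the appropriate axiom instance in $\Gamma$ (or in $\Delta$) and one application of modus ponens.
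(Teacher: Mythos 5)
Your proof is correct. Item~$1$ coincides with the paper's argument (the paper merely states it contrapositively). For Item~$2$ you take a genuinely different and more economical route: the paper does not invoke Proposition~\ref{lemma:prime:proper:for:square} but instead reruns a two-stage Zorn's-lemma construction~---~it first extends $\Gamma$ to a \emph{maximal} proper $\L$-theory $\Delta$, and it is precisely this maximality that converts ${\langle\alpha\rangle}C{\not\in}\Delta$ into ${\neg}{\langle\alpha\rangle}C{\in}\Delta$, which is what makes the intuitionistically weak conclusion ${\neg}{\neg}{\langle\alpha\rangle}C$ supplied by $(\Axiom7)$ usable. Your observation that one may simply take $B{=}{\bot}$, for which $(\Axiom4)$ already delivers ${\neg}{\langle\alpha\rangle}{\bot}$ outright and hence ${\lbrack\alpha\rbrack}{\bot}{\not\in}\Gamma$, sidesteps the maximality trick entirely and reduces Item~$2$ to the already-stated existence lemma; this is shorter and equally rigorous, and nothing circular is introduced since Proposition~\ref{lemma:prime:proper:for:square} is quoted from the literature independently of the present proposition. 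For Item~$3$ the paper gives no argument at all, deferring to external lemmas in the mono-agent case; your direct verification that $R_{\L}(\alpha)$ is genuinely reflexive and symmetric~---~each of the four inclusions ${\lbrack\alpha\rbrack}\Gamma{\subseteq}\Gamma$, ${\langle\alpha\rangle}\Gamma{\subseteq}\Gamma$, ${\lbrack\alpha\rbrack}\Delta{\subseteq}\Gamma$ and ${\langle\alpha\rangle}\Gamma{\subseteq}\Delta$ being one axiom instance plus modus ponens, exactly as you say~---~is correct and establishes a strictly stronger fact, from which the up and down properties follow by taking the $\leq_{\L}$-witnesses to be identities, as the paper itself notes in the footnote motivating the class ${\mathcal C}_{\updown}$.
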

\begin{proof}
$\mathbf{(1)}$~Suppose $\L$ contains $(\Axiom6)$.
For the sake of the contradiction, suppose $(W_{\L},
$\linebreak$
{\leq_{\L}},{R_{\L}})$ is not doxastic.
Hence, there exists a group $\alpha$ and there exists $\Gamma,\Delta{\in}W_{\L}$ such that $\Gamma{R_{\L}(\alpha)}\Delta$ and not $\Gamma{\leq_{\L}}\Delta$.
Thus, $\Gamma{\not\subseteq}\Delta$.
Consequently, there exists a formula $A$ such that $A{\in}\Gamma$ and $A{\not\in}\Delta$.
Since $\L$ contains $(\Axiom6)$, then $A{\rightarrow}{\lbrack\alpha\rbrack}A{\in}\Gamma$.
Since $A{\in}\Gamma$, then ${\lbrack\alpha\rbrack}A{\in}\Gamma$.
Hence, $A{\in}{\lbrack\alpha\rbrack}\Gamma$.
Since $\Gamma{R_{\L}(\alpha)}\Delta$, then ${\lbrack\alpha\rbrack}\Gamma{\subseteq}\Delta$.
Since $A{\in}{\lbrack\alpha\rbrack}\Gamma$, then $A{\in}\Delta$: a contradiction.
\\
$\mathbf{(2)}$~Suppose $\L$ contains $(\Axiom6)$ and $(\Axiom7)$.
Thus, by Item~$\mathbf{(1)}$, $(W_{\L},{\leq_{\L}},{R_{\L}})$ is doxastic.
We demonstrate that $(W_{\L},{\leq_{\L}},{R_{\L}})$ is epistemic.
Let $\Gamma{\in}W_{\L}$ and $\alpha$ be a group.
Let ${\mathcal S}{=}\{\Delta\ :\ \Delta$ is a $\L$-theory such that {\bf (1)}~$\Gamma{\subseteq}\Delta$ and {\bf (2)}~${\bot}{\not\in}\Delta\}$.
Obviously, $\Gamma\in{\mathcal S}$.
Consequently, ${\mathcal S}$ is nonempty.
Moreover, for all nonempty chains $(\Delta_{i})_{i{\in}I}$ of elements of ${\mathcal S}$, $\bigcup\{\Delta_{i}\ :\ i{\in}I\}$ is an element of ${\mathcal S}$.
Hence, by Zorn's Lemma, ${\mathcal S}$ possesses a maximal element $\Delta$.
Thus, $\Delta$ is a $\L$-theory such that $\Gamma{\subseteq}\Delta$ and ${\bot}{\not\in}\Delta$.
We claim that $\Delta$ is proper.
If not, ${\bot}{\in}\Delta$: a contradiction.
Consequently, $\Delta$ is proper.
We claim that $\Delta$ is prime.
If not, there exists formulas $C,D$ such that $C{\vee}D{\in}\Delta$, $C{\not\in}\Delta$ and $D{\not\in}\Delta$.
Hence, by the maximality of $\Delta$ in ${\mathcal S}$, $\Delta{+}C{\not\in}{\mathcal S}$ and $\Delta{+}D{\not\in}{\mathcal S}$.
Thus, ${\bot}{\in}\Delta{+}C$ and ${\bot}{\in}\Delta{+}D$.
Consequently, $C{\rightarrow}{\bot}{\in}\Delta$ and $D{\rightarrow}{\bot}{\in}\Delta$.
Hence, $C{\vee}D{\rightarrow}{\bot}{\in}\Delta$.
Since $C{\vee}D{\in}\Delta$, then ${\bot}{\in}\Delta$: a contradiction.
Thus, $\Delta$ is prime.
Let ${\mathcal T}{=}\{\Lambda\ :\ \Lambda$ is a $\L$-theory such that ${\lbrack\alpha\rbrack}\Delta{\subseteq}\Lambda$ and ${\langle\alpha\rangle}\Lambda{\subseteq}\Delta\}$.
We claim that ${\lbrack\alpha\rbrack}\Delta{\in}{\mathcal T}$.
If not, there exists a formula $C$ such that $C{\in}{\lbrack\alpha\rbrack}\Delta$ and ${\langle\alpha\rangle}C{\not\in}\Delta$.
Thus, ${\lbrack\alpha\rbrack}C{\in}\Delta$.
Moreover, by the maximality of $\Delta$ in ${\mathcal S}$, $\Delta{+}{\langle\alpha\rangle}C{\not\in}{\mathcal S}$.
Consequently, ${\bot}{\in}\Delta{+}{\langle\alpha\rangle}C$.
Hence, ${\langle\alpha\rangle}C{\rightarrow}{\bot}{\in}\Delta$.
Since $\L$ contains $(\Axiom7)$ and ${\lbrack\alpha\rbrack}C{\in}\Delta$, then ${\neg}{\neg}{\langle\alpha\rangle}C{\in}\Delta$.
Since ${\langle\alpha\rangle}C{\rightarrow}{\bot}{\in}\Delta$, then ${\bot}{\in}\Delta$: a contradiction.
Thus, ${\lbrack\alpha\rbrack}\Delta{\in}{\mathcal T}$.
Consequently, ${\mathcal T}$ is nonempty.
Moreover, for all nonempty chains $(\Lambda_{i})_{i{\in}I}$ of elements of ${\mathcal T}$, $\bigcup\{\Lambda_{i}\ :\ i{\in}I\}$ is an element of ${\mathcal T}$.
Hence, by Zorn's Lemma, ${\mathcal T}$ possesses a maximal element $\Lambda$.
Thus, $\Lambda$ is a $\L$-theory such that ${\lbrack\alpha\rbrack}\Delta{\subseteq}\Lambda$ and ${\langle\alpha\rangle}\Lambda{\subseteq}\Delta$.
Consequently, it only remains to be proved that $\Lambda$ is proper and prime.
We claim that $\Lambda$ is proper.
If not, ${\bot}{\in}\Lambda$.
Since, ${\langle\alpha\rangle}\Lambda{\subseteq}\Delta$, then ${\langle\alpha\rangle}{\bot}{\in}\Delta$: a contradiction.
Hence, $\Lambda$ is proper.
We claim that $\Lambda$ is prime.
If not, there exists formulas $D,E$ such that $D{\vee}E{\in}\Lambda$, $D{\not\in}\Lambda$ and $E{\not\in}\Lambda$.
Thus, by the maximality of $\Lambda$ in ${\mathcal T}$, $\Lambda{+}D{\not\in}{\mathcal T}$ and $\Lambda{+}E{\not\in}{\mathcal T}$.
Consequently, there exists a formula $F$ such that $F{\in}\Lambda{+}D$ and ${\langle\alpha\rangle}F{\not\in}\Delta$ and there exists a formula $G$ such that $G{\in}\Lambda{+}E$ and ${\langle\alpha\rangle}G{\not\in}\Delta$.
Hence, $D{\rightarrow}F{\in}\Lambda$ and $E{\rightarrow}G{\in}\Lambda$.
Thus, $D{\vee}E{\rightarrow}F{\vee}G{\in}\Lambda$.
Since $D{\vee}E{\in}\Lambda$, then $F{\vee}G{\in}\Lambda$.
Since ${\langle\alpha\rangle}\Lambda{\subseteq}\Delta$, then ${\langle\alpha\rangle}(F{\vee}G){\in}\Delta$.
Consequently, ${\langle\alpha\rangle}F{\in}\Delta$, or ${\langle\alpha\rangle}G{\in}\Delta$.
Since ${\langle\alpha\rangle}F{\not\in}\Delta$, then ${\langle\alpha\rangle}G{\in}\Delta$: a contradiction.
Hence, $\Lambda$ is prime.
\medskip
\end{proof}
\begin{proposition}\label{proposition:A6:A7:canonical:frame}
If $\L$ contains $(\Axiom12)$ and $(\Axiom13)$ then $(W_{\L},{\leq_{\L}},{R_{\L}})$ is prestandard.
\end{proposition}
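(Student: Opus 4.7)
The plan is to show directly from the definition of $R_{\L}$ that for all groups $\alpha,\beta$, $R_{\L}(\alpha{\cup}\beta){\subseteq}R_{\L}(\alpha){\cap}R_{\L}(\beta)$. So I would fix groups $\alpha,\beta$ and prime $\L$-theories $\Gamma,\Delta$ with $\Gamma{R_{\L}(\alpha{\cup}\beta)}\Delta$ and aim to prove $\Gamma{R_{\L}(\alpha)}\Delta$ (the argument for $\beta$ being symmetric). Unfolding the hypothesis yields ${\lbrack\alpha{\cup}\beta\rbrack}\Gamma{\subseteq}\Delta$ and ${\langle\alpha{\cup}\beta\rangle}\Delta{\subseteq}\Gamma$, and I must establish ${\lbrack\alpha\rbrack}\Gamma{\subseteq}\Delta$ and ${\langle\alpha\rangle}\Delta{\subseteq}\Gamma$.

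For the first inclusion, take $A{\in}{\lbrack\alpha\rbrack}\Gamma$, i.e.\ ${\lbrack\alpha\rbrack}A{\in}\Gamma$. By $\vee$-introduction (a standard $\IPL$-axiom), ${\lbrack\alpha\rbrack}A{\vee}{\lbrack\beta\rbrack}A{\in}\Gamma$. Since $\L$ contains $(\Axiom{12})$, we have ${\lbrack\alpha\rbrack}A{\vee}{\lbrack\beta\rbrack}A{\rightarrow}{\lbrack\alpha{\cup}\beta\rbrack}A{\in}\L{\subseteq}\Gamma$, so closure under modus ponens gives ${\lbrack\alpha{\cup}\beta\rbrack}A{\in}\Gamma$, i.e.\ $A{\in}{\lbrack\alpha{\cup}\beta\rbrack}\Gamma{\subseteq}\Delta$, as wanted.

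For the second inclusion, take any $C{\in}\Delta$; I need ${\langle\alpha\rangle}C{\in}\Gamma$. From ${\langle\alpha{\cup}\beta\rangle}\Delta{\subseteq}\Gamma$ we get ${\langle\alpha{\cup}\beta\rangle}C{\in}\Gamma$. Since $\L$ contains $(\Axiom{13})$, the formula ${\langle\alpha{\cup}\beta\rangle}C{\rightarrow}{\langle\alpha\rangle}C{\wedge}{\langle\beta\rangle}C$ belongs to $\Gamma$, so modus ponens yields ${\langle\alpha\rangle}C{\wedge}{\langle\beta\rangle}C{\in}\Gamma$, and a $\wedge$-elimination (again standard in $\IPL$) gives ${\langle\alpha\rangle}C{\in}\Gamma$. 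Hence ${\langle\alpha\rangle}\Delta{\subseteq}\Gamma$, completing $\Gamma{R_{\L}(\alpha)}\Delta$.

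There is no real obstacle here: the axioms $(\Axiom{12})$ and $(\Axiom{13})$ are precisely designed to make each of the two defining clauses of $R_{\L}$ monotone along the group-superset direction suitable for inducing ${R_{\L}(\alpha{\cup}\beta)}{\subseteq}{R_{\L}(\alpha)}$, one clause coming from $(\Axiom{12})$ (transporting ${\lbrack\alpha\rbrack}$-membership upward) and the other from $(\Axiom{13})$ (transporting ${\langle\alpha{\cup}\beta\rangle}$-membership downward). The only thing to be careful about is the direction of each inclusion in the definition of $R_{\L}$: $(\Axiom{12})$ handles the ${\lbrack\,\rbrack}$-clause while $(\Axiom{13})$ handles the ${\langle\,\rangle}$-clause, and the two axioms point in opposite directions exactly as needed.
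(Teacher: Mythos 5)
Your proof is correct and follows essentially the same route as the paper's: axiom $(\Axiom{12})$ (together with $\vee$-introduction) transports $\lbrack\alpha\rbrack A{\in}\Gamma$ to $\lbrack\alpha{\cup}\beta\rbrack A{\in}\Gamma$, and axiom $(\Axiom{13})$ (together with $\wedge$-elimination) transports $\langle\alpha{\cup}\beta\rangle C{\in}\Gamma$ to $\langle\alpha\rangle C{\in}\Gamma$, which are exactly the two steps in the paper's argument. The only difference is presentational: you argue directly, whereas the paper phrases the same two cases as a proof by contradiction.
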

\begin{proof}
Suppose $\L$ contains $(\Axiom12)$ and $(\Axiom13)$.
Let $\alpha,\beta$ be groups and $\Gamma,\Delta{\in}W_{\L}$ be such that $\Gamma{R_{\L}(\alpha{\cup}\beta)}\Delta$.
Hence, $\lbrack\alpha{\cup}\beta\rbrack\Gamma{\subseteq}\Delta$ and $\langle\alpha{\cup}\beta\rangle\Delta{\subseteq}\Gamma$.
For the sake of the contradiction, suppose not $\Gamma{R_{\L}(\alpha)}\Delta$, or not $\Gamma{R_{\L}(\beta)}\Delta$.
Without loss of generality, suppose not $\Gamma{R_{\L}(\alpha)}\Delta$.
Thus, ${\lbrack\alpha\rbrack}\Gamma{\not\subseteq}\Delta$, or ${\langle\alpha\rangle}\Delta{\not\subseteq}\Gamma$.
In the former case, there exists a formula $A$ such that ${\lbrack\alpha\rbrack}A{\in}\Gamma$ and $A{\not\in}\Delta$.
Since $\L$ contains $(\Axiom12)$, then ${\lbrack\alpha{\cup}\beta\rbrack}A{\in}\Gamma$.
Since $\lbrack\alpha{\cup}\beta\rbrack\Gamma{\subseteq}\Delta$, then $A{\in}\Delta$: a contradiction.
In the latter case, there exists a formula $A$ such that $A{\in}\Delta$ and ${\langle\alpha\rangle}A{\not\in}\Gamma$.
Since $\L$ contains $(\Axiom13)$, then ${\langle\alpha{\cup}\beta\rangle}A{\not\in}\Gamma$.
Since $\langle\alpha{\cup}\beta\rangle\Delta{\subseteq}\Gamma$, then $A{\not\in}\Delta$: a contradiction.
\medskip
\end{proof}
Let $V_{\L}\ :\ \At{\longrightarrow}\wp(W_{\L})$ be the valuation on $(W_{\L},{\leq_{\L}},{R_{\L}})$ such that for all atoms $p$, $V_{\L}(p){=}\{\Gamma{\in}W_{\L}\ :\ p{\in}\Gamma\}$.
The valuation $V_{\L}\ :\ \At{\longrightarrow}\wp(W_{\L})$ on $(W_{\L},{\leq_{\L}},{R_{\L}})$ is called {\em canonical valuation of $\L$.}
The model $(W_{\L},{\leq_{\L}},{R_{\L}},V_{\L})$ is called {\em canonical model of $\L$.}
\begin{proposition}\label{proposition:truth:lemma}
For all formulas $A$ and for all $\Gamma{\in}W_{\L}$, $A{\in}\Gamma$ if and only if $\Gamma{\models}A$.
\end{proposition}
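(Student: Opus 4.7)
The plan is to proceed by induction on the structure of the formula $A$, as is standard for a truth lemma. The base case $A = p$ is immediate from the definition of the canonical valuation $V_{\L}$, and the cases $A = \top$ and $A = \bot$ are immediate since every prime $\L$-theory $\Gamma$ contains $\top$ and, being proper, avoids $\bot$. The conjunction and disjunction cases follow from the fact that $\Gamma$ is a $\L$-theory (closed under modus ponens and containing the standard axioms of $\IPL$) together with the primality of $\Gamma$.

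The implication case is handled in the usual intuitionistic way. For the forward direction, if $A \rightarrow B \in \Gamma$ and $\Gamma \leq_{\L} \Delta$ with $\Delta \models A$, then $A \rightarrow B \in \Delta$ and, by the induction hypothesis, $A \in \Delta$, so $B \in \Delta$ by modus ponens and $\Delta \models B$ by the induction hypothesis. For the converse I would argue contrapositively: if $A \rightarrow B \notin \Gamma$, then Proposition~\ref{lemma:prime:proper:for:implication} yields a prime $\L$-theory $\Delta$ with $\Gamma \subseteq \Delta$, $A \in \Delta$, and $B \notin \Delta$, which gives $\Delta \models A$ and $\Delta \not\models B$ by the induction hypothesis, hence $\Gamma \not\models A \rightarrow B$.

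The two modal cases are where the real work happens, and they rely entirely on the existence propositions. For $A = {\lbrack\alpha\rbrack}B$, the forward direction is direct: if ${\lbrack\alpha\rbrack}B \in \Gamma$ and $\Gamma \leq_{\L} \Delta$ with $\Delta R_{\L}(\alpha) \Lambda$, then ${\lbrack\alpha\rbrack}B \in \Delta$, so $B \in {\lbrack\alpha\rbrack}\Delta \subseteq \Lambda$, and the induction hypothesis gives $\Lambda \models B$. For the converse, if ${\lbrack\alpha\rbrack}B \notin \Gamma$, I invoke Proposition~\ref{lemma:prime:proper:for:square} to produce prime $\L$-theories $\Delta, \Lambda$ with $\Gamma \subseteq \Delta$, ${\lbrack\alpha\rbrack}\Delta \subseteq \Lambda$, ${\langle\alpha\rangle}\Lambda \subseteq \Delta$, and $B \notin \Lambda$; this gives $\Gamma \leq_{\L} \Delta\ R_{\L}(\alpha)\ \Lambda$ with $\Lambda \not\models B$ by the induction hypothesis. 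Symmetrically, for $A = {\langle\alpha\rangle}B$, the forward direction uses Proposition~\ref{lemma:prime:proper:for:lozenge} to extract prime $\Delta, \Lambda$ with $\Delta \subseteq \Gamma$, ${\lbrack\alpha\rbrack}\Delta \subseteq \Lambda$, ${\langle\alpha\rangle}\Lambda \subseteq \Delta$, and $B \in \Lambda$, yielding the required witness $\Gamma \geq_{\L} \Delta\ R_{\L}(\alpha)\ \Lambda$; the backward direction observes that if $\Gamma \geq_{\L} \Delta\ R_{\L}(\alpha)\ \Lambda$ and $\Lambda \models B$, then $B \in \Lambda$ by the induction hypothesis, so ${\langle\alpha\rangle}B \in {\langle\alpha\rangle}\Lambda \subseteq \Delta \subseteq \Gamma$.

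The main obstacle is ensuring that the two modal cases fit together cleanly, because the truth conditions for $\lbrack\alpha\rbrack$ and $\langle\alpha\rangle$ quantify respectively over the composition $\leq \circ R(\alpha)$ and $\geq \circ R(\alpha)$, so one must be careful to align the witnesses delivered by Propositions~\ref{lemma:prime:proper:for:square} and~\ref{lemma:prime:proper:for:lozenge} with precisely these compositions. Once this bookkeeping is done, the induction goes through uniformly with no appeal to the frame conditions (which are instead used separately in Propositions~\ref{proposition:A8:A9:canonical:frame} and~\ref{proposition:A6:A7:canonical:frame} to certify which class the canonical frame belongs to).
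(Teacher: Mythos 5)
Your proposal is correct and follows exactly the route the paper intends: the paper's own proof is simply ``By induction on $A$,'' and the details you supply~---~the atomic case via the canonical valuation, primality for disjunction, and Propositions~\ref{lemma:prime:proper:for:implication}, \ref{lemma:prime:proper:for:square} and~\ref{lemma:prime:proper:for:lozenge} for the implication, $\lbrack\alpha\rbrack$ and $\langle\alpha\rangle$ cases~---~are precisely the standard argument being elided. Your alignment of the witnesses from the existence propositions with the compositions ${\leq_{\L}}{\circ}{R_{\L}(\alpha)}$ and ${\geq_{\L}}{\circ}{R_{\L}(\alpha)}$ in the truth conditions is the right bookkeeping and checks out.
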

\begin{proof}
By induction on $A$.
\medskip
\end{proof}
\section{Soundness and completeness}\label{section:soundness:completeness}
In Proposition~\ref{proposition:soundness:completeness:sans:D}, we prove that $\L_{\all}$, $\L_{\doxastic}$, $\L_{\epistemic}$ and $\L_{\partition}$ are complete with respect to the relational semantics determined respectively by ${\mathcal C}_{\all}$, ${\mathcal C}_{\doxastic}$, ${\mathcal C}_{\epistemic}$ and ${\mathcal C}_{\partition}$.
\begin{proposition}[Soundness/completeness]\label{proposition:soundness:completeness:sans:D}
\begin{enumerate}
\item $\L_{\all}{=}\Log({\mathcal C}_{\all})$,
\item $\L_{\doxastic}{=}\Log({\mathcal C}_{\doxastic})$,
\item $\L_{\epistemic}{=}\Log({\mathcal C}_{\epistemic})$,
\item $\L_{\partition}{=}\Log({\mathcal C}_{\partition})$.
\end{enumerate}
\end{proposition}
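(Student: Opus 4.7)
The plan is to establish soundness and completeness separately for each of the four logics. For the soundness direction $\L \subseteq \Log({\mathcal C})$, the argument is routine by induction on derivations: I would first observe that axioms $(\Axiom1)$--$(\Axiom5)$ are valid on every class of frames and rules $(\Rule1)$--$(\Rule3)$ preserve validity on every class of frames, by Propositions~\ref{proposition:validity:of:axioms} and~\ref{proposition:validity:of:rules}. For $\L_{\doxastic}$, I would add validity of $(\Axiom6)$ on ${\mathcal C}_{\doxastic}$ via Proposition~\ref{proposition:validity:of:specific:axioms:doxastic:frames}. For $\L_{\epistemic}$, since every epistemic frame is doxastic, $(\Axiom6)$ remains valid on ${\mathcal C}_{\epistemic}$, and $(\Axiom7)$ is valid on ${\mathcal C}_{\epistemic}$ by Proposition~\ref{proposition:validity:of:specific:axioms:epistemic:frames}. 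For $\L_{\partition}$, Proposition~\ref{proposition:validity:of:up:down} guarantees validity of $(\Axiom8)$--$(\Axiom11)$ on ${\mathcal C}_{\updown}$, and since every partition is up and down reflexive and up and down symmetric, these axioms are valid on ${\mathcal C}_{\partition}$ as well.

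For the completeness direction $\Log({\mathcal C}) \subseteq \L$, I would argue contrapositively using the canonical model construction. Given a formula $A \notin \L$, Proposition~\ref{lemma:almost:completeness} yields a prime $\L$-theory $\Gamma$ with $A \notin \Gamma$. The Truth Lemma (Proposition~\ref{proposition:truth:lemma}) then ensures $\Gamma \not\models A$ in the canonical model $(W_{\L},{\leq_{\L}},{R_{\L}},V_{\L})$, so $A$ is refuted on the canonical frame. It only remains to identify the class to which this frame belongs. For $\L_{\all}$, this is immediate since every relational structure satisfying the minimal constraints on $\leq$ and $R$ is a frame. For $\L_{\doxastic}$ and $\L_{\epistemic}$, Proposition~\ref{proposition:A8:A9:canonical:frame}(1) and~(2) place the canonical frame in ${\mathcal C}_{\doxastic}$ and ${\mathcal C}_{\epistemic}$ respectively, yielding Items~(1)--(3).

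The main subtlety lies in Item~(4). Unlike the doxastic and epistemic cases, one cannot show directly that the canonical frame of $\L_{\partition}$ is a partition; Proposition~\ref{proposition:A8:A9:canonical:frame}(3) only places it in ${\mathcal C}_{\updown}$. The bridge is Proposition~\ref{proposition:refsym:updown:partition:a}, which tells us $\Log({\mathcal C}_{\updown}) = \Log({\mathcal C}_{\partition})$. From the canonical model argument one therefore obtains $A \notin \Log({\mathcal C}_{\updown})$, and the equality of logics then yields $A \notin \Log({\mathcal C}_{\partition})$, finishing Item~(4). The conceptual obstacle is precisely this partition case: the usual canonical-model technique is not strong enough, and the real work has already been carried out in Proposition~\ref{proposition:refsym:updown:partition:a}, whose proof builds an auxiliary two-layer construction transforming an up and down reflexive and symmetric frame into a genuine partition while preserving refutations.
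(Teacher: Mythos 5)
Your proposal is correct and takes essentially the same route as the paper: soundness via Propositions~\ref{proposition:validity:of:axioms}--\ref{proposition:validity:of:up:down}, completeness via the canonical model (Propositions~\ref{lemma:almost:completeness}, \ref{proposition:A8:A9:canonical:frame} and~\ref{proposition:truth:lemma}), and the partition case handled by passing through ${\mathcal C}_{\updown}$ via Proposition~\ref{proposition:refsym:updown:partition:a}. The paper's own proof cites exactly this collection of propositions and only works out the $\L_{\all}$ case in detail, so your more explicit treatment of Item~(4) matches its intended argument.
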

\begin{proof}
By Propositions~\ref{proposition:validity:of:axioms}, \ref{proposition:validity:of:rules}, \ref{proposition:validity:of:specific:axioms:doxastic:frames}, \ref{proposition:validity:of:specific:axioms:epistemic:frames}, \ref{proposition:validity:of:up:down}, \ref{proposition:refsym:updown:partition:a}, \ref{proposition:A8:A9:canonical:frame} and~\ref{proposition:truth:lemma}.
Indeed, suppose for example that $\L_{\all}{\not=}\Log({\mathcal C}_{\all})$.
Hence, $\L_{\all}{\not\subseteq}\Log({\mathcal C}_{\all})$, or $\L_{\all}{\not\supseteq}\Log({\mathcal C}_{\all})$.
In the former case, there exists a formula $A$ such that $A{\in}\L_{\all}$ and ${\mathcal C}_{\all}{\not\models}A$.
This is in contradiction with Propositions~\ref{proposition:validity:of:axioms} and \ref{proposition:validity:of:rules}.
In the latter case, there exists a formula $A$ such that $A{\not\in}\L_{\all}$ and ${\mathcal C}_{\all}{\models}A$.
Thus, by Proposition~\ref{lemma:almost:completeness} there exists a prime $\L_{\all}$-theory $\Gamma$ such that $A{\not\in}\Gamma$.
Consequently, by Proposition~\ref{proposition:truth:lemma}, $\Gamma{\not\models}A$.
Hence, $(W_{\L},{\leq_{\L}},{R_{\L}},V_{\L}){\not\models}A$.
Thus, $(W_{\L},{\leq_{\L}},{R_{\L}}){\not\models}A$.
Consequently, ${\mathcal C}_{\all}{\not\models}A$: a contradiction.
\medskip
\end{proof}
In Proposition~\ref{soundness:completeness:distributed}, we prove that $\L_{\all}^{\Distrib}$, $\L_{\doxastic}^{\Distrib}$, $\L_{\epistemic}^{\Distrib}$ and $\L_{\partition}^{\Distrib}$ are complete with respect to the relational semantics determined respectively by ${\mathcal C}_{\all}^{\standard}$, ${\mathcal C}_{\doxastic}^{\standard}$, ${\mathcal C}_{\epistemic}^{\standard}$ and ${\mathcal C}_{\partition}^{\standard}$.
\begin{proposition}[Soundness/completeness]\label{soundness:completeness:distributed}
\begin{enumerate}
\item $\L_{\all}^{\Distrib}{=}\Log({\mathcal C}_{\all}^{\standard})$,
\item $\L_{\doxastic}^{\Distrib}{=}\Log({\mathcal C}_{\doxastic}^{\standard})$,
\item $\L_{\epistemic}^{\Distrib}{=}\Log({\mathcal C}_{\epistemic}^{\standard})$,
\item $\L_{\partition}^{\Distrib}{=}\Log({\mathcal C}_{\partition}^{\standard})$.
\end{enumerate}
\end{proposition}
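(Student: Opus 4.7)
The plan is to follow exactly the template of Proposition~\ref{proposition:soundness:completeness:sans:D}, feeding in the distributed-knowledge axioms $(\Axiom12)$ and $(\Axiom13)$ and then closing the gap between the prestandard canonical frame and a genuinely standard one by invoking the bounded-morphic-image results of Section~4.

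For soundness, I would handle each of the four logics by appealing to the validity results already established. Propositions~\ref{proposition:validity:of:axioms} and~\ref{proposition:validity:of:rules} dispose of $(\Axiom1)$--$(\Axiom5)$ and $(\Rule1)$--$(\Rule3)$ on any class of frames; Proposition~\ref{proposition:validity:of:axioms:inclusion} dispatches $(\Axiom12)$ and $(\Axiom13)$ on any class of prestandard frames, hence in particular on any class of standard frames. The doxastic, epistemic and partition cases are then covered by Propositions~\ref{proposition:validity:of:specific:axioms:doxastic:frames}, \ref{proposition:validity:of:specific:axioms:epistemic:frames} and~\ref{proposition:validity:of:up:down} respectively, exactly as in the proof of Proposition~\ref{proposition:soundness:completeness:sans:D}.

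For completeness, I would argue by contraposition. Fixing $\mathbf{x} \in \{\all, \doxastic, \epistemic, \partition\}$ and a formula $A \notin \L_{\mathbf x}^{\Distrib}$, Proposition~\ref{lemma:almost:completeness} yields a prime $\L_{\mathbf x}^{\Distrib}$-theory $\Gamma$ with $A \notin \Gamma$, and Proposition~\ref{proposition:truth:lemma} then gives $(W_{\L_{\mathbf x}^{\Distrib}}, {\leq_{\L_{\mathbf x}^{\Distrib}}}, {R_{\L_{\mathbf x}^{\Distrib}}}) \not\models A$. Since $\L_{\mathbf x}^{\Distrib}$ contains $(\Axiom12)$ and $(\Axiom13)$, Proposition~\ref{proposition:A6:A7:canonical:frame} makes the canonical frame prestandard, and the appropriate item of Proposition~\ref{proposition:A8:A9:canonical:frame} makes it doxastic, epistemic or up-and-down reflexive-and-symmetric in the cases beyond $\all$. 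Thus $A$ is refuted on a frame lying in ${\mathcal C}_{\all}^{\prestandard}$, ${\mathcal C}_{\doxastic}^{\prestandard}$, ${\mathcal C}_{\epistemic}^{\prestandard}$ or ${\mathcal C}_{\updown}^{\prestandard}$ respectively.

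The final move is to convert prestandard refutation into standard refutation. For the first three cases, Propositions~\ref{proposition:bounded:morphic:images:all}, \ref{proposition:bounded:morphic:images:doxastic} and~\ref{proposition:bounded:morphic:images:epistemic} do this immediately. The partition case is the only place where genuine care is needed, and is what I expect to be the main obstacle: the canonical frame is only up-and-down reflexive-and-symmetric, not yet a partition. I would bridge the gap by chaining Proposition~\ref{proposition:refsym:updown:partition:b}, which collapses $\Log({\mathcal C}_{\updown}^{\prestandard})$ with $\Log({\mathcal C}_{\partition}^{\prestandard})$, and Proposition~\ref{proposition:bounded:morphic:images:partition}, which collapses $\Log({\mathcal C}_{\partition}^{\prestandard})$ with $\Log({\mathcal C}_{\partition}^{\standard})$. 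Since the hard work in these two propositions has already been carried out, no new construction is required; the proof reduces to assembling the right diagram of inclusions between classes and invoking the right collapse result for each of the four logics.
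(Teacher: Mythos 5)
Your proposal is correct and follows essentially the same route as the paper: the paper's proof is a one-line reference to the proof template of Proposition~\ref{proposition:soundness:completeness:sans:D} together with exactly the list of auxiliary results you invoke (Propositions~\ref{proposition:validity:of:axioms}, \ref{proposition:validity:of:rules}, \ref{proposition:validity:of:specific:axioms:doxastic:frames}, \ref{proposition:validity:of:specific:axioms:epistemic:frames}, \ref{proposition:validity:of:up:down}, \ref{proposition:validity:of:axioms:inclusion}, \ref{proposition:bounded:morphic:images:all}, \ref{proposition:bounded:morphic:images:doxastic}, \ref{proposition:bounded:morphic:images:epistemic}, \ref{proposition:bounded:morphic:images:partition}, \ref{proposition:refsym:updown:partition:b}, \ref{proposition:A8:A9:canonical:frame}, \ref{proposition:A6:A7:canonical:frame} and~\ref{proposition:truth:lemma}). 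In particular, your handling of the partition case by chaining Proposition~\ref{proposition:refsym:updown:partition:b} with Proposition~\ref{proposition:bounded:morphic:images:partition} is precisely the intended assembly.
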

\begin{proof}
Similar to the proof of Proposition~\ref{proposition:soundness:completeness:sans:D}, this time using Propositions~\ref{proposition:validity:of:axioms}, \ref{proposition:validity:of:rules}, \ref{proposition:validity:of:specific:axioms:doxastic:frames}, \ref{proposition:validity:of:specific:axioms:epistemic:frames}, \ref{proposition:validity:of:up:down}, \ref{proposition:validity:of:axioms:inclusion}, \ref{proposition:bounded:morphic:images:all}, \ref{proposition:bounded:morphic:images:doxastic}, \ref{proposition:bounded:morphic:images:epistemic}, \ref{proposition:bounded:morphic:images:partition}, \ref{proposition:refsym:updown:partition:b}, \ref{proposition:A8:A9:canonical:frame}, \ref{proposition:A6:A7:canonical:frame} and~\ref{proposition:truth:lemma}.
\medskip
\end{proof}
In Proposition~\ref{proposition:IELmoins:CE}, we prove that $\L_{\doxastic}$ and $\L_{\doxastic}^{\Distrib}$ can be considered as conservative extensions of the intuitionistic epistemic logic $\IEL^{-}$ introduced by Artemov and Protopopescu~\cite{Artemov:Protopopescu:2016}.
\begin{proposition}\label{proposition:IELmoins:CE}
For all diamond-free $\ArtemovProtopopescu$-formulas $A$, the following conditions are equivalent:
\begin{enumerate}
\item $A{\in}\L_{\doxastic}$,
\item $A{\in}\L_{\doxastic}^{\Distrib}$,
\item $\tau(A){\in}\IEL^{-}$.
\end{enumerate}
\end{proposition}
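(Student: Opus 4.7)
The implication $(1){\Rightarrow}(2)$ is immediate from the definition of $\L_{\doxastic}^{\Distrib}$ as the least intuitionistic modal logic containing $\L_{\doxastic}$ together with axioms $(\Axiom12)$ and $(\Axiom13)$. For the remaining two implications I argue semantically, using the completeness facts $\L_{\doxastic}{=}\Log({\mathcal C}_{\doxastic})$ and $\L_{\doxastic}^{\Distrib}{=}\Log({\mathcal C}_{\doxastic}^{\standard})$ (Propositions \ref{proposition:soundness:completeness:sans:D} and \ref{soundness:completeness:distributed}), together with the Artemov--Protopopescu completeness of $\IEL^{-}$ with respect to the class of $\IEL^{-}$-structures described in the footnote to Section \ref{section:relational:semantics}. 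Since $A$ is diamond-free, at most one group $\alpha_{0}$ is named by the box occurrences in $A$, and no diamond occurs.

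For $(3){\Rightarrow}(1)$ I transfer countermodels by folding. Starting from a doxastic frame $(W,{\leq},{R})$ with valuation $V$, set $R^{\sharp}{=}{\leq}{\circ}{R(\alpha_{0})}$. Since $R(\alpha_{0}){\subseteq}{\leq}$ and $\leq$ is a preorder, $R^{\sharp}{\subseteq}{\leq}$ and ${\leq}{\circ}R^{\sharp}{=}R^{\sharp}$, so $(W,{\leq},R^{\sharp})$ is an $\IEL^{-}$-structure. A routine induction on $B{\in}\Fo^{-}$ shows that $s{\models}B$ in $(W,{\leq},{R},V)$ iff $s{\models}\tau(B)$ in $(W,{\leq},R^{\sharp},V)$ under AP's semantics. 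The atomic and Boolean cases are standard; the modal case compares our clause ``$s{\models}{\lbrack\alpha_{0}\rbrack}C$ iff for every $t$, $s{\leq}{\circ}{R(\alpha_{0})}t$ implies $t{\models}C$'' with AP's clause ``$s{\models}{\square}\tau(C)$ iff for every $t$, $s R^{\sharp} t$ implies $t{\models}\tau(C)$'', which coincide by the very definition of $R^{\sharp}$. Consequently, if $\tau(A){\in}\IEL^{-}$, AP-completeness yields that $\tau(A)$ holds in $(W,{\leq},R^{\sharp},V)$, so $A$ holds in $(W,{\leq},{R},V)$; as $(W,{\leq},{R})$ was arbitrary in ${\mathcal C}_{\doxastic}$, we get $A{\in}\Log({\mathcal C}_{\doxastic}){=}\L_{\doxastic}$.

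For $(2){\Rightarrow}(3)$ I perform the dual unfolding. Starting from an $\IEL^{-}$-structure $(W,{\leq},{R})$ with valuation $V$, set $R^{\flat}(\gamma){=}R$ for every group $\gamma$. Condition (i) gives $R{\subseteq}{\leq}$, so $(W,{\leq},R^{\flat})$ is doxastic, and standardness is trivial since $R^{\flat}(\gamma){\cap}R^{\flat}(\delta){=}R{=}R^{\flat}(\gamma{\cup}\delta)$. The analogous induction shows $s{\models}\tau(B)$ in $(W,{\leq},{R},V)$ iff $s{\models}B$ in $(W,{\leq},R^{\flat},V)$; the modal case now exploits AP's condition (ii), i.e. ${\leq}{\circ}R{=}R$, which collapses our clause for ${\lbrack\alpha_{0}\rbrack}$ into AP's clause for $\square$. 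Hence, if $A{\in}\L_{\doxastic}^{\Distrib}{=}\Log({\mathcal C}_{\doxastic}^{\standard})$, then $\tau(A)$ is valid on every $\IEL^{-}$-structure, and AP-completeness gives $\tau(A){\in}\IEL^{-}$.

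The only delicate point in either induction is the modal clause, and in each direction it is handled purely by the defining equation for $R^{\sharp}$, respectively by AP's condition (ii); no genuinely new argument is needed beyond the completeness results already proved. The plan thus requires no syntactic induction on proofs and no reworking of the canonical model construction.
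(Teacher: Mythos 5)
Your proposal is correct and follows essentially the same route as the paper: both directions are handled by exactly the two frame transformations the paper uses ($R'=\leq\circ R(\alpha_0)$ to turn a doxastic countermodel into an $\IEL^{-}$-structure, and the constant assignment $R'(\gamma)=R$ to turn an $\IEL^{-}$-structure into a standard doxastic frame), combined with the completeness results of Propositions~\ref{proposition:soundness:completeness:sans:D} and~\ref{soundness:completeness:distributed} and Artemov--Protopopescu's theorem; the paper merely phrases the argument contrapositively via countermodels and runs the translation induction over $\Sf(A)$ rather than all of $\Fo^{-}$.
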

\begin{proof}
Let $A$ be a diamond-free $\ArtemovProtopopescu$-formula.
Since $\L_{\doxastic}{\subseteq}\L_{\doxastic}^{\Distrib}$, then it suffices to prove that $\mathbf{(2){\Rightarrow}(3)}$ and $\mathbf{(3){\Rightarrow}(1)}$.
\\
$\mathbf{(2){\Rightarrow}(3)}:$
Suppose $A{\in}\L_{\doxastic}^{\Distrib}$.
For the sake of the contradiction, suppose $\tau(A){\not\in}\IEL^{-}$.
Hence, there exists an $\IEL^{-}$-structure $(W,{\leq},{R})$, there exists a valuation $V\ :\ \At{\longrightarrow}
$\linebreak$
\wp(W)$ on $(W,{\leq},{R})$ and there exists $s{\in}W$ such that $(W,{\leq},{R},V),s{\not\models}\tau(A)$.
Let $W^{\prime}{=}
$\linebreak$
W$.
Let ${\leq^{\prime}}$ be the preorder on $W^{\prime}$ such that for all $t,u{\in}W^{\prime}$, $t{\leq^{\prime}}u$ if and only if $t{\leq}u$.
Let ${R^{\prime}}\ :\ {\wp^{\star}}(\Ag){\longrightarrow}\wp(W^{\prime}{\times}W^{\prime})$ be the function such that for all groups $\alpha$ and for all $t,u{\in}W^{\prime}$, $t{R^{\prime}(\alpha)}u$ if and only if $t{R}u$.
\begin{claim}
The frame $(W^{\prime},{\leq^{\prime}},{R^{\prime}})$ is doxastic and standard.
\end{claim}
\begin{proofclaim}
Left to the reader
\medskip
\end{proofclaim}
\\
Let $V^{\prime}\ :\ \At{\longrightarrow}\wp(W^{\prime})$ be the valuation on $(W^{\prime},{\leq^{\prime}},{R^{\prime}})$ such that for all atoms $p$, $V^{\prime}{p}{=}V(p)$.
\begin{claim}
For all $B{\in}\Sf(A)$ and for all $t{\in}W$, the following conditions are equivalent:
\begin{enumerate}
\item $(W,{\leq},{R},V),t{\models}\tau(B)$,
\item $(W^{\prime},{\leq^{\prime}},{R^{\prime}},V^{\prime}),t{\models}B$.
\end{enumerate}
\end{claim}
\begin{proofclaim}
By induction on $B$.
\medskip
\end{proofclaim}
\\
Since $(W,{\leq},{R},V),s{\not\models}\tau(A)$, then $(W^{\prime},{\leq^{\prime}},{R^{\prime}},V^{\prime}),s{\not\models}A$.
Thus, $(W^{\prime},{\leq^{\prime}},{R^{\prime}},V^{\prime}){\not\models}A$.
Consequently, $(W^{\prime},{\leq^{\prime}},{R^{\prime}}){\not\models}A$.
Since the frame $(W^{\prime},{\leq^{\prime}},{R^{\prime}})$ is doxastic and standard, then by Proposition~\ref{soundness:completeness:distributed}, $A{\not\in}\L_{\doxastic}^{\Distrib}$: a contradiction.
\\
$\mathbf{(3){\Rightarrow}(1)}:$
Suppose $\tau(A){\in}\IEL^{-}$.
For the sake of the contradiction, suppose $A{\not\in}\L_{\doxastic}$.
Hence, there exists a doxastic frame $(W,{\leq},{R})$, there exists a valuation $V\ :\ \At{\longrightarrow}
$\linebreak$
\wp(W)$ on $(W,{\leq},{R})$ and there exists $s{\in}W$ such that $(W,{\leq},{R},V),s{\not\models}A$.
Let $\alpha$ be a group such that for all groups $\beta$, if the modal operator $\lbrack\beta\rbrack$ occur in $A$ then $\alpha{=}\beta$.\footnote{Such a group exists, seeing that $A$ is diamond-free.}
Let $W^{\prime}{=}W$.
Let $\leq^{\prime}$ be the preorder on $W^{\prime}$ such that for all $t,u{\in}W^{\prime}$, $t{\leq^{\prime}}u$ if and only if $t{\leq}u$.
Let $R^{\prime}$ be the binary relation on $W^{\prime}$ such that for all $t,u{\in}W^{\prime}$, $t{R^{\prime}}u$ if and only if $t{\leq}{\circ}{R(\alpha)}u$.
\begin{claim}
The relational structure $(W^{\prime},{\leq^{\prime}},{R^{\prime}})$ is an $\IEL^{-}$-structure.
\end{claim}
\begin{proofclaim}
Left to the reader
\medskip
\end{proofclaim}
\\
Let $V^{\prime}\ :\ \At{\longrightarrow}\wp(W^{\prime})$ be the valuation on $(W^{\prime},{\leq^{\prime}},{R^{\prime}})$ such that for all atoms $p$, $V^{\prime}{p}{=}V(p)$.
\begin{claim}
For all $B{\in}\Sf(A)$ and for all $t{\in}W$, the following conditions are equivalent:
\begin{enumerate}
\item $(W,{\leq},{R},V),t{\models}B$,
\item $(W^{\prime},{\leq^{\prime}},{R^{\prime}},V^{\prime}),t{\models}\tau(B)$.
\end{enumerate}
\end{claim}
\begin{proofclaim}
By induction on $B$.
\medskip
\end{proofclaim}
\\
Since $(W,{\leq},{R},V),s{\not\models}A$, then $(W^{\prime},{\leq^{\prime}},{R^{\prime}},V^{\prime}),s{\not\models}\tau(A)$.
Thus, $(W^{\prime},{\leq^{\prime}},{R^{\prime}},V^{\prime}){\not\models}
$\linebreak$
\tau(A)$.
Consequently, $(W^{\prime},{\leq^{\prime}},{R^{\prime}}){\not\models}\tau(A)$.
Since the relational structure $(W^{\prime},{\leq^{\prime}},{R^{\prime}})$ is an $\IEL^{-}$-structure, then by~\cite[Theorem~$4.6$]{Artemov:Protopopescu:2016}, $\tau(A){\not\in}\IEL^{-}$: a contradiction.
\medskip
\end{proof}
In Proposition~\ref{proposition:IEL:CE}, we prove that $\L_{\epistemic}$ and $\L_{\epistemic}^{\Distrib}$ can be considered as conservative extensions of the intuitionistic epistemic logic $\IEL$ introduced by Artemov and Protopopescu~\cite{Artemov:Protopopescu:2016}.
\begin{proposition}\label{proposition:IEL:CE}
For all diamond-free $\ArtemovProtopopescu$-formulas $A$, the following conditions are equivalent:
\begin{enumerate}
\item $A{\in}\L_{\epistemic}$,
\item $A{\in}\L_{\epistemic}^{\Distrib}$,
\item $\tau(A){\in}\IEL$.
\end{enumerate}
\end{proposition}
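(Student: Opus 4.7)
The plan is to mimic the proof of Proposition~\ref{proposition:IELmoins:CE} almost verbatim, replacing ``doxastic'' by ``epistemic'' throughout and replacing $\IEL^{-}$-structures by $\IEL$-structures. Since $\L_{\epistemic}{\subseteq}\L_{\epistemic}^{\Distrib}$, it will suffice to establish $\mathbf{(2){\Rightarrow}(3)}$ and $\mathbf{(3){\Rightarrow}(1)}$.

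For $\mathbf{(2){\Rightarrow}(3)}$, I would assume $\tau(A){\not\in}\IEL$ and, using the completeness of $\IEL$ with respect to $\IEL$-structures proved by Artemov and Protopopescu~\cite{Artemov:Protopopescu:2016}, extract an $\IEL$-structure $(W,{\leq},R)$, a valuation $V$, and a state $s$ such that $s{\not\models}\tau(A)$. I would then construct $(W^{\prime},{\leq^{\prime}},R^{\prime})$ exactly as in the proof of Proposition~\ref{proposition:IELmoins:CE}: $W^{\prime}{=}W$, ${\leq^{\prime}}{=}{\leq}$, and $R^{\prime}(\alpha){=}R$ for every group $\alpha$. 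The verification that this frame is standard and doxastic is identical to the previous case; the only additional point is that the epistemic clause (iii) of $\IEL$-structures~---~for every $s$ there exists $t$ with $s R t$~---~immediately yields, for every group $\alpha$, the required $s{\leq}{\circ}R^{\prime}(\alpha)t$, so $(W^{\prime},{\leq^{\prime}},R^{\prime})$ is epistemic. A routine induction on $B{\in}\Sf(A)$ shows $(W,{\leq},R,V),t{\models}\tau(B)$ iff $(W^{\prime},{\leq^{\prime}},R^{\prime},V^{\prime}),t{\models}B$, contradicting $A{\in}\L_{\epistemic}^{\Distrib}$ via Proposition~\ref{soundness:completeness:distributed}.

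For $\mathbf{(3){\Rightarrow}(1)}$, I would assume $A{\not\in}\L_{\epistemic}$ and, by Proposition~\ref{proposition:soundness:completeness:sans:D}, extract an epistemic frame $(W,{\leq},R)$ together with $V$ and $s{\in}W$ such that $s{\not\models}A$. Since $A$ is diamond-free, let $\alpha$ be the unique group whose box occurs in $A$. Set $W^{\prime}{=}W$, ${\leq^{\prime}}{=}{\leq}$ and $R^{\prime}{=}{\leq}{\circ}R(\alpha)$. Conditions (i) and (ii) of an $\IEL^{-}$-structure are verified exactly as in the previous proof (using the doxastic property and the transitivity of $\leq$), and condition (iii) follows directly from the epistemic clause of $(W,{\leq},R)$: given $s$, take $t$ with $s{\leq}{\circ}R(\alpha)t$, i.e.\ $sR^{\prime}t$. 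Hence $(W^{\prime},{\leq^{\prime}},R^{\prime})$ is an $\IEL$-structure. A second routine induction on $B{\in}\Sf(A)$ shows that $(W,{\leq},R,V),t{\models}B$ iff $(W^{\prime},{\leq^{\prime}},R^{\prime},V^{\prime}),t{\models}\tau(B)$, so $\tau(A){\not\in}\IEL$ by the analogue of~\cite[Theorem~$4.6$]{Artemov:Protopopescu:2016} for $\IEL$, giving the desired contradiction.

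No single step is a real obstacle, since the whole argument is structurally identical to the proof of Proposition~\ref{proposition:IELmoins:CE}; the only thing to check carefully is that epistemicity survives both translations. In one direction epistemicity is transported from the single relation $R$ to every $R^{\prime}(\alpha)$ by the trivial choice $R^{\prime}(\alpha){=}R$, and in the other direction epistemicity of $R(\alpha)$ is exactly what provides clause (iii) of the target $\IEL$-structure. The inductive satisfiability equivalences in both directions are the straightforward adaptation of those in the proof of Proposition~\ref{proposition:IELmoins:CE}, the modal case being handled by the identity $R^{\prime}{=}{\leq}{\circ}R(\alpha)$ and by the heredity property of Proposition~\ref{proposition:heredite}.
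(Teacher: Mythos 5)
Your proposal is correct and follows essentially the same route as the paper: the paper's own proof of Proposition~\ref{proposition:IEL:CE} likewise reduces both implications to the argument of Proposition~\ref{proposition:IELmoins:CE}, with the only extra work being the observation (relegated to footnotes there) that condition~(iii) of $\IEL$-structures and the epistemic frame condition transport into one another under the two constructions, exactly as you verify. The only cosmetic caveat is that when no box occurs in $A$ there is no ``unique'' group $\alpha$, so one should say, as the paper does, ``a group $\alpha$ such that every $\lbrack\beta\rbrack$ occurring in $A$ satisfies $\alpha{=}\beta$''.
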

\begin{proof}
Let $A$ be a diamond-free $\ArtemovProtopopescu$-formula.
Since $\L_{\epistemic}{\subseteq}\L_{\epistemic}^{\Distrib}$, then it suffices to prove that $\mathbf{(2){\Rightarrow}(3)}$ and $\mathbf{(3){\Rightarrow}(1)}$.
\\
$\mathbf{(2){\Rightarrow}(3)}:$
Suppose $A{\in}\L_{\epistemic}^{\Distrib}$.
For the sake of the contradiction, suppose $\tau(A){\not\in}\IEL$.
Hence, there exists an $\IEL$-structure $(W,{\leq},{R})$, there exists a valuation $V\ :\ \At{\longrightarrow}
$\linebreak$
\wp(W)$ on $(W,{\leq},{R})$ and there exists $s{\in}W$ such that $(W,{\leq},{R},V),s{\not\models}\tau(A)$.
The rest of the proof is done by imitating the argument developed in the ``$\mathbf{(2){\Rightarrow}(3)}$'' part of the proof of Proposition~\ref{proposition:IELmoins:CE}.\footnote{The reader may easily verify that since the relational structure $(W,{\leq},{R})$ is an $\IEL$-structure, then the frame $(W^{\prime},{\leq^{\prime}},{R^{\prime}})$ constructed in this argument is epistemic and standard.}
\\
$\mathbf{(3){\Rightarrow}(1)}:$
Suppose $\tau(A){\in}\IEL$.
For the sake of the contradiction, suppose $A{\not\in}\L_{\epistemic}$.
Hence, there exists an epistemic frame $(W,{\leq},{R})$, there exists a valuation $V\ :\ \At{\longrightarrow}
$\linebreak$
\wp(W)$ on $(W,{\leq},{R})$ and there exists $s{\in}W$ such that $(W,{\leq},{R},V),s{\not\models}A$.
The rest of the proof is done by imitating the argument developed in the ``$\mathbf{(3){\Rightarrow}(1)}$'' part of the proof of Proposition~\ref{proposition:IELmoins:CE}.\footnote{The reader may easily verify that since the frame $(W,{\leq},{R})$ is epistemic, then the relational structure $(W^{\prime},{\leq^{\prime}},{R^{\prime}})$ constructed in this argument is an $\IEL$-structure.}
\medskip
\end{proof}
\section{Conclusion}
There is a growing interest in the development of intuitionistic modal logics~\cite{Olivetti:2022,IMLA:2017,Stewart:et:al:2018}.
And some of them have their place in the family of logics for reasoning about knowledge.
In this article, we have shown that they have also their place in the family of logics for multi-agent systems.
Much remains to be done.
\\
\\
Firstly, one may evaluate the computability of $\L_{\all}$, $\L_{\doxastic}$, $\L_{\epistemic}$, $\L_{\partition}$, $\L_{\all}^{\Distrib}$, $\L_{\doxastic}^{\Distrib}$, $\L_{\epistemic}^{\Distrib}$ and $\L_{\partition}^{\Distrib}$.
We conjecture that the membership problems in these intuitionistic modal logics are decidable.
In order to fix this conjecture, one may use, for example, a technique based on the two-variable monadic guarded fragment, or selective filtration, or terminating sequent calculi~\cite{Alechina:Shkatov:2006,Dalmonte:et:al:2021,Grefe:1996,Iemhoff:2018,Lin:Ma:2019}.\footnote{The reader should be aware that it is by no means easy to determine whether such-or-such intuitionistic modal logic is decidable, witness the fact that the decidability of $\IS4$~---~the intuitionistic modal logic obtained by adding to the intuitionistic modal logics $\IK$ introduced by Fischer Servi~\cite{FischerServi:1984} the formulas ${\square}p{\rightarrow}p$, $p{\rightarrow}{\lozenge}p$, ${\square}p{\rightarrow}{\square}{\square}p$ and ${\lozenge}{\lozenge}p{\rightarrow}{\lozenge}p$~---~has only been proved recently~\cite{Girlando:et:al:2023}.}
\\
\\
Secondly, one may determine with respect to which classes of frames are sound and complete the least intuitionistic modal logic containing $\L_{\partition}$ and axioms
\begin{description}
\item[$(\Axiom14)$] ${\lbrack\alpha\rbrack}p{\rightarrow}{\lbrack\alpha\rbrack}{\lbrack\alpha\rbrack}p$,
\item[$(\Axiom15)$] ${\langle\alpha\rangle}{\langle\alpha\rangle}p{\rightarrow}{\langle\alpha\rangle}p$
\end{description}
and the least intuitionistic modal logic containing $\L_{\partition}^{\Distrib}$ and axioms~$(\Axiom14)$ and~$(\Axiom15)$.\footnote{Here, the reader should remind that as far as classical modal validity is concerned, the mono-agent versions of these formulas~---~${\square}p{\rightarrow}{\square}{\square}p$ and ${\lozenge}{\lozenge}p{\rightarrow}{\lozenge}p$, characterizing the property of positive introspection of knowledge~---~correspond to the elementary condition of transitivity.
See~\cite[Chapter~$4$]{Blackburn:et:al:2001} and~\cite[Chapter~$3$]{Chagrov:Zakharyaschev:1997}.}
We conjecture that the former intuitionistic modal logic is sound and complete with respect to the class of all partitions $(W,{\leq},{R})$ such that for all groups $\alpha$, ${R(\alpha)}{\circ}{\leq}{\circ}{{R(\alpha)}}{\subseteq}
$\linebreak$
{\leq}{\circ}{{R(\alpha)}}{\circ}{\leq}$ and ${R(\alpha)}{\circ}{\geq}{\circ}{{R(\alpha)}}{\subseteq}{\geq}{\circ}{{R(\alpha)}}{\circ}{\geq}$ and the latter intuitionistic modal logic is sound and complete with respect to the class of all standard partitions $(W,{\leq},{R})$ such that for all groups $\alpha$, ${R(\alpha)}{\circ}{\leq}{\circ}{{R(\alpha)}}{\subseteq}{\leq}{\circ}{{R(\alpha)}}{\circ}{\leq}$ and ${R(\alpha)}{\circ}{\geq}{\circ}{{R(\alpha)}}{\subseteq}{\geq}{\circ}{{R(\alpha)}}{\circ}{\geq}$.
\\
\\
Thirdly, one may determine with respect to which classes of frames are sound and complete the least intuitionistic modal logic containing $\L_{\partition}$ and axioms
\begin{description}
\item[$(\Axiom16)$] ${\langle\alpha\rangle}p{\rightarrow}{\lbrack\alpha\rbrack}{\langle\alpha\rangle}p$,
\item[$(\Axiom17)$] ${\langle\alpha\rangle}{\lbrack\alpha\rbrack}p{\rightarrow}{\lbrack\alpha\rbrack}p$
\end{description}
and the least intuitionistic modal logic containing $\L_{\partition}^{\Distrib}$ and axioms~$(\Axiom16)$ and~$(\Axiom17)$.\footnote{Here, the reader should remind that as far as classical modal validity is concerned, the mono-agent versions of these formulas~---~${\lozenge}p{\rightarrow}{\square}{\lozenge}p$ and ${\lozenge}{\square}p{\rightarrow}{\square}p$, characterizing the property of negative introspection of knowledge~---~correspond to the elementary condition of Euclideanity.
See~\cite[Chapter~$4$]{Blackburn:et:al:2001} and~\cite[Chapter~$3$]{Chagrov:Zakharyaschev:1997}.}
We conjecture that the former intuitionistic modal logic is sound and complete with respect to the class of all partitions $(W,{\leq},{R})$ such that for all groups $\alpha$, ${R(\alpha)^{-1}}{\circ}{\leq}{\circ}{{R(\alpha)}}
$\linebreak$
{\subseteq}{\leq}{\circ}{{R(\alpha)^{-1}}}{\circ}{\leq}$ and ${R(\alpha)^{-1}}{\circ}{\geq}{\circ}{{R(\alpha)}}{\subseteq}{\geq}{\circ}{{R(\alpha)^{-1}}}{\circ}{\geq}$ and the latter intuitionistic modal logic is sound and complete with respect to the class of all standard partitions $(W,{\leq},{R})$ such that for all groups $\alpha$, ${R(\alpha)^{-1}}{\circ}{\leq}{\circ}{{R(\alpha)}}{\subseteq}{\leq}{\circ}{{R(\alpha)^{-1}}}{\circ}{\leq}$ and ${R(\alpha)^{-1}}{\circ}
$\linebreak$
{\geq}{\circ}{{R(\alpha)}}{\subseteq}{\geq}{\circ}{{R(\alpha)^{-1}}}{\circ}{\geq}$.
\\
\\
Fourthly, one may include in our language the notion of common knowledge as done within the context of another intuitionistic modal logic by J\"ager and Marti~\cite{Jager:Marti:2016}.
This means for all groups $\alpha$, to include in our language the modal operators $\lbrack\alpha^{\star}\rbrack$ and $\langle\alpha^{\star}\rangle$, ${\lbrack\alpha^{\star}\rbrack}A$ and ${\langle\alpha^{\star}\rangle}A$ being respectively read for all formulas $A$, ``$A$ is consequence of $\alpha$'s common knowledge'' and ``$A$ is compatible with $\alpha$'s common knowledge''.
%
%
\\
\\
Fifthly, one may include in our language the notion of public announcement as done within the context of other intuitionistic modal logics by Ma {\em et al.}~\cite{Ma:Palmigiano:Sadrzadeh:2014} and Murai and Sano~\cite{Murai:Sano:2022}.
This means for all formulas $A$, to include in our language the modal operators $\lbrack A\rbrack$ and $\langle A\rangle$, ${\lbrack A\rbrack}B$ and ${\langle A\rangle}B$ being respectively read for all formulas $B$, ``$B$ is consequence of the public announcement of $A$'' and ``$A$ is compatible with the public announcement of $A$''.
%
%
%
%
\section*{Acknowledgements}
We wish to thank Han Gao (Aix-Marseille University), \c{C}i\u{g}dem Gencer (Toulouse University), Zhe Lin (Xiamen University), Nicola Olivetti (Aix-Marseille University) and Vladimir Sotirov (Bulgarian Academy of Sciences) for their valuable remarks.
%
%
%
%
\bibliographystyle{named}
\end{document}